\documentclass[11pt,reqno,letterpaper]{amsart}

\usepackage[T1]{fontenc}
\usepackage{lmodern}
\usepackage{amsmath,amssymb,mathtools}
\usepackage[margin=1in]{geometry}
\usepackage{microtype}
\usepackage{booktabs}
\usepackage{enumitem}
\usepackage{xcolor}
\usepackage{graphicx}
\usepackage{aliascnt}
\usepackage[boxruled,linesnumbered,noend]{algorithm2e}
\DontPrintSemicolon
\SetKwInput{KwIn}{Input}
\SetKwInput{KwOut}{Output}
\SetAlCapNameFnt{\normalfont}
\SetAlCapFnt{\bfseries}
\SetAlgoNlRelativeSize{-1}
\IncMargin{0.75em}
\SetCommentSty{textnormal}
\SetKwComment{tcp}{\(\triangleright\) }{}
\SetAlgoInsideSkip{smallskip}
\SetAlgoSkip{medskip}
\usepackage[colorlinks=true,linkcolor=blue!55!black,
  citecolor=blue!55!black,urlcolor=blue!55!black]{hyperref}
\usepackage[nameinlink,noabbrev,capitalise]{cleveref}
\crefname{algocf}{Algorithm}{Algorithms}
\Crefname{algocf}{Algorithm}{Algorithms}

\usepackage{tikz}
\usetikzlibrary{arrows.meta,fit}

\newtheorem{theorem}{Theorem}[section]
\newaliascnt{lemma}{theorem}
\newtheorem{lemma}[lemma]{Lemma}
\aliascntresetthe{lemma}

\newaliascnt{proposition}{theorem}
\newtheorem{proposition}[proposition]{Proposition}
\aliascntresetthe{proposition}

\newaliascnt{corollary}{theorem}

\aliascntresetthe{corollary}

\theoremstyle{remark}
\newaliascnt{remark}{theorem}
\newtheorem{remark}[remark]{Remark}
\aliascntresetthe{remark}

\newcommand{\eps}{\varepsilon}
\newcommand{\TV}{\mathrm{TV}}
\newcommand{\vbl}{\operatorname{vbl}}
\newcommand{\Good}{\operatorname{Good}}
\newcommand{\Law}{\operatorname{Law}}
\newcommand{\dist}{\operatorname{dist}}
\newcommand{\norm}[1]{\left\lVert#1\right\rVert}

\newcommand{\Count}{\ensuremath{\mathsf{Count}}}
\newcommand{\Sample}{\ensuremath{\mathsf{Sample}}}
\newcommand{\Insert}{\ensuremath{\mathsf{Insert}}}
\newcommand{\EstimateMarginal}{\ensuremath{\mathsf{EstimateMarginal}}}
\numberwithin{equation}{section}

\setlist[itemize]{leftmargin=1.5em}
\setlist[enumerate]{leftmargin=1.7em}
\allowdisplaybreaks[1]
\hypersetup{pdftitle={Approximate Counting and Sampling under an Asymmetric Lovasz Local Lemma}}

\title{A general counting and sampling  Lov\'asz local lemma}
\author[V. Jain]{Vishesh Jain}
\address{Department of Mathematics, Statistics, and Computer Science, University of Illinois Chicago, Chicago, IL, 60607 USA}
\email{visheshj@uic.edu}

\author[C. Mizgerd]{Clayton Mizgerd}
\address{Department of Mathematics, Statistics, and Computer Science, University of Illinois Chicago, Chicago, IL, 60607 USA}
\email{cmizge2@uic.edu}

\author[H.T. Pham]{Huy Tuan Pham}
\address{Department of Mathematics, University of Chicago, Chicago, IL 60637 USA}
\email{htpham@uchicago.edu}
\date{}

\begin{document}
\begin{abstract}
Consider a constraint satisfaction problem $\mathcal C$ on finitely many independent random variables with dependency graph $G$. Let $p_a$ be the violation probability of a constraint $a\in \mathcal C$ and $N_G^2 (a)$ the set of constraints at distance one or two from $a$ in $G$. Suppose that, there exists $x\in (0,1)^{\mathcal C}$ such that, for a sufficiently small universal constant $c > 0$, and for all $a \in \mathcal C$,
\[
p_a \leq c \cdot x_a \prod_{b\in N_G^2(a)}(1-x_b).
\]
Under the above analog of the asymmetric Lov\'asz Local Lemma, we give an FPRAS for the probability that all constraints are satisfied, and an approximate sampler, running in polynomial expected time, for the product distribution conditioned on this event. The degree of the polynomial in the running time is independent of the domain sizes, constraint sizes, or degree of the dependency graph. Up to the choice of the constant $c$, our condition on $p_a$ matches known hardness results. 

Our work builds on the method of Liu, Wang, Yin, Zhang, and Zhou, who obtained an FPRAS for the probability of satisfaction in the setting of the symmetric Lov\'asz Local Lemma. Our sampling result is new even in this special case.  
\end{abstract}
\maketitle

\section{Introduction}\label{sec:introduction}

The Lov\'asz Local Lemma (LLL) is a fundamental tool in probabilistic combinatorics which gives a sufficient condition for avoiding
a family of bad events in terms of their probabilities and dependencies. We work in the variable setting. Let $(X_v)_{v\in V}$ be a finite family
of independent random variables, where $X_v$ has law $\nu_v$ with finite
support $\Omega_v$, and write
\[
\Omega=\prod_{v\in V}\Omega_v,
\qquad
\nu=\bigotimes_{v\in V}\nu_v
\]
for the product space and product distribution, respectively. 
For $U\subseteq V$, write $\Omega_U=\prod_{v\in U}\Omega_v$ and
$\nu_U=\bigotimes_{v\in U}\nu_v$. Let $\mathcal C$ be a finite family of constraints on these variables.
Each constraint $a\in\mathcal C$ is given by a set of variables
$\vbl(a)\subseteq V$ together with a collection of forbidden
assignments to those variables. Let $B_a$ denote the ``bad'' event that constraint $a$ is violated, and write
$p_a=\nu(B_a)$ for its probability. The dependency graph $G$ has vertex set $\mathcal C$, with two distinct
constraints adjacent if they share a variable. Write $N_G(a)$ for the
neighbors of $a$ in $G$.

The (asymmetric) LLL \cite{erdosLovasz1975} guarantees that the probability
of satisfying all constraints is positive whenever there is
$x\in(0,1)^{\mathcal C}$ such that
\[
p_a\leq x_a\prod_{b\in N_G(a)}(1-x_b)
\qquad(a\in\mathcal C).
\] The original proof of the LLL is nonconstructive. In a landmark work, Moser and Tardos
\cite{moserTardos2010} gave an efficient algorithm for finding a
satisfying assignment in the variable setting by repeatedly resampling
the variables of a violated constraint.

In this paper, we are interested in efficient algorithms for
approximating the probability that all constraints are satisfied
and sampling from the product distribution conditioned on this event. When the coordinate distributions are uniform, these are the
problems of approximately counting satisfying assignments and sampling
an almost-uniform satisfying assignment. Both of these problems have been studied extensively, primarily for uniform product distributions in the symmetric setting. Starting from the work of Moitra~\cite{moitra2019}, and following a long line of works by multiple researchers~(see~\cite{liuWangYinZhangZhou2026} and the references therein), Liu, Wang, Yin, Zhang, and Zhou (LWYZZ)
\cite{liuWangYinZhangZhou2026} recently gave an FPRAS for the probability
that all constraints are satisfied when the coordinate distributions
are uniform and
\[
4\mathrm e\,p(\Delta+1)^2\leq1.
\]
Here $p$ is an upper bound on the violation probabilities, and
$\Delta\geq1$ is an upper bound on the maximum degree of $G$. Up to a universal constant, this condition matches the hardness result
of Bez\'akov\'a et al.~\cite{bezakovaEtAl2019}, assuming
$\mathrm{NP}\ne\mathrm{RP}$.

On the other hand, LWYZZ leave open whether efficient approximate sampling is also possible
under a condition of the form $p\Delta^2\leq c$. For general constraints in this setting, the previous best result is due to He, Wang, and Yin
\cite{heWangYin2022}, who gave an approximate sampler
under the condition
\[
kq^2p\Delta^5\leq c.
\]
Here, each constraint involves at most $k$ variables, each variable has
at most $q$ possible values, and $c>0$ is a sufficiently small universal
constant. For atomic constraints (these are constraints forbidding a single assignment),
Wang and Yin \cite{wangYin2024} obtained an approximate sampler under
the condition
\[
p\Delta^{2+o_{q_{\min}}(1)}\lesssim1,
\]
where $q_{\min}$ is the minimum domain size. 

\subsection{Our results}

We prove a counting and sampling analogue of the general asymmetric Lov\'asz
Local Lemma in the variable setting. In order to state our result, we need some notation. Let $G^2$ be the square of the graph $G$, in which two vertices are adjacent
if their distance in $G$ is one or two, and write $N_G^2(a)$ for the
neighbors of $a$ in $G^2$. For $\mathcal D\subseteq\mathcal C$, let $\Good(\mathcal D)$ denote
the set of assignments satisfying every constraint in $\mathcal D$,
and write
\[
Z_{\mathcal D}=\nu(\Good(\mathcal D)).
\]
When $Z_{\mathcal D}>0$, let
$\mu_{\mathcal D}$ denote $\nu$ conditioned on $\Good(\mathcal D)$.

\begin{theorem}\label{thm:main}
There exists a universal constant $c>0$ such that the following holds.
Suppose that
\begin{equation}\label{eq:full-square}
p_a\leq c x_a\prod_{b\in N_G^2(a)}(1-x_b)
\qquad(a\in\mathcal C)
\end{equation}
for some $x\in(0,1)^{\mathcal C}$. Then $Z_{\mathcal C}>0$, and there
is an FPRAS for $Z_{\mathcal C}$. Moreover, for every $0<\eps\leq1$,
there is a randomized algorithm, running in polynomial expected time,
which returns $\sigma\in\Good(\mathcal C)$ with
\[
\norm{\Law(\sigma)-\mu_{\mathcal C}}_{\TV}\leq\eps.
\]
\end{theorem}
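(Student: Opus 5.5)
The plan is to follow the LWYZZ framework of reducing counting to marginal estimation via self-reducibility over constraints, and to derive sampling from counting through an insertion scheme. Order the constraints $a_1,\dots,a_m$ and write
\[
Z_{\mathcal C}=\prod_{i=1}^m \frac{Z_{\mathcal C_i}}{Z_{\mathcal C_{i-1}}},\qquad \mathcal C_i=\{a_1,\dots,a_i\},\qquad \frac{Z_{\mathcal C_i}}{Z_{\mathcal C_{i-1}}}=1-\mu_{\mathcal C_{i-1}}(B_{a_i}).
\]
So it suffices to estimate $\mu_{\mathcal D}(B_a)$ to relative accuracy $O(\eps/m)$ for every $\mathcal D\subseteq\mathcal C$ that also satisfies \eqref{eq:full-square}. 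Subfamilies inherit the condition, since deleting constraints only removes factors $(1-x_b)\le1$.

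The first step is a local LLL-type bound. Using the standard asymmetric LLL argument with the slack $c$, we show
\[
\mu_{\mathcal D}(B_a)\le \frac{p_a}{\prod_{b\in N_G(a)}(1-x_b)}\le c\,x_a.
\]
The same bound holds for conditional laws given partial assignments that are ``frozen-free'' in a suitable sense. Hence each ratio lies in $[1-c x_a,1]$, and $\prod_a(1-x_a)$ gives a polynomially computable lower bound for $Z_{\mathcal C}$. This also yields $Z_{\mathcal C}>0$.

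The second step, EstimateMarginal, is the core. We run the LWYZZ coupling-from-the-past / deterministic-recursion scheme, which expands $\mu_{\mathcal D}(B_a)$ by revealing the variables of $a$ one at a time and resolving dependencies through a branching exploration of the dependency graph. The symmetric proof bounds the expected size of the explored ``bad component'' using $p\Delta^2$ small. Here we instead weight a $2$-tree (a connected-in-$G^2$ set of constraints that is independent in $G$) $T$ by $\prod_{b\in T}c\,x_b/(1-x_b)$ times the appropriate $\prod(1-x)$ normalizers. The asymmetric condition over $N^2_G$ is exactly what makes the Galton–Watson-type branching sum over $2$-trees rooted at $a$ converge, via the usual Kolipaka–Szegedy/Shearer-style enumeration. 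This gives an expected running time of $\mathrm{poly}$ in $1/x_a$ and in $\log(m/\eps)$, with no dependence on $q$, $k$, or $\Delta$ in the exponent. Assembling the product with a Monte Carlo estimator of each factor, with variance controlled by the lower bound $1-cx_a$, gives the FPRAS. Polynomiality in the input needs $\sum_a x_a$ and $\max 1/x_a$ to be polynomially controlled, and we arrange this by truncating $x$ where necessary.

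For sampling, we add constraints one by one. We maintain a sample $\sigma_{i-1}\approx\mu_{\mathcal C_{i-1}}$. If $\sigma_{i-1}$ satisfies $a_i$ we keep it; otherwise we resample only a local region around $a_i$ using a partial rejection / conditional resampling step (Insert), whose correctness is justified by the marginal oracle and whose cost is bounded by the same $2$-tree expansion. Errors accumulate additively in TV over the $m$ insertions, so we run each step at accuracy $\eps/m$.

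The main obstacle is the branching/percolation estimate in the asymmetric setting, namely showing that the expected exploration in EstimateMarginal is bounded using only \eqref{eq:full-square}. The approach I would try first is to replace the counting of $2$-trees of bounded degree by a weighted recursion. We would prove by induction on the tree depth that the total weight of explorations rooted at $a$ is at most $x_a/(1-x_a)$ up to constants, with the factor $\prod_{N^2_G(a)}(1-x_b)$ absorbing the children at distance two.
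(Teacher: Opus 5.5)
\paragraph{The gap: the sampler.} Your plan leaves the sampling half of the theorem, its main new content, without a working mechanism; the paper notes that after LWYZZ, sampling under $p\Delta^2\le c$ was open even in the symmetric case.

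Since $\mu_{\mathcal D^-}$ conditioned on $B_a^c$ is already $\mu_{\mathcal D}$, your insertion step is correct only if its output on the event $B_a$ has law (close to) $\mu_{\mathcal D}$. Resampling a fixed region $R$ around $a$ with $\sigma|_{V\setminus R}$ frozen cannot do this. The frozen part keeps its law under $\mu_{\mathcal D^-}(\cdot\mid B_a)$, which differs from its law under $\mu_{\mathcal D}=\mu_{\mathcal D^-}(\cdot\mid B_a^c)$: violating $a$ biases its variables, and the bias propagates through the constraints of $\mathcal D^-$. Letting $R$ depend on the configuration, as in partial rejection sampling, requires controlling the growth of $R$, and nothing in your outline does so under \eqref{eq:full-square}.

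More fundamentally, sampling inside $R$ given its boundary is a pinned instance. Pinning variables can raise violation probabilities up to $1$, so \eqref{eq:full-square} is lost. Your estimator only produces ratios $Z_{\mathcal E}/Z_{\mathcal E\setminus\{b\}}$ for unpinned subfamilies, so ``justified by the marginal oracle'' has no content, and the usual counting-to-sampling reduction is unavailable.

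The missing idea is to realize the distributional identity of \cref{prop:law-recursion} for $\mu_{\mathcal D^-}-\mu_{\mathcal D}$ as a reversible chain on an enlarged state space. This space has input and output copies, tree states paired by the involution $\iota_\sigma$, and, for each label $e=(\mathcal T,z)$, a corridor of layers joined by recursive copies of smaller insertion chains. The chain is simulable without partition functions for two reasons:
\begin{itemize}
\item the local assignment $z$ on $U_{\mathcal T}$ is recorded, so resampling can be undone;
\item corridors are entered by picking uniformly from the random list $\mathbf L$ and left via the conditioned list $\mathbf L^e$.
\end{itemize}
Convergence from $\mu^-$ then comes from three ingredients: a signed flow with $\partial\varphi=\mu^--\mu^+$ and energy at most $(6/(1-c)+mx_a)/Z_{\mathcal D^-}$; the bound $\mathcal E_K(K^Ng,K^Ng)\le 1/(2N+1)$; and a slowdown factor $\beta$ at the output copy that makes the stationary law close to $\mu^+$.

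\paragraph{Counting: misdescribed tool, misplaced difficulty.} On the counting side your conclusion is right, but the mechanism is misdescribed and the main obstacle is misplaced.

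LWYZZ do not reveal the variables of $a$ one at a time or couple from the past; variable-level schemes of that kind need conditions such as $kq^2p\Delta^5\le c$. They use the exact $2$-tree identity \eqref{eq:scalar-recursion}. Its coefficients $Z_{\mathcal D_{\mathcal T}}/Z_{\mathcal D^-}$ must themselves be estimated recursively and without bias, via the reciprocal estimator of \cref{lem:roulette}.

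The variance is not controlled by the lower bound $1-cx_a$. It needs a joint induction with $\mathbb E\widehat R_{\mathcal E,b}^{\,2}\le e^{16cx_b}$, Cauchy--Schwarz over pairs $(\mathcal T,\mathcal T')$, and weights $\rho_t=p_t\prod_{b\in N_G^2[t]}e^{16cx_b}$. Condition \eqref{eq:full-square} still dominates these by $\frac43cx_t\prod_{b\in N_{\mathcal D}^2(t)}(1-x_b/2)$, so the Moser--Tardos branching bound applies to the connected unions $\mathcal T\cup\mathcal T'$. The expected number of recursive calls is then at most $1+mx_a$, so there is no dependence on $1/x_a$ and no truncation of $x$ is needed.

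This weighted branching estimate is the routine part of the proof; the real obstacle is the sampler.
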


\begin{remark} For our proof, $c = 1/100$ is sufficient; we have made no attempt to optimize this. We use the same basic computational assumptions as in the Moser--Tardos
algorithm \cite{moserTardos2010}. The domains, the sets $\vbl(a)$,
and their incidence lists are given explicitly. Write $I$ for the
encoded instance size. We assume access to
oracles for testing constraints and drawing independent samples
from the coordinate distributions, each with polynomial
cost in $I$. For every $0<\eps,\delta<1$, the counting algorithm estimates
$Z_{\mathcal C}$ to relative error $\eps$ with probability at least
$1-\delta$. Its worst-case running time is polynomial in $I$,
$1/\eps$, $\log(1/\delta)$. The sampler has expected
running time polynomial in $I$ and $1/\eps$. The degree of the polynomial is independent of the domain sizes, constraint sizes,
or degree of the dependency graph.
\end{remark}

\medskip

\paragraph{\bf Independent work.} After all the results in our work had been obtained and we were in the final stages of preparing the manuscript, we learned of independent and concurrent work of Achlioptas~\cite{achlioptas2026sampling}. The main result of~\cite{achlioptas2026sampling} is an approximate sampling and counting algorithm in the setting of \cref{thm:main}, which runs in polynomial time provided that $D$ (the maximum domain size of the variables), $k$ (the maximum arity of a constraint), and $\Delta$ (the maximum degree of the dependency graph) are all bounded; specifically, the running time is of the form $(|I|/\varepsilon)^{O(k\Delta \log D)}$. In this special setting (in fact, without assuming any bound on $k$ or $D$), one may use a much simpler version of the sampler in our main~\cref{thm:main} with running time  $(|I|/\varepsilon)^{O(\log \Delta)}$; in contrast, our main sampler runs in expected time $(|I|/\varepsilon)^{O(1)}$. Since the bounded-$\Delta$ setting serves as an instructive warm-up for our main construction, we discuss this in more detail in the proof overview below (see~\cref{sub:bounded-degree-warmup}).

\subsection{Proof overview}

We restrict ourselves to the symmetric setting in this overview since it already contains the key ideas.  Therefore, suppose
that every constraint has violation probability at most $p$, the
dependency graph has maximum degree $\Delta\geq1$, and
$p(\Delta+1)^2 \leq c$ for a sufficiently small universal constant $c > 0$. The sampler in the general asymmetric setting is similar to the one in the growing $\Delta$ case (see~\cref{sec:unbounded-degree}), with a similar analysis provided that one uses the branching-process arguments from Moser--Tardos \cite{moserTardos2010} in the appropriate places. 

Fix $a\in\mathcal D\subseteq\mathcal C$, and write
$\mathcal D^-=\mathcal D\setminus\{a\}$. Both counting and
sampling can be reduced to adding one constraint at a time; this approach was introduced in~\cite{wangYin2024}. For counting, as in~\cite{liuWangYinZhangZhou2026}, we estimate the ratios
\[
\frac{Z_{\mathcal D}}{Z_{\mathcal D^-}}
=1-r_{\mathcal D,a},
\qquad
r_{\mathcal D,a}=\mu_{\mathcal D^-}(B_a),
\]
and multiply them. For
sampling, we provide an algorithm which converts a sample from $\mu_{\mathcal D^-}$ into an
approximate sample from $\mu_{\mathcal D}$.

\subsubsection{\bf The expansion of LWYZZ}
A $2$-tree in a graph is an independent set that is connected
in the square of the graph. This notion goes back to
Alon's work on the algorithmic Lov\'asz Local Lemma
\cite{alon1991} and has since been used extensively in the literature on counting and sampling analogues of the Lov\'asz Local Lemma. 
The usefulness of this notion comes from two properties. The constraints
in a $2$-tree involve disjoint variables, so the probability
that all of them are violated is at most $p^{|\mathcal T|}$.
Moreover, the number of $2$-trees of size $t$ containing a
given constraint is at most $[O((\Delta+1)^2)]^{t-1}$.
Thus, when $p(\Delta+1)^2$ is sufficiently small, the decrease
in probability compensates for the number of $2$-trees.

The key ingredient in LWYZZ is an exact expansion of $r_{\mathcal D,a}$ in terms of
$2$-trees. Write $\mathfrak T_{\mathcal D,a}$ for the
$2$-trees containing $a$ in the induced graph $G_{\mathcal D}=G[\mathcal D]$.
For each $\mathcal T\in\mathfrak T_{\mathcal D,a}$, let
$\mathcal D_{\mathcal T}$ consist of the constraints at
distance greater than two from $\mathcal T$ in $G_{\mathcal D}$. LWYZZ showed that
\[
r_{\mathcal D,a}
=\sum_{\mathcal T\in\mathfrak T_{\mathcal D,a}}
(-1)^{|\mathcal T|-1}\theta_{\mathcal T}
\frac{Z_{\mathcal D_{\mathcal T}}}{Z_{\mathcal D^-}},
\]
where $\theta_{\mathcal T}$ is the probability of an event
$E_{\mathcal T}$ requiring all constraints in $\mathcal T$
to be violated and certain neighbouring constraints to
be satisfied. In particular,
$\theta_{\mathcal T}\leq p^{|\mathcal T|}$.
Write $U_{\mathcal T}$ for the variables appearing
in $\mathcal T$ or its neighbours in $G_{\mathcal D}$.
The event $E_{\mathcal T}$ depends only on
$U_{\mathcal T}$, and no constraint in
$\mathcal D_{\mathcal T}$ involves these variables. This separation
will be important for our sampling algorithm.

The ratios in the expansion increase the contribution of
each tree, but the local lemma bounds this increase by (see~\cite{haeuplerSahaSrinivasan2011})
\[
\frac{Z_{\mathcal D_{\mathcal T}}}{Z_{\mathcal D^-}}
\leq \exp\bigl(O(p(\Delta+1)^2|\mathcal T|)\bigr),
\]
since at most $O((\Delta+1)^2|\mathcal T|)$ constraints
are deleted. Combining this bound with
$\theta_{\mathcal T}\leq p^{|\mathcal T|}$ and the bound
on the number of $2$-trees, we obtain
\[
\sum_{\substack{\mathcal T\in\mathfrak T_{\mathcal D,a}\\
|\mathcal T|=t}}
\theta_{\mathcal T}
\frac{Z_{\mathcal D_{\mathcal T}}}{Z_{\mathcal D^-}}
\leq
[C(\Delta+1)^2]^{t-1}p^t
\exp\bigl(C'p(\Delta+1)^2t\bigr).
\]
For sufficiently small $p(\Delta+1)^2$, this bound
decreases geometrically with $t$.

Order the constraints of
$\mathcal D^-\setminus\mathcal D_{\mathcal T}$ as
$b_{\mathcal T,1},\ldots,b_{\mathcal T,\ell(\mathcal T)}$,
and set
$\mathcal D_{\mathcal T,i}
=\mathcal D_{\mathcal T}\cup
\{b_{\mathcal T,1},\ldots,b_{\mathcal T,i}\}$
for $0\leq i\leq\ell(\mathcal T)$. Then
\[
\frac{Z_{\mathcal D_{\mathcal T}}}{Z_{\mathcal D^-}}
=\prod_{i=1}^{\ell(\mathcal T)}
\frac{Z_{\mathcal D_{\mathcal T,i-1}}}
     {Z_{\mathcal D_{\mathcal T,i}}}
=\prod_{i=1}^{\ell(\mathcal T)}
(1-r_{\mathcal D_{\mathcal T,i},b_{\mathcal T,i}})^{-1}.
\]
Since each $\mathcal D_{\mathcal T,i}\subseteq
\mathcal D^-\subsetneq\mathcal D$, the expansion gives
a recursive formula for $r_{\mathcal D,a}$.
LWYZZ use a randomized evaluation of this formula to obtain
an unbiased estimator with bounded variance and polynomial
expected running time. Our counting
algorithm follows this approach, with appropriate modifications to handle the asymmetric setting; we describe it in
\cref{sec:estimates,sec:counting}.

\subsubsection{\bf A simple sampler for the bounded-degree setting.}
\label{sub:bounded-degree-warmup}
As a warm-up, we discuss a substantially simpler sampler when  $\Delta$ is fixed. The starting point is the following analogue of LWYZZ for conditional distributions
\[
\mu_{\mathcal D^-}-\mu_{\mathcal D}
=\sum_{\mathcal T}\lambda_{\mathcal T}
\left[
(\mathbf1_{E_{\mathcal T}}-\theta_{\mathcal T})
\mu_{\mathcal D_{\mathcal T}}
+\theta_{\mathcal T}
(\mu_{\mathcal D_{\mathcal T}}-\mu_{\mathcal D^-})
\right],
\]
where
$\lambda_{\mathcal T}
=(-1)^{|\mathcal T|-1}Z_{\mathcal D_{\mathcal T}}/Z_{\mathcal D}$. Using the same telescoping argument as above, we recursively
expand the second summand until the sum of the sizes of
the $2$-trees chosen along each branch reaches a threshold $L$. This gives
\[
\mu_{\mathcal D^-}-\mu_{\mathcal D}
=\sum_{j=1}^{J}\alpha_j(\mathbf1_{E_j}-\theta_j)\mu_{\mathcal D_j}
+\tau_L.
\]
Here $E_j$ is an event depending on variables
$U_j$ disjoint from $\vbl(\mathcal D_j)$, and
$\theta_j=\nu(E_j)$. Using the same tree-counting and local lemma estimates
as above, we obtain $J=(\Delta+1)^{O(L)}$,
$|\alpha_j|\leq\exp(O(L))$, and
$\norm{\tau_L}_{\TV}\leq\exp(-\Omega(L))$. Taking $L=O(\log(2/\delta))$, we can enumerate all the
pairs $(\mathcal D_j,U_j)$ in time
$I^{O(1)}(2/\delta)^{O(\log(\Delta+1))}$,
which is polynomial in $I$ and $1/\delta$ for fixed $\Delta$.

We use these pairs to construct a simple Markov chain on
the state space
$\Good(\mathcal D)\cup\bigcup_{j=1}^{J}\Good(\mathcal D_j)$. Let $Q_j$
resample the variables in $U_j$ from their product distribution,
leaving the remaining variables fixed. For a parameter
$\beta>0$, the transition kernel is
\[
K_0(\sigma,\cdot)
=
\frac{
\mathbf1_{\Good(\mathcal D)}(\sigma)\,\mathrm{Id}(\sigma,\cdot)
+\beta\sum_{j=1}^{J}
\mathbf1_{\Good(\mathcal D_j)}(\sigma)\,Q_j(\sigma,\cdot)
}{
\mathbf1_{\Good(\mathcal D)}(\sigma)
+\beta\sum_{j=1}^{J}\mathbf1_{\Good(\mathcal D_j)}(\sigma)
}.
\]
Thus, each $Q_j$ has weight
$\beta\mathbf1_{\Good(\mathcal D_j)}(\sigma)$, and staying
put has weight $\mathbf1_{\Good(\mathcal D)}(\sigma)$. Our sampler starts from $\mu_{\mathcal D^-}$ and runs this
chain for $N$ steps. It returns the final assignment if
it satisfies $\mathcal D$, and a fixed satisfying assignment
otherwise.

The kernel $K_0$ is reversible with stationary distribution
\[
\pi_0
=
\frac{
Z_{\mathcal D}\mu_{\mathcal D}
+\beta\sum_{j=1}^{J}Z_{\mathcal D_j}\mu_{\mathcal D_j}
}{
Z_{\mathcal D}+\beta\sum_{j=1}^{J}Z_{\mathcal D_j}
}.
\]
The local lemma bounds
$Z_{\mathcal D_j}/Z_{\mathcal D}\leq\exp(O(L))$, so
\[
\norm{\pi_0-\mu_{\mathcal D}}_{\TV}
\leq
\beta\sum_{j=1}^{J}\frac{Z_{\mathcal D_j}}{Z_{\mathcal D}}
\leq\beta J\exp(O(L)).
\]
Thus we can make this distance at most $\delta$ by choosing
$\beta^{-1}=(2/\delta)^{O(\log(\Delta+1))}$, which is
polynomial in $1/\delta$ for fixed $\Delta$.

For convergence, the stationarity of $\pi_0$ and the data-processing inequality give
\[
\norm{(\mu_{\mathcal D^-})K_0^N-\mu_{\mathcal D}}_{\TV}
\leq
\norm{(\mu_{\mathcal D^-}-\mu_{\mathcal D})K_0^N}_{\TV}
+2\norm{\pi_0-\mu_{\mathcal D}}_{\TV}.
\]
The second term is already small. We control the first term using the above decomposition of $\mu_{\mathcal D^-} - \mu_{\mathcal D}$ and the resampling operations defining the chain. Indeed, for any $f$, since $\mathbf1_{E_j}-\theta_j$ is orthogonal to $Q_jf$ in
$L^2(\mu_{\mathcal D_j})$, Cauchy--Schwarz gives
\[
\bigl|(\mu_{\mathcal D^-}-\mu_{\mathcal D})(f)\bigr|
\leq B_L\sqrt{\mathcal E_{K_0}(f,f)}
+2\norm{\tau_L}_{\TV}\norm f_\infty,
\]
where, writing $w_j$ for the mixture weights in $\pi_0$,
\[
B_L^2=\sum_j\frac{\alpha_j^2}{w_j} = (2/\delta)^{O(\log(\Delta+1))},
\qquad
\mathcal E_{K_0}(f,f)
=\sum_jw_j\norm{f-Q_jf}_{L^2(\mu_{\mathcal D_j})}^2.
\]
Finally, since $K_0$ is reversible and positive semidefinite, its spectral decomposition gives an upper bound on the Dirichlet form
\[
\mathcal E_{K_0}(K_0^Nf,K_0^Nf)
\leq \max_{\lambda}(1-\lambda)\lambda^{2N} \le \frac1{2N+1}
\qquad(\norm f_\infty\leq1).
\]
Taking $N=(2/\delta)^{O(\log(\Delta+1))}$ sufficiently
large therefore gives accuracy $\delta$, with total
running time polynomial in $I$ and $1/\delta$ for
fixed $\Delta$.  To get total variation distance $\varepsilon$, we will need to take $\delta = \varepsilon/|\mathcal C|$.

\subsubsection{\bf Handling unbounded $\Delta$}
\label{sec:unbounded-degree}

We continue to work in the symmetric setting, but now, $\Delta$ is allowed to grow with the size of the instance. Note that even if one step of the kernel $K_0$ in the above paragraph could be implemented efficiently, the bound on the number of steps still involves
$B_L^2=\sum_j\alpha_j^2/w_j$, which depends on the number of terms and is of order $(2/\delta)^{O(\log(\Delta+1))}$. To get around this, we construct a different chain on a larger
state space, with separate copies of
$\Good(\mathcal D^-)$ and $\Good(\mathcal D)$ and
auxiliary states connecting them. Each step of this chain can be simulated in polynomial expected time. To bound the number of steps needed for convergence, we use the recursive expansion of $\mu_{\mathcal D^-} - \mu_{\mathcal D}$ to construct a signed flow
from $\mu_{\mathcal D^-}$ on the input copy to
$\mu_{\mathcal D}$ on the output copy, and bound its
energy with respect to the chain's edge conductances.

We first recall the standard way in which a flow of bounded energy bounds convergence of a Markov chain. For a lazy reversible kernel $K$ with stationary law $\pi$,
a signed flow assigns a real number $\varphi(X,Y)$
to each oriented edge, with
$\varphi(Y,X)=-\varphi(X,Y)$.
Its divergence is the net outflow at each state:
\[
\partial\varphi(X)=\sum_Y\varphi(X,Y).
\]
In our setting, if $\partial\varphi=\mu_{\mathcal D^-}-\mu_{\mathcal D}$,
where the two laws are supported on the separate copies $\Good(\mathcal D^-)$ and $\Good(\mathcal D)$, Cauchy--Schwarz gives
\[
\bigl|(\mu_{\mathcal D^-}-\mu_{\mathcal D})(f)\bigr|^2
\leq
\left(
\sum_{\{X,Y\}}
\frac{\varphi(X,Y)^2}{\pi(X)K(X,Y)}
\right)\mathcal E_K(f,f);
\]
see \cite[Section~9.4]{levinPeresWilmer2017}. The energy of the flow is the sum in the parentheses. If the energy
is at most $A$, applying the inequality to $K^Nf$
and using
$\mathcal E_K(K^Nf,K^Nf)\leq1/(2N+1)$ for
$\norm f_\infty\leq1$ gives
\[
\norm{\mu_{\mathcal D^-}K^N-\mu_{\mathcal D}}_{\TV}
\leq
\frac{\sqrt A}{2\sqrt{2N+1}}
+2\norm{\pi-\mu_{\mathcal D}}_{\TV}.
\]
Thus it suffices to bound the energy and ensure that the
stationary law $\pi$ is close to the target $\mu_{\mathcal{D}}$. 

To motivate the chain, we first describe the flow. Recall the identity
\begin{equation} \label{eq:expansion-of-mu}
\mu_{\mathcal D^-}-\mu_{\mathcal D}
=\sum_{\mathcal T}\lambda_{\mathcal T}
\left[
(\mathbf1_{E_{\mathcal T}}-\theta_{\mathcal T})
\mu_{\mathcal D_{\mathcal T}}
+\theta_{\mathcal T}
(\mu_{\mathcal D_{\mathcal T}}-\mu_{\mathcal D^-})
\right].
\end{equation}
We work inductively on $|\mathcal{D}|$, so assume we have constructed a flow for all smaller insertion problems. Fix a tree $\mathcal T$.
The two differences inside the brackets in
\eqref{eq:expansion-of-mu} correspond to the two arrows
\[
\mathbf1_{E_{\mathcal T}}\mu_{\mathcal D_{\mathcal T}}
\ \longrightarrow\
\theta_{\mathcal T}\mu_{\mathcal D_{\mathcal T}}
\ \longrightarrow\
\theta_{\mathcal T}\mu_{\mathcal D^-}.
\]
For each
$\sigma\in E_{\mathcal T}\cap\Good(\mathcal D_{\mathcal T})$,
create a tree state $\mathsf T_{\mathcal T}(\sigma)$.
The first arrow resamples $U_{\mathcal T}$ from its
product distribution. Recall that $E_{\mathcal T}$
depends only on these variables, whereas no constraint
in $\mathcal D_{\mathcal T}$ involves them.
Therefore, the resampling sends
$\mathbf1_{E_{\mathcal T}}\mu_{\mathcal D_{\mathcal T}}$
to $\theta_{\mathcal T}\mu_{\mathcal D_{\mathcal T}}$.
Sending the first measure along the resampling edges
defines a flow between these two measures.

For the second arrow, we add the constraints of
$\mathcal D^-\setminus\mathcal D_{\mathcal T}$ one
at a time. Introduce a \emph{layer} consisting of
a copy of $\Good(\mathcal D_{\mathcal T,i})$ for
each $i=0,\ldots,\ell(\mathcal T)$.
The resampling edges lead to layer $0$.
By induction, smaller insertion chains provide
flows between consecutive layers; we multiply
these flows by $\theta_{\mathcal T}$.
At each intermediate layer, the preceding flow
has divergence $-\theta_{\mathcal T}
\mu_{\mathcal D_{\mathcal T,i}}$, and the next has
divergence $+\theta_{\mathcal T}
\mu_{\mathcal D_{\mathcal T,i}}$.
Thus the total divergence in the intermediate layers is zero.

The layers and smaller chains form a \emph{corridor}
carrying a flow from
$\theta_{\mathcal T}\mu_{\mathcal D_{\mathcal T}}$
to $\theta_{\mathcal T}\mu_{\mathcal D^-}$.
We connect its last layer to the input copy.
Finally, we multiply both the resampling flow
and the corridor flow by $\lambda_{\mathcal T}$,
reversing their directions when this coefficient
is negative.

The problem with this construction is that the tree
states are still sources or sinks, depending on the sign of $\lambda_{\mathcal T}$, whereas we want
nonzero divergence only on the input and output
copies. The selection procedure used to derive the $2$-tree
expansion gives canonical pairs
$\mathsf T_{\mathcal T}(\sigma)
\leftrightarrow\mathsf T_{\mathcal T'}(\sigma)$
with equal and opposite divergences. Joining these pairs, together with connections
from the unpaired tree states to the input and
from the input to the output, gives the required
divergence $\mu_{\mathcal D^-}-\mu_{\mathcal D}$.

We now turn this construction into a reversible
Markov chain whose transitions can be simulated efficiently (in particular, without access to partition functions). A first issue is making the resampling step from
a tree state $\mathsf T_{\mathcal T}(\sigma)$ to
the first layer reversible. Reversing the resampling
would appear to require sampling conditioned on
$E_{\mathcal T}$, which is not readily available.
We avoid this by simply recording the initial assignment
$z$ on $U_{\mathcal T}$ and carrying it along the
corridor from $\mathcal D_{\mathcal T}$ to
$\mathcal D^-$. This allows us to trivially undo the resampling step
by returning to the recorded assignment.

The second, more substantial, issue is how to leave
the input copy and choose a corridor. We cannot
afford to enumerate all corridors. Instead, we use
the local sampling procedure from LWYZZ's counting
algorithm \cite[Algorithm~2]{liuWangYinZhangZhou2026}
to generate a random list $\mathbf L$ of corridor
labels, as follows. For each constraint $b\in\mathcal D$, independently
draw an assignment from $\nu_{\vbl(b)}$. These assignments
need not agree on shared variables.
We consider only trees $\mathcal T$ whose constraints
are all violated by their respective samples.
For each such tree, keep the samples on its
disjoint variable sets and sample the remaining
variables in $U_{\mathcal T}$ independently from
their product distribution. Write $z_{\mathcal T}$
for the resulting assignment, using fresh randomness
for each completion. Discard the tree if
$z_{\mathcal T}\notin E_{\mathcal T}$, and let
\[
\mathbf L
=\{(\mathcal T,z_{\mathcal T}):
  z_{\mathcal T}\in E_{\mathcal T}\}
\]
be the list of retained entries. We choose a corridor uniformly from the random list.
To make this move reversible, we use the same list-generation procedure conditioned
to contain the corridor's label, which can be implemented by simply
fixing the corresponding local samples.

Finally, we slow departures from the output copy $\Good(\mathcal{D})$
by a sufficiently small factor $\beta$, making the stationary law
close to $\mu_{\mathcal D}$.

For the running time, using the same bounds on $2$-trees as in the counting
argument, we control both the energy of the flow and the expected
cost of simulating the chain. The convergence bound above
then gives polynomial expected running time in $I$ and
$1/\delta$, with an exponent independent of $\Delta$.
We give the details in
\cref{sec:insertion,sec:insertion-proof,sec:implementation}.

\subsection{Acknowledgments} V.J.~is supported by NSF grant DMS-2237646. C.M.~is supported by a Simons Graduate Dissertation Fellowship. H.T.P.~is supported by a Clay Research Fellowship, a Sloan Research Fellowship and NSF grant DMS-2543870. 

\medskip

\paragraph{\bf Statement on AI use} The first two authors had developed a version of the asymmetric counting and sampling lemma for monotone constraints in Spring, 2025. After the appearance of LWYZZ, it was clear to the authors that their counting estimator could be extended to the general asymmetric setting as well; ChatGPT 5.6 Sol was used to supply the details. More substantively, the sampler in \cref{thm:main} was developed in conversations between the authors and ChatGPT 5.6 Sol. The first correct version of the sampler was extremely complicated and rather hard to understand. It was substantially simplified by the authors, and then building on the authors' work, by ChatGPT 6 Astra. The authors used Codex for assistance with preparing the manuscript. The mathematical content, the final text, and any errors are the responsibility of the authors.

\section{Preliminaries}\label{sec:reduction}

Throughout the proof, we assume \eqref{eq:full-square} with $c=1/100$,
and write $m=|\mathcal C|$. Since this condition implies the usual LLL
condition, $Z_{\mathcal D}>0$ for every $\mathcal D\subseteq\mathcal C$.
For a fixed pair $(\mathcal D,a)$ with $a\in\mathcal D$, write $\mathcal D^-=\mathcal D\setminus\{a\}$ and
\[
r_{\mathcal D,a}
=\mu_{\mathcal D^-}(B_a)
=1-\frac{Z_{\mathcal D}}{Z_{\mathcal D^-}}
\]
for the probability that $a$ is violated, conditioned on satisfying
all the other constraints in $\mathcal D$.

\begin{lemma}\label{lem:deletion}
For every $a\in\mathcal D\subseteq\mathcal C$,
\[
r_{\mathcal D,a}\leq cx_a.
\]
Moreover, if $\mathcal E\subseteq\mathcal D$, then
\begin{equation}\label{eq:deletion-ratio}
1\leq\frac{Z_{\mathcal E}}{Z_{\mathcal D}}
\leq\prod_{b\in\mathcal D\setminus\mathcal E}\frac1{1-cx_b}.
\end{equation}
\end{lemma}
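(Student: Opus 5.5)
The plan is to run the standard inductive Lovász Local Lemma argument in its conditional form, with the weights $x$ rescaled. The hypothesis \eqref{eq:full-square} uses $N_G^2(a)\supseteq N_G(a)$, so it already gives the weaker condition $p_a\le c\,x_a\prod_{b\in N_G(a)}(1-x_b)$. The point is to convert this into $r_{\mathcal D,a}\le cx_a$. The obvious route is the classical claim $\nu(B_a\mid \Good(\mathcal S))\le x_a$, but this loses the factor $c$. Instead I would prove the stronger claim directly, by induction on $|\mathcal S|$:
\[
\nu\bigl(B_a\mid\Good(\mathcal S)\bigr)\le c x_a\qquad\text{for all }\mathcal S\subseteq\mathcal C\setminus\{a\}.
\]
Applying it with $\mathcal S=\mathcal D^-$ gives exactly $r_{\mathcal D,a}=\mu_{\mathcal D^-}(B_a)\le cx_a$.

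\textbf{Inductive step.} Split $\mathcal S=\mathcal S_1\cup\mathcal S_2$, where $\mathcal S_1=\mathcal S\cap N_G(a)$ and $\mathcal S_2$ is the rest. Since $B_a$ is independent of $\Good(\mathcal S_2)$,
\[
\nu\bigl(B_a\mid\Good(\mathcal S)\bigr)\le \frac{\nu(B_a)}{\nu\bigl(\Good(\mathcal S_1)\mid\Good(\mathcal S_2)\bigr)} .
\]
To bound the denominator, enumerate $\mathcal S_1=\{b_1,\dots,b_k\}$. Then
\[
\nu\bigl(\Good(\mathcal S_1)\mid\Good(\mathcal S_2)\bigr)=\prod_i\Bigl(1-\nu\bigl(B_{b_i}\mid \Good(\mathcal S_2\cup\{b_1,\dots,b_{i-1}\})\bigr)\Bigr)\ge\prod_{b\in\mathcal S_1}(1-cx_b),
\]
where each factor is handled by the inductive hypothesis, since the conditioning sets are strictly smaller than $\mathcal S$ and do not contain $b_i$. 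We have $1-cx_b\ge 1-x_b$, so the numerator bound $p_a\le cx_a\prod_{b\in N_G(a)}(1-x_b)$ closes the induction. The base case $\mathcal S=\emptyset$ is just $p_a\le cx_a$. One detail to check is that every conditioning event has positive probability. This follows within the same induction, because each factor $1-cx_b$ is positive.

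\textbf{Deletion ratio.} For \eqref{eq:deletion-ratio}, the lower bound $Z_{\mathcal E}\ge Z_{\mathcal D}$ is immediate from $\Good(\mathcal D)\subseteq\Good(\mathcal E)$. For the upper bound, order $\mathcal D\setminus\mathcal E=\{b_1,\dots,b_\ell\}$ and set $\mathcal E_i=\mathcal E\cup\{b_1,\dots,b_i\}$, so that $\mathcal E_0=\mathcal E$ and $\mathcal E_\ell=\mathcal D$. Telescoping gives
\[
\frac{Z_{\mathcal E}}{Z_{\mathcal D}}=\prod_{i=1}^{\ell}\frac{Z_{\mathcal E_{i-1}}}{Z_{\mathcal E_i}}=\prod_{i=1}^{\ell}\bigl(1-r_{\mathcal E_i,b_i}\bigr)^{-1}.
\]
The first part of the lemma gives $r_{\mathcal E_i,b_i}\le cx_{b_i}$, and hence the stated product bound.

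\textbf{Main obstacle.} The only delicate point is making the induction carry the factor $c$ rather than losing it. This works because the hypothesis is stated with $N_G(a)\subseteq N_G^2(a)$ and with $(1-x_b)\le(1-cx_b)$, so there is slack in the right direction. I expect no real difficulty beyond setting up the induction over arbitrary conditioning sets $\mathcal S$, rather than only over sets of the form $\mathcal D^-$.
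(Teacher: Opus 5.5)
Your argument is correct and is essentially the paper's proof. The paper sets $y=cx$, notes that $p_a\le y_a\prod_{b\in N_G(a)}(1-y_b)$, and cites the conditional local lemma bound $\nu(B_a\mid\Good(\mathcal B))\le p_a\prod_{b\in\mathcal B\cap N_G(a)}(1-y_b)^{-1}\le y_a$ from Haeupler--Saha--Srinivasan. You instead reprove that bound inline by the standard Erd\H{o}s--Lov\'asz induction; this is exactly the classical argument run with weights $cx$, so the classical route does not in fact lose the factor $c$. Your telescoping argument for the deletion ratio is identical to the paper's.
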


\begin{proof}
Let $y_a=cx_a$. Since $c<1$ and $N_G(a)\subseteq N_G^2(a)$,
\[
p_a\leq cx_a\prod_{b\in N_G^2(a)}(1-x_b)
\leq y_a\prod_{b\in N_G(a)}(1-y_b).
\]
By \cite[Theorem~2.1]{haeuplerSahaSrinivasan2011}, for every
$\mathcal B\subseteq\mathcal C\setminus\{a\}$,
\[
\nu(B_a\mid\Good(\mathcal B))
\leq p_a\prod_{b\in \mathcal B\cap N_G(a)}\frac1{1-y_b}
\leq y_a.
\]
Taking $\mathcal B=\mathcal D\setminus\{a\}$ proves the first assertion. Adding a constraint $b$ to a family
$\mathcal A\setminus\{b\}$ multiplies its satisfaction probability
by $1-r_{\mathcal A,b}\geq1-cx_b$. Adding the constraints of
$\mathcal D\setminus\mathcal E$ one at a time therefore proves
the upper bound in \eqref{eq:deletion-ratio}. The lower bound follows
from $\Good(\mathcal D)\subseteq\Good(\mathcal E)$.
\end{proof}

For the rest of the paper, we fix an ordering $a_1,\ldots,a_m$ of $\mathcal C$, and use
the induced ordering on every subfamily of constraints. 
Set $\mathcal C_0=\emptyset$ and
$\mathcal C_i=\{a_1,\ldots,a_i\}$ for $1\leq i\leq m$.
The identity
\[
Z_{\mathcal C}
=\prod_{i=1}^m\frac{Z_{\mathcal C_i}}{Z_{\mathcal C_{i-1}}}
=\prod_{i=1}^m(1-r_{\mathcal C_i,a_i})
\]
shows that approximating $Z_{\mathcal C}$ reduces to estimating
the conditional probabilities $r_{\mathcal C_i,a_i}$.

For sampling, we also add the constraints one at a time. The task
at each step is to turn a sample from
$\mu_{\mathcal D^-}$ into an approximate sample from
$\mu_{\mathcal D}$. We call this an insertion of $a$.

\begin{proposition}\label{prop:insert}
For every $a\in\mathcal D\subseteq\mathcal C$ and $0<\delta\leq1$,
there is a randomized map
$\Insert(\mathcal D,a,\mathord\cdot,\delta)$ from
$\Good(\mathcal D^-)$ to $\Good(\mathcal D)$,
with expected running time polynomial in $I$ and $1/\delta$ such
that, if $\sigma\sim\mu_{\mathcal D^-}$ and
$\tau=\Insert(\mathcal D,a,\sigma,\delta)$, then
\[
\norm{\Law(\tau)-\mu_{\mathcal D}}_{\TV}\leq\delta.
\]
\end{proposition}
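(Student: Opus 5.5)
I will prove \cref{prop:insert} by strong induction on $|\mathcal D|$, following the plan in \cref{sec:unbounded-degree}. The base case is trivial: if $\mathcal D^-$ contains no constraint at distance at most two from $a$, then the expansion has only the tree $\{a\}$, and rejection sampling on $\vbl(a)$ works. The inductive hypothesis supplies, for each proper subfamily $\mathcal D_{\mathcal T,i}\subsetneq\mathcal D$, an insertion chain together with a bounded-energy flow between $\mu_{\mathcal D_{\mathcal T,i-1}}$ and $\mu_{\mathcal D_{\mathcal T,i}}$. The first step is to establish the asymmetric distributional identity \eqref{eq:expansion-of-mu}. I would prove it by the same selection argument LWYZZ use for $r_{\mathcal D,a}$, now keeping track of the assignment instead of only its probability. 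The key property to verify is that $E_{\mathcal T}$ depends only on $U_{\mathcal T}$, which is disjoint from $\vbl(\mathcal D_{\mathcal T})$. This gives $\mathbf 1_{E_{\mathcal T}}\mu_{\mathcal D_{\mathcal T}}=\mu_{\mathcal D_{\mathcal T}}\otimes(\text{law on }U_{\mathcal T})$, and it also gives the canonical pairing of tree states with opposite signs.

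Next I would build the enlarged state space. It consists of an input copy of $\Good(\mathcal D^-)$, an output copy of $\Good(\mathcal D)$, tree states $\mathsf T_{\mathcal T}(\sigma)$, and, for each $\mathcal T$, a corridor whose layers are copies of $\Good(\mathcal D_{\mathcal T,i})$ decorated with the recorded assignment $z$ on $U_{\mathcal T}$. Moves are of four kinds:
\begin{itemize}
\item resampling $U_{\mathcal T}$, which is reversible because $z$ is stored;
\item recursive insertion chains between consecutive layers;
\item leaving the input copy by generating the LWYZZ random list $\mathbf L$ and choosing an entry uniformly, with the reverse move generating $\mathbf L$ conditioned to contain the given label;
\item a $\beta$-slowed move out of the output copy.
\end{itemize}
I would define transition rates by Metropolis-type weights so that detailed balance holds with an explicit $\pi$ whose mass off the output copy is at most $\beta\cdot\exp(O(1))$. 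The exponential factor is controlled by \cref{lem:deletion} together with the asymmetric $2$-tree count. That count is the Moser--Tardos branching-process bound: the sum over $2$-trees containing $a$ of $\prod_{b\in\mathcal T}c x_b\prod_{N^2}(1-x)$ is $O(c\,x_a/(1-x_a))$-type small. The sampler starts at the input copy from $\sigma$, runs $N$ steps, and outputs the state if it lies in the output copy, and a fixed element of $\Good(\mathcal D)$ otherwise.

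For correctness I would build the signed flow $\varphi$ with $\partial\varphi=\mu_{\mathcal D^-}-\mu_{\mathcal D}$ by assembling the pieces described above. The resampling flow carries weight $\lambda_{\mathcal T}$. The corridor flow is $\lambda_{\mathcal T}\theta_{\mathcal T}$ times the inductive flows. Paired tree states are joined, and the residual divergence is routed to the input and output copies. The energy bound $A$ is then a sum over trees of $\lambda_{\mathcal T}^2(\theta_{\mathcal T}$ or $\theta_{\mathcal T}^2\cdot A_{\text{sub}})/(\text{conductance})$. The conductance of the list-selection edge is roughly $\pi(\text{input})\Pr[\mathcal T\in\mathbf L]/|\mathbf L|$, and $\Pr[\mathcal T\in\mathbf L]=\theta_{\mathcal T}$. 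Combined with the spectral bound $\mathcal E_K(K^Nf,K^Nf)\le 1/(2N+1)$, the displayed inequality gives TV error $\sqrt A/\sqrt N+2\|\pi-\mu_{\mathcal D}\|$. Choosing $\beta\asymp\delta$ and $N\asymp A/\delta^2$ then yields the claim.

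The main obstacle is the energy bound in the recursion: the inductive energies must not compound multiplicatively across depths. I would first try to prove a bound of the form $A_{\mathcal D,a}\le K\,x_a^{-1}$ or $\le K$ uniformly, using that corridor contributions carry a factor $\theta_{\mathcal T}^2/\theta_{\mathcal T}=\theta_{\mathcal T}\lesssim\prod_{b\in\mathcal T}c x_b\prod(1-x)$. This extra factor of $c$ per tree should make the recursive sum contract, giving $A\le K+O(c)K$, which closes for $c=1/100$. The running time is handled in parallel. The expected cost of one step is the expected size of $\mathbf L$ plus the cost of the recursive steps, and the same $2$-tree sum bounds this. Expected polynomial time then follows from the same contraction, since the recursion depth weighted by $\theta$ is geometrically damped.
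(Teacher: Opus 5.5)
Your architecture matches the paper's: input and output copies, tree states paired by the selection procedure, corridors of recursive insertion chains carrying the recorded $z$, a fresh list out of the input and a conditioned list back, and a $\beta$-slowed output. However, the step you flag as the main obstacle does not close the way you propose, and this is exactly where the asymmetric hypothesis differs from the symmetric overview. In the corridor of $\mathcal T$ the flow passes \emph{in series} through $\ell(\mathcal T)=|N^2_{\mathcal D}[\mathcal T]\setminus\{a\}|$ smaller copies. Each copy must push a unit of flow into its output copy, whose only edges are the input--output edges of total capacity $Z_{\mathcal D_{\mathcal T,i}}/6$. By Cauchy--Schwarz each copy therefore costs at least $6/Z_{\mathcal D_{\mathcal T,i}}$, and after rescaling the corridor costs at least $6\theta_{\mathcal T}\ell(\mathcal T)/Z_-$.

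Condition \eqref{eq:full-square} controls $\ell_x(\mathcal T)=\sum_i x_{b_{\mathcal T,i}}$ but not the count $\ell(\mathcal T)$. For example, take $x_a=1/2$, $p_a=c/4$, and let $a$ have $m-1$ neighbours with $x_b=m^{-2}$ and $p_b\le c/(4m^2)$. Then $\{a\}$ is the only $2$-tree, $\theta_{\{a\}}\ge c/4-c/(4m)$, and $\ell(\{a\})=m-1$. So any flow of your form has energy at least of order $cm/Z_-$, and the weight off the output copy is at least of order $cmZ_-$. This has three consequences:
\begin{itemize}
\item ``$A\le K+O(c)K$'' is false. The factor $c$ per tree gives contraction only when $\ell(\mathcal T)\lesssim|\mathcal T|(\Delta+1)^2$ is paid for by $p(\Delta+1)^2\le c$.
\item $A\le K/x_a$ is worse, since $\sum_i x_{b_{\mathcal T,i}}^{-1}$ is uncontrolled.
\item $\beta\asymp\delta$ does not make $\pi$ close to $\mu^+$.
\end{itemize}
What closes the induction is an $m$-dependent hypothesis: energy at most $(6/(1-c)+mx_a)/Z_-$ and non-output weight at most $(1+2mx_a)Z_-$. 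The corridors then contribute $\sum_{\mathcal T}\eta_{\mathcal T}\bigl(6\ell(\mathcal T)/(1-c)+m\ell_x(\mathcal T)\bigr)/Z_-$. This is absorbed using $\ell(\mathcal T)\le m$, $\sum_{\mathcal T}\eta_{\mathcal T}\le4cx_a$, and the separate estimate $\sum_{\mathcal T}\eta_{\mathcal T}\ell_x(\mathcal T)<10cx_a$. That last estimate needs the tilted weights $\rho_t\exp(\frac14\sum_{b\in N_G^2[t]}x_b)$, not just the plain $2$-tree sum. The upshot is $\beta\asymp\delta/m$ and $N\asymp m^2\delta^{-3}$.

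Two related points on the list edges. The return-edge energy is not governed by $\theta_{\mathcal T}/|\mathbf L|$, because $|\mathbf L|$ is random. You need $q_e^2\le\omega_e\,\mathbb E[\mathbf 1_{\{e\in\mathbf L\}}|\mathbf L|]$ with $\omega_e=\mathbb E[\mathbf 1_{\{e\in\mathbf L\}}/|\mathbf L|]$, together with a pair bound over $\mathcal T\cup\mathcal T'$. Also, no Metropolis step is needed once corridor $e$ carries the weight factor $q_e=\nu_{U_{\mathcal T}}(z)$; it would not be implementable anyway, since it would require $\omega_e$.

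Your running-time sketch misses the corresponding mechanism. The chain does not start at stationarity, and the cost of a step is not uniformly bounded over states. A step descends through at most $m$ nested copies and performs at most one list operation. But at the last layer of corridor $e=(\mathcal T,z)$ the forward rule generates $\mathbf L^e$, and on $\{e\in\mathbf L\}$ every connected subset of $\mathcal T$ containing the root is active. So $\mathbb E[1+A\mid e\in\mathbf L]$ can be exponential in $|\mathcal T|$, and a ``recursion depth weighted by $\theta$'' plays no role.

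The paper instead uses $\mu^-\le\widetilde w/Z_-$ and $\widetilde wK=\widetilde w$ to get $\mu^-K^t\le\widetilde w/Z_-$ for every $t$. The expected cost of each step is then at most a $w/Z_-$-weighted sum of per-state costs. Since the last layer of corridor $e$ carries the factor $q_e$, the conditioned lists contribute $\sum_e q_e\mathbb E[1+A\mid e\in\mathbf L]=\mathbb E[|\mathbf L|(1+A)]\le\mathbb E[A+A^2]$, which requires the second-moment bound $\mathbb EA^2\le2cx_b$. The sum over all nested copies is controlled because their input sets are disjoint with total weight at most $(1+2mx_a)Z_-$.

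Finally, this bounds the cost only when the input law is exactly $\mu_{\mathcal D^-}$. In the sampler, each insertion receives the previous output, which is merely close in total variation and gives no control of expected cost. You need the time limit $t_{\max}\asymp NT_{\mathrm{step}}(I)/\delta$ with Markov's inequality (interruption probability at most $\delta/20$, charged to the error), so that the cost is bounded for every input. You also need an explicit fallback $\sigma_\star\in\Good(\mathcal C)$, for example from Moser--Tardos.
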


To see why this is useful, let $J$ be the transition kernel of
$\Insert(\mathcal D,a,\mathord\cdot,\delta)$.
For any input law $\zeta$ on
$\Good(\mathcal D^-)$, the output law
$\zeta'=\zeta J$ satisfies
\[
\begin{aligned}
\norm{\zeta'-\mu_{\mathcal D}}_{\TV}
&\leq
\norm{\zeta J-\mu_{\mathcal D^-}J}_{\TV}
+\norm{\mu_{\mathcal D^-}J-\mu_{\mathcal D}}_{\TV}\\
&\leq
\norm{\zeta-\mu_{\mathcal D^-}}_{\TV}+\delta,
\end{aligned}
\]
where we have used \cref{prop:insert} and the data processing inequality for total variation distance. Starting from a sample from $\nu$, insert $a_1,\ldots,a_m$ in
order, each with accuracy $\delta=\eps/(m+1)$. The resulting
assignment $\sigma$ satisfies every constraint in $\mathcal C$, and
iterating this inequality gives
\[
\norm{\Law(\sigma)-\mu_{\mathcal C}}_{\TV}
\leq m\delta\leq\eps.
\]
The resulting sampler has expected running time polynomial in $I$
and $1/\eps$. We prove \cref{prop:insert} in
\cref{sec:implementation}.

\subsection{The $2$-tree expansion}\label{sec:two-tree}

We first recall the $2$-tree expansion of LWYZZ
\cite{liuWangYinZhangZhou2026} for the conditional probabilities
$r_{\mathcal D,a}$. We will use the same argument to derive a
recursion for the change in distribution when a constraint is added.

For $\mathcal D\subseteq\mathcal C$, let $G_{\mathcal D}=G[\mathcal D]$. We use square brackets for closed neighborhoods. In particular, for $\mathcal T\subseteq\mathcal D$, $N_{\mathcal D}[\mathcal T]$ consists of $\mathcal T$
together with its neighbors in $G_{\mathcal D}$, and
$N_{\mathcal D}^2[\mathcal T]$ consists of all constraints at distance at most two
from $\mathcal T$ in $G_{\mathcal D}$.

\subsubsection{Conditional probabilities}

Fix $a\in\mathcal D\subseteq\mathcal C$. By inclusion--exclusion,
\[
\mathbf1_{B_a}\mathbf1_{\Good(\mathcal D^-)}
=\sum_{\substack{\mathcal B\subseteq\mathcal D\\a\in \mathcal B}}
(-1)^{|\mathcal B|-1}\prod_{b\in \mathcal B}\mathbf1_{B_b}.
\]
We will group the terms by selecting a rooted $2$-tree from each $\mathcal B$, as we will see in Theorem \ref{thm:two-tree}.
Here, a $2$-tree rooted at $a$ is a set $\mathcal T\subseteq\mathcal D$
containing $a$ that is independent in $G_{\mathcal D}$ and connected
in $G_{\mathcal D}^2 := (G_{\mathcal D})^2$. Write $\mathfrak T_{\mathcal D,a}$ for the
family of such sets.

The selection procedure that we use is due to LWYZZ
\cite[Algorithm~1]{liuWangYinZhangZhou2026}. Start with the accepted set
$\mathcal T_{\mathrm{curr}}=\{a\}$ and the rejected set $\mathcal R=\emptyset$. At each step, examine the first
constraint $b\in\mathcal D\setminus(\mathcal T_{\mathrm{curr}}\cup \mathcal R)$ whose distance
from $\mathcal T_{\mathrm{curr}}$ in $G_{\mathcal D}$ is exactly two. Add $b$ to $\mathcal T_{\mathrm{curr}}$ if
$b\in \mathcal B$, and to $\mathcal R$ otherwise. Stop when no such constraint
remains. We denote the final set $\mathcal T_{\mathrm{curr}}$ by
$\mathsf{Tree}_{\mathcal D,a}(\mathcal B)$.

For a fixed rooted instance $(\mathcal D,a)$, we suppress its dependence in all tree-indexed notation.
For $\mathcal T\in\mathfrak T_{\mathcal D,a}$, let $\mathcal R(\mathcal T)$ be the set
rejected by the procedure with input $\mathcal T$, and set
$\mathcal F(\mathcal T)=\mathcal D\setminus(\mathcal T\cup \mathcal R(\mathcal T))$. These are the constraints
that are never examined in this run. Let $\mathcal F_1(\mathcal T)$
be those constraints in $\mathcal F(\mathcal T)$ which are adjacent to a member of $\mathcal T$, and write
\[
\mathcal D_{\mathcal T}=\mathcal D\setminus N_{\mathcal D}^2[\mathcal T]
\]
for the constraints in $\mathcal{D}$ at distance greater than two from $\mathcal T$. The following properties of this selection procedure are proved in
\cite[Lemmas~3.2 and~3.4, and the proof of Lemma~4.2]
{liuWangYinZhangZhou2026}. We repeat the proof here for the reader's convenience.

\begin{lemma}\label{lem:canonical}
The possible outputs of the selection procedure are exactly the
elements of $\mathfrak T_{\mathcal D,a}$. For each such $\mathcal T$,
\[
\{\mathcal B\subseteq\mathcal D:a\in \mathcal B,
  \mathsf{Tree}_{\mathcal D,a}(\mathcal B)=\mathcal T\}
=\{\mathcal T\cup \mathcal W:\mathcal W\subseteq \mathcal F(\mathcal T)\}.
\]
Moreover,
\[
\mathcal F(\mathcal T)=\mathcal F_1(\mathcal T)\sqcup\mathcal D_{\mathcal T}.
\]
\end{lemma}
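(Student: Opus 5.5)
The proof treats the selection procedure as a deterministic exploration whose trajectory is controlled only by membership queries ``is $b\in\mathcal B$?'' on examined constraints. The three assertions follow from an invariant on $\mathcal T_{\mathrm{curr}}$ and a ``same answers, same run'' argument.

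\emph{Outputs are exactly $\mathfrak T_{\mathcal D,a}$.} We first show by induction on the steps that $\mathcal T_{\mathrm{curr}}$ is always independent in $G_{\mathcal D}$ and connected in $G^2_{\mathcal D}$, and that it contains $a$. The base case is $\{a\}$. At each step we add only a constraint at distance exactly two from $\mathcal T_{\mathrm{curr}}$. Such a constraint is not adjacent to any member, so independence is preserved, and it is within distance two of some member, so $G^2$-connectivity is preserved. Hence every output lies in $\mathfrak T_{\mathcal D,a}$.

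Conversely, given $\mathcal T\in\mathfrak T_{\mathcal D,a}$, run the procedure on $\mathcal B=\mathcal T$; it accepts exactly the examined elements of $\mathcal T$. Suppose some $t\in\mathcal T$ is never accepted. Because $\mathcal T$ is connected in $G^2_{\mathcal D}$, there is a path in $G^2_{\mathcal D}$ inside $\mathcal T$ from $a$ to $t$. Take the first vertex $t'$ on this path not in the final set. Its predecessor on the path is in the final set and lies within distance two of $t'$. Since $\mathcal T$ is independent, that distance is exactly two (it cannot be one, and $t'\notin$ the final set rules out zero). Since $t'$ is never accepted, and it would be accepted if examined, $t'$ was never examined. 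So at termination $t'$ is an unexamined constraint at distance exactly two, which contradicts the stopping rule. Therefore the output is $\mathcal T$.

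\emph{Fiber characterization.} The key observation is that the run depends on $\mathcal B$ only through the answers on examined constraints. Let $\mathcal T$ be the output on input $\mathcal T$. The run examines exactly $\mathcal T\setminus\{a\}$ and $\mathcal R(\mathcal T)$. If $\mathcal B=\mathcal T\cup\mathcal W$ with $\mathcal W\subseteq\mathcal F(\mathcal T)$, then $\mathcal B$ gives the same answers on every examined constraint, since $\mathcal W$ avoids $\mathcal T\cup\mathcal R(\mathcal T)$. By induction on the steps, the two runs coincide, so the output is again $\mathcal T$.

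Conversely, suppose $\mathsf{Tree}(\mathcal B)=\mathcal T$, with rejected set $\mathcal R'$. Then $\mathcal B\supseteq\mathcal T$ and $\mathcal B\cap\mathcal R'=\emptyset$. The run on $\mathcal B$ gives the same answers as the run on $\mathcal T$ on all examined constraints: it answers yes exactly on $\mathcal T$, and the run on $\mathcal T$ answers yes exactly on $\mathcal T$. Here I need care. The two runs are both determined step by step by (current accepted set, current rejected set, answer to the next examined constraint). Comparing them inductively, at each step the same $b$ is examined, and $b\in\mathcal B$ holds iff $b\in\mathcal T$, because the run on $\mathcal B$ outputs exactly $\mathcal T$. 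So the runs are identical, which gives $\mathcal R'=\mathcal R(\mathcal T)$ and $\mathcal B\setminus\mathcal T\subseteq\mathcal F(\mathcal T)$.

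\emph{Decomposition of $\mathcal F$.} By definition, $\mathcal F_1$ and $\mathcal F\setminus\mathcal F_1$ partition $\mathcal F$, so it suffices to show two things. First, $\mathcal F\setminus\mathcal F_1=\mathcal D_{\mathcal T}$. An element of $\mathcal F$ is not in $\mathcal T$. If it is not adjacent to $\mathcal T$, it cannot be at distance exactly two from $\mathcal T$, since at termination every such constraint has been examined and so lies in $\mathcal T\cup\mathcal R$. Hence it is at distance greater than two, i.e.\ in $\mathcal D_{\mathcal T}$. Second, $\mathcal D_{\mathcal T}\subseteq\mathcal F$. Every examined constraint was at distance exactly two from a subset of $\mathcal T$, hence at distance at most two from $\mathcal T$. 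So no constraint at distance greater than two is ever examined, and $\mathcal D_{\mathcal T}\cap(\mathcal T\cup\mathcal R)=\emptyset$. Finally, $\mathcal F_1\cap\mathcal D_{\mathcal T}=\emptyset$ because adjacent constraints are at distance one.

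The main subtlety is the converse direction of the fiber characterization. The examination order depends on the evolving $\mathcal T_{\mathrm{curr}}$, so the step-by-step coupling of the two runs must be set up carefully; the rest is bookkeeping with distances.
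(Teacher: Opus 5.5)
Your proof is correct and follows the paper's argument. Both use the invariant that only distance-two constraints are accepted, the same $G^2_{\mathcal D}$-connectivity contradiction (your path argument is the paper's ``some $b\in\mathcal T\setminus\mathcal T_{\mathrm{curr}}$ at distance at most two''), the same ``identical decisions force identical runs'' coupling for the fibers, and the same distance bookkeeping for $\mathcal F(\mathcal T)=\mathcal F_1(\mathcal T)\sqcup\mathcal D_{\mathcal T}$.
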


\begin{proof}
Initially, the accepted set is $\{a\}$. A constraint $b$ is added
to the current accepted set $\mathcal T_{\mathrm{curr}}$ only when
$\dist_{G_{\mathcal D}}(b,\mathcal T_{\mathrm{curr}})=2$. Thus $\mathcal T_{\mathrm{curr}}$ remains independent in
$G_{\mathcal D}$ and connected in $G_{\mathcal D}^2$, so the
output is a rooted $2$-tree.

Conversely, take $\mathcal T\in\mathfrak T_{\mathcal D,a}$ as the input.
Only constraints in $\mathcal T$ can be accepted. Suppose that the procedure
terminates with an accepted set $\mathcal T_{\mathrm{curr}}\subsetneq \mathcal T$. Since $\mathcal T$ is
connected in $G_{\mathcal D}^2$, some $b\in \mathcal T\setminus \mathcal T_{\mathrm{curr}}$ is at
distance at most two from $\mathcal T_{\mathrm{curr}}$ in $G_{\mathcal D}$. Independence
of $\mathcal T$ in $G_{\mathcal D}$ makes this distance exactly two.
The constraint $b$ cannot have been rejected, since $b$ belongs
to the input $\mathcal T$. Thus $b$ is still eligible for examination,
contradicting termination. Hence
$\mathsf{Tree}_{\mathcal D,a}(\mathcal T)=\mathcal T$.

Now consider an input $\mathcal B$ with
$\mathsf{Tree}_{\mathcal D,a}(\mathcal B)=\mathcal T$. Every examined constraint is
accepted exactly when the constraint belongs to $\mathcal T$. These are
also the decisions made with input $\mathcal T$. Since the next constraint
to be examined is determined by the accepted and rejected sets,
the runs on $\mathcal B$ and $\mathcal T$ coincide. Hence,
$\mathcal T\subseteq \mathcal B$ and $\mathcal B\cap \mathcal R(\mathcal T)=\emptyset$, so
$\mathcal B=\mathcal T\cup \mathcal W$ for some $\mathcal W\subseteq \mathcal F(\mathcal T)$.
Conversely, for any $\mathcal W\subseteq \mathcal F(\mathcal T)$, the inputs $\mathcal T$ and
$\mathcal T\cup \mathcal W$ agree on every constraint examined in the run on $\mathcal T$.
Both runs therefore make the same decisions and return $\mathcal T$.
This proves the asserted description of the sets producing $\mathcal T$.

Finally, when the procedure terminates on input $\mathcal T$, no
unexamined constraint is at distance exactly two from $\mathcal T$.
Since the accepted set is always contained in $\mathcal T$, a constraint
at distance greater than two from $\mathcal T$ can never be examined.
The unexamined constraints at distance one are precisely
$\mathcal F_1(\mathcal T)$, and those at distance greater than two are precisely
$\mathcal D_{\mathcal T}$. Therefore
$\mathcal F(\mathcal T)=\mathcal F_1(\mathcal T)\sqcup\mathcal D_{\mathcal T}$.
\end{proof}

For $\mathcal T\in\mathfrak T_{\mathcal D,a}$, let
\[
E_{\mathcal T}=\left(\bigcap_{b\in \mathcal T}B_b\right)
\cap\left(\bigcap_{b\in \mathcal F_1(\mathcal T)}B_b^c\right).
\]
Thus $E_{\mathcal T}$ is the event that every constraint in $\mathcal T$ is violated
and every constraint in $\mathcal F_1(\mathcal T)$ is satisfied. The following
identity follows from the inclusion--exclusion argument of
LWYZZ \cite{liuWangYinZhangZhou2026}.

\begin{theorem}\label{thm:two-tree}
For every $a\in\mathcal D\subseteq\mathcal C$, the following
identity holds pointwise on $\Omega$.
\begin{equation}\label{eq:pointwise-expansion}
\mathbf1_{B_a}\mathbf1_{\Good(\mathcal D^-)}
=\sum_{\mathcal T\in\mathfrak T_{\mathcal D,a}}
(-1)^{|\mathcal T|-1}\mathbf1_{E_{\mathcal T}}\mathbf1_{\Good(\mathcal D_{\mathcal T})}.
\end{equation}
\end{theorem}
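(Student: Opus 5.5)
We start from the inclusion--exclusion identity displayed before the selection procedure,
\[
\mathbf1_{B_a}\mathbf1_{\Good(\mathcal D^-)}
=\sum_{\substack{\mathcal B\subseteq\mathcal D\\a\in\mathcal B}}(-1)^{|\mathcal B|-1}\prod_{b\in\mathcal B}\mathbf1_{B_b},
\]
which holds pointwise. It follows from writing $\mathbf1_{\Good(\mathcal D^-)}=\prod_{b\in\mathcal D^-}(1-\mathbf1_{B_b})$ and expanding the product.

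We then partition the index set according to the value of $\mathsf{Tree}_{\mathcal D,a}(\mathcal B)$. By \cref{lem:canonical}, the possible outputs are exactly the elements of $\mathfrak T_{\mathcal D,a}$. The fibre over $\mathcal T$ is $\{\mathcal T\cup\mathcal W:\mathcal W\subseteq\mathcal F(\mathcal T)\}$, and this union is disjoint since $\mathcal F(\mathcal T)\cap\mathcal T=\emptyset$. Hence the right-hand side becomes
\[
\sum_{\mathcal T\in\mathfrak T_{\mathcal D,a}}(-1)^{|\mathcal T|-1}\prod_{b\in\mathcal T}\mathbf1_{B_b}
\sum_{\mathcal W\subseteq\mathcal F(\mathcal T)}(-1)^{|\mathcal W|}\prod_{b\in\mathcal W}\mathbf1_{B_b}.
\]
The inner sum collapses back into a product:
\[
\sum_{\mathcal W\subseteq\mathcal F(\mathcal T)}(-1)^{|\mathcal W|}\prod_{b\in\mathcal W}\mathbf1_{B_b}=\prod_{b\in\mathcal F(\mathcal T)}(1-\mathbf1_{B_b})=\mathbf1_{\Good(\mathcal F(\mathcal T))}.
\]

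Next we use the decomposition $\mathcal F(\mathcal T)=\mathcal F_1(\mathcal T)\sqcup\mathcal D_{\mathcal T}$ from \cref{lem:canonical}. It splits $\mathbf1_{\Good(\mathcal F(\mathcal T))}$ as $\prod_{b\in\mathcal F_1(\mathcal T)}\mathbf1_{B_b^c}\cdot\mathbf1_{\Good(\mathcal D_{\mathcal T})}$. Combining this with $\prod_{b\in\mathcal T}\mathbf1_{B_b}$ gives exactly $\mathbf1_{E_{\mathcal T}}\mathbf1_{\Good(\mathcal D_{\mathcal T})}$, which proves \eqref{eq:pointwise-expansion}.

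There is no real obstacle, since \cref{lem:canonical} carries the combinatorial content. The only care needed is the sign bookkeeping $(-1)^{|\mathcal T\cup\mathcal W|-1}=(-1)^{|\mathcal T|-1}(-1)^{|\mathcal W|}$, which relies on the disjointness $\mathcal W\cap\mathcal T=\emptyset$. The reindexing is legitimate because every $\mathcal B$ containing $a$ yields a well-defined output tree.
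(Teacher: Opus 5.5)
Your proposal is correct and follows the paper's proof essentially line for line. Both group the inclusion--exclusion terms by the selected tree, use \cref{lem:canonical} to identify each fibre as $\{\mathcal T\cup\mathcal W:\mathcal W\subseteq\mathcal F(\mathcal T)\}$, collapse the inner alternating sum to $\prod_{b\in\mathcal F(\mathcal T)}\mathbf1_{B_b^c}$, and then split this product using $\mathcal F(\mathcal T)=\mathcal F_1(\mathcal T)\sqcup\mathcal D_{\mathcal T}$; your explicit checks of the sign bookkeeping and of $\mathcal W\cap\mathcal T=\emptyset$ are details the paper leaves implicit.
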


\begin{proof}
Group the terms of the inclusion--exclusion formula according to
the selected tree $\mathcal T$. By \cref{lem:canonical}, the sets producing
$\mathcal T$ are exactly $\mathcal T\cup \mathcal W$ with $\mathcal W\subseteq \mathcal F(\mathcal T)$. The corresponding
terms sum to
\begin{align*}
&(-1)^{|\mathcal T|-1}\prod_{b\in \mathcal T}\mathbf1_{B_b}
\sum_{\mathcal W\subseteq \mathcal F(\mathcal T)}(-1)^{|\mathcal W|}
\prod_{b\in \mathcal W}\mathbf1_{B_b}\\
&\qquad=(-1)^{|\mathcal T|-1}\prod_{b\in \mathcal T}\mathbf1_{B_b}
\prod_{b\in \mathcal F(\mathcal T)}\mathbf1_{B_b^c}\\
&\qquad=(-1)^{|\mathcal T|-1}\mathbf1_{E_{\mathcal T}}
\mathbf1_{\Good(\mathcal D_{\mathcal T})},
\end{align*}
where the last equality uses
$\mathcal F(\mathcal T)=\mathcal F_1(\mathcal T)\sqcup\mathcal D_{\mathcal T}$.
Summing over $\mathcal T$ proves \eqref{eq:pointwise-expansion}.
\end{proof}

For a family of constraints $\mathcal A$, write
$\vbl(\mathcal A)=\bigcup_{b\in\mathcal A}\vbl(b)$.
The event $E_{\mathcal T}$ depends only on the variables in
\[
U_{\mathcal T}=\vbl(N_{\mathcal D}[\mathcal T]).
\]
We also regard $E_{\mathcal T}$ as a subset of $\Omega_{U_{\mathcal T}}$ and write
\[\theta_{\mathcal T}=\nu_{U_{\mathcal T}}(E_{\mathcal T})=\nu(E_{\mathcal T}).\]

A constraint in $\mathcal D_{\mathcal T}$ cannot involve a variable in
$U_{\mathcal T}$, since such a constraint would be at distance at most two
from $\mathcal T$. Under $\mu_{\mathcal D_{\mathcal T}}$, the coordinates in $U_{\mathcal T}$
therefore have law $\nu_{U_{\mathcal T}}$ and are independent of the
remaining coordinates. In particular,
\[
\vbl(\mathcal D_{\mathcal T})\cap U_{\mathcal T}=\emptyset,
\qquad
\mu_{\mathcal D_{\mathcal T}}(E_{\mathcal T})=\theta_{\mathcal T}.
\]
Taking expectations in \eqref{eq:pointwise-expansion} and dividing
by $Z_{\mathcal D^-}$ gives the expansion of LWYZZ
\cite[Lemma~5.1]{liuWangYinZhangZhou2026},
\begin{equation}\label{eq:scalar-recursion}
r_{\mathcal D,a}
=\sum_{\mathcal T\in\mathfrak T_{\mathcal D,a}}
(-1)^{|\mathcal T|-1}\theta_{\mathcal T}
\frac{Z_{\mathcal D_{\mathcal T}}}{Z_{\mathcal D^-}}.
\end{equation}
Since the constraints in $\mathcal T$ involve disjoint variables,
\[
\theta_{\mathcal T}
\leq\nu\left(\bigcap_{b\in \mathcal T}B_b\right)
=\prod_{b\in \mathcal T}p_b.
\]

To pass from $\mathcal D_{\mathcal T}$ to $\mathcal D^-$, list
the constraints in $N_{\mathcal D}^2[\mathcal T]\setminus\{a\}$ in the
induced order as $b_{\mathcal T,1},\ldots,b_{\mathcal T,\ell(\mathcal T)}$, and set
\[
\mathcal D_{\mathcal T,i}
=\mathcal D_{\mathcal T}\cup\{b_{\mathcal T,1},\ldots,b_{\mathcal T,i}\}
\qquad(0\leq i\leq \ell(\mathcal T)).
\]
Thus $\mathcal D_{\mathcal T,0}=\mathcal D_{\mathcal T}$ and
$\mathcal D_{\mathcal T,\ell(\mathcal T)}=\mathcal D^-$. Adding these constraints one at a time gives
\[
\frac{Z_{\mathcal D_{\mathcal T}}}{Z_{\mathcal D^-}}
=\prod_{i=1}^{\ell(\mathcal T)}
\frac{1}{1-r_{\mathcal D_{\mathcal T,i},b_{\mathcal T,i}}}.
\]
Substituting into \eqref{eq:scalar-recursion}, we obtain
\begin{equation}\label{eq:scalar-product}
r_{\mathcal D,a}
=\sum_{\mathcal T\in\mathfrak T_{\mathcal D,a}}
(-1)^{|\mathcal T|-1}\theta_{\mathcal T}
\prod_{i=1}^{\ell(\mathcal T)}
\frac{1}{1-r_{\mathcal D_{\mathcal T,i},b_{\mathcal T,i}}}.
\end{equation}
Every family $\mathcal D_{\mathcal T,i}$ is contained in
$\mathcal D^-$, so the probabilities on the
right-hand side involve strictly fewer constraints.

\subsubsection{Conditional distributions}

We next use \eqref{eq:pointwise-expansion} to derive a recursion
for $\mu_{\mathcal D^-}-\mu_{\mathcal D}$. Fix $a\in\mathcal D\subseteq\mathcal C$. For
$\mathcal T\in\mathfrak T_{\mathcal D,a}$, let
\[
\lambda_{\mathcal T}=\lambda_{\mathcal D,a}(\mathcal T)
=(-1)^{|\mathcal T|-1}\frac{Z_{\mathcal D_{\mathcal T}}}{Z_{\mathcal D}}.
\]

\begin{proposition}\label{prop:law-recursion}
For every $a\in\mathcal D\subseteq\mathcal C$,
\[
\begin{aligned}
\mu_{\mathcal D^-} -\mu_{\mathcal D}
={}&\sum_{\mathcal T\in\mathfrak T_{\mathcal D,a}}
\lambda_{\mathcal T}(\mathbf1_{E_{\mathcal T}}-\theta_{\mathcal T})\mu_{\mathcal D_{\mathcal T}}\\
&+\sum_{\mathcal T\in\mathfrak T_{\mathcal D,a}}
\lambda_{\mathcal T}\theta_{\mathcal T}
\sum_{i=1}^{\ell(\mathcal T)}
\bigl(\mu_{\mathcal D_{\mathcal T,i-1}}-\mu_{\mathcal D_{\mathcal T,i}}\bigr).
\end{aligned}
\]
\end{proposition}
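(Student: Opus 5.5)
The plan is to multiply the pointwise identity \eqref{eq:pointwise-expansion} by $\nu$, normalize by $Z_{\mathcal D}$, and rearrange. Everything is read as an identity of signed measures on $\Omega$, where $\mathbf1_{E}\mu$ denotes the measure with density $\mathbf1_E$ with respect to $\mu$.

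First, I rewrite the left-hand side. Since $\mathbf1_{\Good(\mathcal D^-)}=\mathbf1_{\Good(\mathcal D)}+\mathbf1_{B_a}\mathbf1_{\Good(\mathcal D^-)}$, multiplying by $\nu$ gives
\[
Z_{\mathcal D^-}\mu_{\mathcal D^-}-Z_{\mathcal D}\mu_{\mathcal D}=(\mathbf1_{B_a}\mathbf1_{\Good(\mathcal D^-)})\,\nu .
\]
Applying \eqref{eq:pointwise-expansion} and using $\mathbf1_{\Good(\mathcal D_{\mathcal T})}\nu=Z_{\mathcal D_{\mathcal T}}\mu_{\mathcal D_{\mathcal T}}$, then dividing by $Z_{\mathcal D}$, yields
\[
\frac{Z_{\mathcal D^-}}{Z_{\mathcal D}}\mu_{\mathcal D^-}-\mu_{\mathcal D}
=\sum_{\mathcal T}\lambda_{\mathcal T}\mathbf1_{E_{\mathcal T}}\mu_{\mathcal D_{\mathcal T}}.
\]
Taking total masses, and using $\mu_{\mathcal D_{\mathcal T}}(E_{\mathcal T})=\theta_{\mathcal T}$ (which holds because $U_{\mathcal T}$ is disjoint from $\vbl(\mathcal D_{\mathcal T})$), gives the scalar identity
\[
\frac{Z_{\mathcal D^-}}{Z_{\mathcal D}}-1=\sum_{\mathcal T}\lambda_{\mathcal T}\theta_{\mathcal T}.
\]
This is equivalent to \eqref{eq:scalar-recursion}.

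Next, I subtract $\bigl(\frac{Z_{\mathcal D^-}}{Z_{\mathcal D}}-1\bigr)\mu_{\mathcal D^-}=\sum_{\mathcal T}\lambda_{\mathcal T}\theta_{\mathcal T}\mu_{\mathcal D^-}$ from both sides. This gives
\[
\mu_{\mathcal D^-}-\mu_{\mathcal D}=\sum_{\mathcal T}\lambda_{\mathcal T}\bigl[\mathbf1_{E_{\mathcal T}}\mu_{\mathcal D_{\mathcal T}}-\theta_{\mathcal T}\mu_{\mathcal D^-}\bigr].
\]
I then split each bracket as
\[
(\mathbf1_{E_{\mathcal T}}-\theta_{\mathcal T})\mu_{\mathcal D_{\mathcal T}}+\theta_{\mathcal T}(\mu_{\mathcal D_{\mathcal T}}-\mu_{\mathcal D^-}).
\]
Finally, I telescope $\mu_{\mathcal D_{\mathcal T}}-\mu_{\mathcal D^-}=\sum_{i=1}^{\ell(\mathcal T)}(\mu_{\mathcal D_{\mathcal T,i-1}}-\mu_{\mathcal D_{\mathcal T,i}})$, using $\mathcal D_{\mathcal T,0}=\mathcal D_{\mathcal T}$ and $\mathcal D_{\mathcal T,\ell(\mathcal T)}=\mathcal D^-$. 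All partition functions involved are positive by \cref{lem:deletion}, so every conditional measure is well defined.

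There is no real obstacle here. The only point needing care is the bookkeeping of normalizations: the factor $Z_{\mathcal D^-}/Z_{\mathcal D}$ on $\mu_{\mathcal D^-}$ must be absorbed by the total-mass identity, rather than by dividing by $Z_{\mathcal D^-}$ as in \eqref{eq:scalar-recursion}. That choice is exactly why $\lambda_{\mathcal T}$ carries $Z_{\mathcal D}$ in its denominator.
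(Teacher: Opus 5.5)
Your proof is correct and takes essentially the same route as the paper. The paper writes $\mu_{\mathcal D^-}-\mu_{\mathcal D}=\frac{\mathbf1_{B_a}\mu_{\mathcal D^-}}{1-r}-\frac{r}{1-r}\mu_{\mathcal D^-}$ by conditioning on $B_a^c$, which is your unnormalized identity rewritten using $Z_{\mathcal D^-}/Z_{\mathcal D}=1/(1-r)$; from there it takes total masses to get $\sum_{\mathcal T}\lambda_{\mathcal T}\theta_{\mathcal T}=r/(1-r)$, subtracts, splits each bracket, and telescopes exactly as you do.
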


\begin{proof}
Let $r:=r_{\mathcal D,a}$. Since $\mu_{\mathcal D}$ is $\mu_{\mathcal D^-}$
conditioned on $B_a^c$,
\[
\mu_{\mathcal D^-} -\mu_{\mathcal D}
=\frac{\mathbf1_{B_a}\mu_{\mathcal D^-}}{1-r}
-\frac{r}{1-r}\mu_{\mathcal D^-}.
\]
By \eqref{eq:pointwise-expansion} and the definition of $\lambda_{\mathcal T}$,
\[
\frac{\mathbf1_{B_a}\mu_{\mathcal D^-}}{1-r}
=\sum_{\mathcal T}\lambda_{\mathcal T}\mathbf1_{E_{\mathcal T}}\mu_{\mathcal D_{\mathcal T}}.
\]
Since $\mu_{\mathcal D^-}(B_a)=r$ and
$\mu_{\mathcal D_{\mathcal T}}(E_{\mathcal T})=\theta_{\mathcal T}$, we obtain
\[
\sum_{\mathcal T}\lambda_{\mathcal T}\theta_{\mathcal T}=\frac{r}{1-r}.
\]
Thus,
\begin{align*}
\mu_{\mathcal D^-} -\mu_{\mathcal D}
&=\sum_{\mathcal T}\lambda_{\mathcal T}
\bigl(\mathbf1_{E_{\mathcal T}}\mu_{\mathcal D_{\mathcal T}}-\theta_{\mathcal T}\mu_{\mathcal D^-}\bigr)\\
&= \sum_{\mathcal T}\lambda_{\mathcal T}
\bigl(\mathbf1_{E_{\mathcal T}} - \theta_{\mathcal T} \bigr)\mu_{\mathcal D_{\mathcal T}} + \sum_{\mathcal T} \lambda_{\mathcal T} \theta_{\mathcal T}\bigl( \mu_{\mathcal{D}_{\mathcal{T}}} - \mu_{\mathcal D^-}\bigr)
\end{align*}

Since $\mathcal D_{\mathcal T,0}=\mathcal D_{\mathcal T}$ and
$\mathcal D_{\mathcal T,\ell(\mathcal T)}=\mathcal D^-$, telescoping gives
\[
\mu_{\mathcal D_{\mathcal T}}-\mu_{\mathcal D^-}
=\sum_{i=1}^{\ell(\mathcal T)}
\bigl(\mu_{\mathcal D_{\mathcal T,i-1}}-\mu_{\mathcal D_{\mathcal T,i}}\bigr).
\]
Substituting this into the preceding equation proves the proposition.
\end{proof}

\section{Estimating conditional probabilities}\label{sec:estimates}

In this section, we will construct unbiased estimates of $r_{\mathcal D,a}$. We then bound
their second moments to control the number of samples needed to
approximate $Z_{\mathcal C}$, and the expected running times
for generating those samples. Our construction follows the randomized
$2$-tree expansion of LWYZZ
\cite[Sections~5.1--5.2]{liuWangYinZhangZhou2026}.

\begin{proposition}\label{prop:marginal-estimator}
For every $a\in\mathcal D\subseteq\mathcal C$, there is an estimator
$\widehat r_{\mathcal D,a}$ such that
\[
\mathbb E\widehat r_{\mathcal D,a}=r_{\mathcal D,a},
\qquad
\mathbb E\widehat r_{\mathcal D,a}^{\,2}\leq3cx_a.
\]
The expected running time to generate $\widehat r_{\mathcal D,a}$ is polynomial in $I$.
\end{proposition}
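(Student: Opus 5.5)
Our plan is to build $\widehat r_{\mathcal D,a}$ by a randomized evaluation of the recursion \eqref{eq:scalar-product}, following LWYZZ, and to argue by induction on $|\mathcal D|$.

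\textbf{The estimator.} The first step is a randomized sum over trees. For each $b\in\mathcal D$ we independently draw $\xi_b\sim\nu_{\vbl(b)}$. We then explore, from $a$, the $2$-trees $\mathcal T\in\mathfrak T_{\mathcal D,a}$ all of whose constraints $b$ are violated by their own sample $\xi_b$. Because the samples are independent and each constraint is tested against its own sample, $\Pr[\text{all } b\in\mathcal T \text{ violated by } \xi_b]=\prod_{b\in\mathcal T}p_b$ exactly. For each surviving $\mathcal T$, we complete $(\xi_b)_{b\in\mathcal T}$ to $z_{\mathcal T}$ on $U_{\mathcal T}$ using fresh product samples and check whether $z_{\mathcal T}\in E_{\mathcal T}$. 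The variables of $\mathcal T$ are disjoint, so $\Pr[z_{\mathcal T}\in E_{\mathcal T}]=\theta_{\mathcal T}$.

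For each retained $\mathcal T$ we then need an unbiased estimate of $\prod_i(1-r_{\mathcal D_{\mathcal T,i},b_{\mathcal T,i}})^{-1}$. We write $(1-r)^{-1}=\sum_{k\ge0}r^k$ and use a randomized truncation: draw $K$ with $\Pr[K\ge k]=\rho^k$, and output $\rho^{-K}\prod_{j=1}^K\widehat r^{(j)}$ with independent recursive copies. Alternatively, we could use the product of independent recursive estimates directly when $r$ is small. The output is
\[
\widehat r_{\mathcal D,a}=\sum_{\mathcal T\ \text{retained}}(-1)^{|\mathcal T|-1}\prod_i\widehat{(1-r_i)^{-1}}.
\]
Unbiasedness follows from independence and the inductive hypothesis, using $\mathbb E\mathbf 1[\text{retained}]=\theta_{\mathcal T}$. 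Recursion terminates because $\mathcal D_{\mathcal T,i}\subsetneq\mathcal D$.

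\textbf{Second moment.} We prove $\mathbb E\widehat r^{\,2}\le 3cx_a$ by induction. The square of the sum expands over pairs $(\mathcal T,\mathcal T')$ of retained trees. We bound the probability that both are retained by $\prod_{b\in\mathcal T\cup\mathcal T'}p_b$, which holds because the samples $\xi_b$ are shared. Here $\mathcal T\cup\mathcal T'$ contains $a$ only once, which gives a single factor $p_a$ rather than $p_a^2$; this is what makes the bound linear in $x_a$.

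By induction, each factor $\widehat{(1-r_i)^{-1}}$ has second moment at most $1+O(cx_{b_i})$, using $\mathbb E\widehat r_i^2\le 3cx_{b_i}$ together with a geometric series. The product over $N^2_{\mathcal D}[\mathcal T]$ is therefore at most $\prod_{b\in N^2[\mathcal T]}(1-x_b)^{-O(1)}$. Mixed terms for $\mathcal T\ne\mathcal T'$ are handled by Cauchy–Schwarz on the products.

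The asymmetric condition \eqref{eq:full-square} then gives $p_b\,(\text{blow-up})\le c\,x_b\prod_{b'\in N^2(b)}(1-x_b')^{1-O(1)}$, and we sum over trees by the Moser–Tardos branching argument. To do so, we enumerate $\mathcal T\cup\mathcal T'$ as a union of $2$-trees rooted at $a$ via a Galton–Watson-type count, in which each child of $b$ is chosen in $N^2(b)$ with weight $x$ and each non-child contributes $(1-x)$. This gives a total bound of $cx_a(1+O(c))\le 3cx_a$ for $c=1/100$.

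\textbf{Running time.} The expected running time is bounded by the same weighted tree sum, applied to the first moments of the number of explored nodes. The main obstacle is choosing the truncation parameter and the decomposition of pair terms so that the constants close the induction exactly at $3cx_a$. In particular, the factors $(1-x_b)^{-O(1)}$ coming from the inductive second moments must be absorbed by the $\prod(1-x_b)$ slack in \eqref{eq:full-square} without exceeding exponent one. We would first try to keep each factor at $(1-cx_b)^{-2}$, which is at most $(1-x_b)^{-2c'}$ for small $c$.
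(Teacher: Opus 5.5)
Your overall architecture matches the paper's: LWYZZ's randomized $2$-tree expansion with one independent local sample per constraint and fresh completions, a pair expansion of $\widehat r^{\,2}$ with Cauchy--Schwarz on the reciprocal products, and a Moser--Tardos branching count in $G_{\mathcal D}^2$. However, the step that carries the induction fails for the reciprocal estimator you wrote.

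With $\Pr[K\geq k]=\rho^k$, the single term $\rho^{-K}\prod_{j=1}^K\widehat r^{(j)}$ has mean $(1-\rho)/(1-r)$, so it is biased. More seriously, no single-term roulette estimator can have second moment $1+O(cx_b)$. If $\widehat R=p_K^{-1}\prod_{j\leq K}\widehat r^{(j)}$ with $p_k=\Pr[K=k]$ and $v=\mathbb E\widehat r^{\,2}$, then Cauchy--Schwarz gives $\mathbb E\widehat R^2=\sum_k v^k/p_k\geq(\sum_k v^{k/2})^2=(1-\sqrt v)^{-2}\geq1+2\sqrt v$. Here $v$ is typically of order $p_b$; it is exactly $p_b$ when $b$ is isolated in $G_{\mathcal E}$, since then $\widehat r$ is just the indicator that $\sigma_b$ violates $b$. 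So each factor costs $1+\Theta(\sqrt{p_b})$, and $\sqrt{p_b}$ is not $O(x_b)$. In the symmetric regime this is a blow-up of $\exp(\Theta(\sqrt c\,\Delta))$ per tree vertex, which the slack $\prod_b(1-x_b)$ in \eqref{eq:full-square} cannot absorb.

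The missing idea is the partial-sum form, which pays no variance for the deterministic leading term $1$. The paper's $\mathsf{Recip}(Y)=\sum_{k=0}^{\mathbf N}4^k\prod_{j\leq k}Y_j$ with $\mathbf N\sim\operatorname{Geom}(3/4)$ has mean $1/(1-r)$ and second moment $\frac{1+r}{(1-r)(1-4v)}$. This is linear in $r$ and $v$, hence at most $j_b=e^{16cx_b}$ once $r\leq cx_b$ and $v\leq 3cx_b$. Your hoped-for factor $(1-cx_b)^{-2}$ is not attainable, since $v$ enters the bound. But any $e^{O(cx_b)}$ with a small constant suffices: $(1-x)e^{16cx}\leq 1-x/2$, and you then run the branching lemma with $z_t=x_t/2$. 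This choice also avoids the factor $x_a/(1-x_a)$ that $z_t=x_t$ would produce. Your suggested alternative of plugging recursive estimates in directly is not unbiased for the nonlinear quantity $(1-r)^{-1}$.

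Your running-time sentence is also too thin to close. Each retained tree spawns $\ell(\mathcal T)=|N^2_{\mathcal D}[\mathcal T]|-1$ reciprocal factors. This count is not controlled by $\prod_{t\in\mathcal T}\rho_t$, since neighbourhoods can be huge when the $x_b$ are tiny, so a uniform bound on the expected number of calls does not propagate through the recursion. The paper instead inducts on the hypothesis that computing $\widehat r_{\mathcal E,b}$ makes at most $1+mx_b$ calls in expectation. Closing this induction needs two ingredients:
\begin{enumerate}
\item the bound $\ell(\mathcal T)\leq m$;
\item the extra estimate $\sum_{\mathcal T}\ell_x(\mathcal T)\prod_{t\in\mathcal T}\rho_t<10cx_a$, where $\ell_x(\mathcal T)=\sum_{b\in N^2_{\mathcal D}[\mathcal T]\setminus\{a\}}x_b$. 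This is obtained by tilting $\rho_t$ by $\exp(\frac14\sum_{b\in N_G^2[t]}x_b)$ and using $se^{-s/4}\leq 4/\mathrm e$.
\end{enumerate}
The per-call enumeration cost is handled separately, through the number $A$ of active connected sets containing $a$ in $G^2_{\mathcal D}$, with $\mathbb EA\leq 2cx_a$.
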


\subsection{The estimator} We use the reciprocal estimator of LWYZZ
\cite[Lemma~5.5]{liuWangYinZhangZhou2026}, with continuation
probability $1/4$. We write $\operatorname{Geom}(p)$ for the distribution on
$\{0,1,\ldots\}$ with mass $p(1-p)^k$ at $k$.

\begin{lemma}\label{lem:roulette}
Let $Y_1,Y_2,\ldots$ be independent copies of a real random variable
$Y$ with mean $r\geq0$ and second moment $v<1/4$. Independently choose
an integer $\mathbf N \sim \operatorname{Geom}(3/4)$, and set
\[
\mathsf{Recip}(Y)
=\sum_{k=0}^{\mathbf N}4^k\prod_{j=1}^kY_j.
\]
Then
\[
\mathbb E\mathsf{Recip}(Y)=\frac1{1-r},
\qquad
\mathbb E\mathsf{Recip}(Y)^2
=\frac{1+r}{(1-r)(1-4v)},
\qquad
\mathbb E\mathbf N=\frac13.
\]
\end{lemma}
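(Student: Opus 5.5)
The plan is to compute the first and second moments of $\mathsf{Recip}(Y)$ by conditioning on $\mathbf N$, exchanging sums via Tonelli/Fubini. The key fact is $\Pr(\mathbf N\ge k)=(1/4)^k$. This holds because $\mathbf N\sim\operatorname{Geom}(3/4)$ has $\Pr(\mathbf N=k)=\tfrac34(1/4)^k$, and summing the geometric tail gives $(1/4)^k$. The mean of $\mathbf N$ is then immediate: $\mathbb E\mathbf N=\sum_{k\ge1}\Pr(\mathbf N\ge k)=\sum_{k\ge1}4^{-k}=1/3$.

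\textbf{First moment.} Write $P_k=\prod_{j=1}^kY_j$, so that
\[
\mathsf{Recip}(Y)=\sum_{k\ge0}\mathbf1_{\{\mathbf N\ge k\}}4^kP_k.
\]
Since $\mathbf N$ is independent of the $Y_j$, we get $\mathbb E[\mathbf1_{\{\mathbf N\ge k\}}4^kP_k]=r^k$. Summing gives $1/(1-r)$, provided we may exchange sum and expectation. This requires $r<1$, which follows from $r^2\le v<1/4$, so $r<1/2$. For the exchange, absolute convergence follows from $\mathbb E|P_k|=(\mathbb E|Y|)^k\le v^{k/2}<2^{-k}$ together with Fubini.

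\textbf{Second moment.} Expand the square as
\[
\mathsf{Recip}(Y)^2=\sum_{k,l\ge0}\mathbf1_{\{\mathbf N\ge\max(k,l)\}}4^{k+l}P_kP_l.
\]
For $k\le l$ we have $P_kP_l=P_k^2\prod_{j=k+1}^lY_j$, so $\mathbb E[P_kP_l]=v^kr^{l-k}$. Together with $\Pr(\mathbf N\ge l)=4^{-l}$, each term has expectation $4^k v^k r^{l-k}=(4v)^kr^{l-k}$. Summing the diagonal terms gives $1/(1-4v)$. The off-diagonal terms with $k<l$, counted twice, give $2\cdot\frac{1}{1-4v}\cdot\frac{r}{1-r}$. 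The total is
\[
\frac{1}{1-4v}\Bigl(1+\frac{2r}{1-r}\Bigr)=\frac{1+r}{(1-r)(1-4v)}.
\]

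\textbf{Main obstacle.} The only delicate point is justifying the interchange in the double sum. I would bound the absolute series instead: $\mathbb E|P_kP_l|\le v^k(\mathbb E|Y|)^{l-k}\le v^k v^{(l-k)/2}$, so the absolute terms are at most $(4v)^k(\sqrt v)^{l-k}$. This is summable because $4v<1$ and $\sqrt v<1/2$. With that bound, Fubini applies and the formal computation above is valid.
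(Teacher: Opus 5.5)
Your proof is correct and complete. The tail identity $\mathbb P(\mathbf N\ge k)=4^{-k}$, the moment formula $\mathbb E[P_kP_l]=v^kr^{l-k}$ for $k\le l$, the bound $r\le\sqrt v<1/2$, and the absolute bound $(4v)^k(\sqrt v)^{l-k}$ that justifies Fubini are all right, and the algebra gives exactly the stated formulas. The paper gives no proof of its own here---it imports the lemma from LWYZZ \cite[Lemma~5.5]{liuWangYinZhangZhou2026}---so your direct computation serves as a self-contained substitute for that citation.
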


We define $\widehat r_{\mathcal D,a}$ by the recursive procedure
in \cref{alg:estimate-marginal}. Since
$\mathcal D_{\mathcal T,i}\subseteq\mathcal D^-$, every recursive
call involves fewer constraints.

For each $b\in\mathcal D$, independently draw
$\sigma_b\sim\nu_{\vbl(b)}$, and call $b$ active if $\sigma_b$
violates $b$. For each $\mathcal T\in\mathfrak T_{\mathcal D,a}$, the sets
$\vbl(t)$, $t\in \mathcal T$, are disjoint, so the assignments $\sigma_t$
combine to give an assignment on $\vbl(\mathcal T)$. Complete this
assignment to $U_{\mathcal T}$ by sampling the remaining coordinates from
their product distribution, using fresh randomness for each $\mathcal T$.
Denote the resulting assignment by $\sigma_{\mathcal T}$. For every fixed
$\mathcal T$, we have $\sigma_{\mathcal T}\sim\nu_{U_{\mathcal T}}$, and hence
\[
\mathbb P(\sigma_{\mathcal T}\in E_{\mathcal T})=\theta_{\mathcal T}.
\]
Thus $\mathbf1_{E_{\mathcal T}}(\sigma_{\mathcal T})$ is an unbiased estimate of
$\theta_{\mathcal T}$.

If any constraint in $\mathcal T$ is inactive, then $\sigma_{\mathcal T}\notin E_{\mathcal T}$,
regardless of the completion. We therefore only generate
$\sigma_{\mathcal T}$ for trees whose constraints are all active. In
particular, if $a$ is inactive, the estimator is zero.

The remaining factors in \eqref{eq:scalar-product} are
reciprocals of the form $(1-r_{\mathcal E,b})^{-1}$, with
$\mathcal E\subsetneq\mathcal D$. For each such factor, use
independent recursive calls to generate copies of
$\widehat r_{\mathcal E,b}$, and apply \cref{lem:roulette}. Write
\[
\widehat R_{\mathcal E,b}
=\mathsf{Recip}(\widehat r_{\mathcal E,b}).
\]
For each pair $(\mathcal T,i)$, let $\widehat R_{\mathcal T,i}$ be a separate copy of
$\widehat R_{\mathcal D_{\mathcal T,i},b_{\mathcal T,i}}$. These copies are
independent of one another and of all the local assignments.
Define
\[
\widehat r_{\mathcal D,a}
=\sum_{\mathcal T\in\mathfrak T_{\mathcal D,a}}
(-1)^{|\mathcal T|-1}\mathbf1_{E_{\mathcal T}}(\sigma_{\mathcal T})
\prod_{i=1}^{\ell(\mathcal T)}
\widehat R_{\mathcal T,i}.
\]
In computing this sum, we generate the reciprocal estimates only
after the test $\sigma_{\mathcal T}\in E_{\mathcal T}$ succeeds. When $\mathcal D=\{a\}$, the only tree is $\{a\}$ and the product
is empty. The estimator is therefore just the indicator that
$\sigma_a$ violates $a$, which gives the base case of the
recursion. We discuss the enumeration of active trees in the
running-time analysis below.

\begin{algorithm}[H]
\caption{$\EstimateMarginal(\mathcal D,a)$}\label{alg:estimate-marginal}
\KwIn{A constraint family $\mathcal D$ and a root $a\in\mathcal D$.}
\KwOut{One realization of $\widehat r_{\mathcal D,a}$.}

Independently draw $\sigma_b\sim\nu_{\vbl(b)}$ for every
$b\in\mathcal D$\;
$\mathcal A\gets\{b\in\mathcal D:\sigma_b\text{ violates }b\}$\;
\lIf{$a\notin \mathcal A$}{\Return $0$}

$\widehat r\gets0$\;
\ForEach{rooted $2$-tree $\mathcal T\in\mathfrak T_{\mathcal D,a}$
with $\mathcal T\subseteq \mathcal A$}{
  Combine the assignments $\sigma_t$, $t\in \mathcal T$, and complete
  them to $\sigma_{\mathcal T}\in\Omega_{U_{\mathcal T}}$ with fresh product samples\;
  \If{$\sigma_{\mathcal T}\in E_{\mathcal T}$}{
    $W\gets1$\;
    \For{$i=1,\ldots,\ell(\mathcal T)$}{
      Independently draw $\mathbf N\sim\operatorname{Geom}(3/4)$\;
      $P\gets1$; $\widehat{R}_i\gets1$\;
      \For{$k=1,\ldots,\mathbf N$}{
        $Y\gets\EstimateMarginal(\mathcal D_{\mathcal T,i},b_{\mathcal T,i})$,
        using fresh randomness\;
        $P\gets (4Y) \cdot P$; $\widehat{R}_i\gets \widehat{R}_i+P$\;
      }
      $W\gets W\,\widehat{R}_i$\;
    }
    $\widehat r\gets \widehat r+(-1)^{|\mathcal T|-1}W$\;
  }
}
\Return $\widehat r$\;
\end{algorithm}
\subsection{Mean and second moment}

We now prove the moment bounds for $\widehat r_{\mathcal D,a}$
in \cref{prop:marginal-estimator}. These bounds ensure that
average of independent copies concentrates around
$r_{\mathcal D,a}$. Because the estimator is recursive, we also
need bounds on the moments of the reciprocal estimates.

Let
\[
h_b=\frac1{1-cx_b},
\qquad
j_b=e^{16cx_b}.
\]
We prove the moment bounds in \cref{prop:marginal-estimator} together with
\[
\mathbb E\widehat R_{\mathcal D,a}
=\frac1{1-r_{\mathcal D,a}}\leq h_a,
\qquad
\mathbb E\widehat R_{\mathcal D,a}^{\,2}\leq j_a.
\]

Squaring the sum defining $\widehat r_{\mathcal D,a}$ gives
terms indexed by pairs of rooted $2$-trees $\mathcal T, \mathcal T'$. The union $\mathcal T\cup \mathcal T'$ is connected in $G_{\mathcal D}^2$, as both
trees contain $a$. To bound the sum over these unions, we use
the following consequence of the branching-process argument of
Moser and Tardos \cite[Section~3]{moserTardos2010}.

\begin{lemma}\label{lem:connected-sets}
Let $H=(V,E)$ be a finite simple graph and let $0<\alpha<1$.
Suppose that $z\in(0,1)^V$ and $w\in[0,\infty)^V$ satisfy
\[
w_v\leq\alpha z_v\prod_{u\in N_H(v)}(1-z_u)
\qquad(v\in V).
\]
Then, for every $a\in V$,
\[
\sum_{\substack{S\subseteq V,\ a\in S\\
                 H[S]\text{ is connected}}}
\prod_{v\in S}w_v
\leq\alpha\frac{z_a}{1-z_a}.
\]
\end{lemma}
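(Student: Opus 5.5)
We prove the bound by induction on $|V|$, in the style of the Moser--Tardos branching-process argument. We first strengthen the statement so that it can be inducted. For $W\subseteq V$ and $a\in W$, let
\[
S_W(a)=\sum_{\substack{S\subseteq W,\ a\in S\\ H[S]\text{ connected}}}\prod_{v\in S}w_v .
\]
The claim to prove is $S_W(a)\leq\alpha z_a/(1-z_a)$ for every $W$ and every $a\in W$. The hypothesis restricts to induced subgraphs, since $N_{H[W]}(v)\subseteq N_H(v)$ and each factor $1-z_u$ is at most $1$. So it suffices to treat $W=V$, assuming the bound for all proper vertex subsets.

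\textbf{Decomposition.} Fix $a$ and a connected $S\ni a$ with $|S|\geq 2$. Remove $a$. Then $S\setminus\{a\}$ splits into connected components $C_1,\dots,C_k$ of $H[V\setminus\{a\}]$, and each component meets $N(a)$. To make the decomposition unique, order $N(a)=\{u_1,\dots,u_d\}$. Assign each component to the least-indexed $u_i$ it contains, and use that $u_i$ as its root. Then:
\begin{itemize}
\item distinct components use distinct roots;
\item a component rooted at $u_i$ avoids $u_1,\dots,u_{i-1}$;
\item a component rooted at $u_i$ is a connected set in $H[V\setminus\{a,u_1,\dots,u_{i-1}\}]$ containing $u_i$.
\end{itemize}
Dropping the requirement that distinct components be disjoint and non-adjacent (all weights are nonnegative, so this only increases the sum) gives
\[
S_V(a)\leq w_a\prod_{i=1}^{d}\bigl(1+S_{V_i}(u_i)\bigr),\qquad V_i=V\setminus\{a,u_1,\dots,u_{i-1}\}.
\]
Each $V_i$ is a proper subset of $V$. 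By induction, $1+S_{V_i}(u_i)\leq 1+\alpha z_{u_i}/(1-z_{u_i})\leq 1/(1-z_{u_i})$, using $\alpha<1$.

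\textbf{Conclusion.} Combining these,
\[
S_V(a)\leq w_a\prod_{u\in N(a)}\frac1{1-z_u}\leq\alpha z_a.
\]
This is at most $\alpha z_a/(1-z_a)$, which closes the induction. The base case $|V|=1$ gives $w_a\leq\alpha z_a$.

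\textbf{Main obstacle.} The delicate step is the decomposition: it must be injective into a product of independent choices, each of which is a rooted connected set in a strictly smaller graph. The ordering trick handles this. If the trick fails for some reason, the fallback is the standard Galton--Watson embedding. Map each $S$ to a spanning tree via BFS from $a$, and bound the number of connected sets by the number of rooted trees weighted by $\prod w_v$. Then compare with a branching process in which each vertex $v$ spawns neighbour $u$ with probability $z_u$, as in Moser--Tardos. The resulting tree has probability $\prod_v (1-z_v)^{-1} z_v\prod_{u\in N(v)}(1-z_u)$, which gives the same bound with the extra factor $1/(1-z_a)$ at the root.
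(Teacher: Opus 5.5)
Your proof is correct, and it takes a genuinely different route from the paper.

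\textbf{The paper's route.} The paper proves the lemma directly by the Moser--Tardos branching-process embedding, which is essentially your fallback. For each connected $S\ni a$ it fixes a spanning tree of $H[S]$ rooted at $a$. It then observes that the multitype Galton--Watson process, in which a vertex of type $v$ has a child of each type $u\in N_H[v]$ independently with probability $z_u$, produces that tree with probability $\frac{1-z_a}{z_a}\prod_{v\in S}z_v\prod_{u\in N_H(v)}(1-z_u)$. These events are disjoint, so the probabilities sum to at most one, and $\alpha^{|S|}\leq\alpha$ finishes the proof.

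\textbf{Your route.} You run a purely combinatorial induction on $|V|$. You delete the root and assign each component of $H[S\setminus\{a\}]$ to its least-indexed neighbour of $a$. (You wrote $H[V\setminus\{a\}]$; it should be $H[S\setminus\{a\}]$.) This gives an injection into a product, hence $S_V(a)\leq w_a\prod_i\bigl(1+S_{V_i}(u_i)\bigr)$. The step $1+\alpha z_u/(1-z_u)\leq 1/(1-z_u)$ then closes the recursion. The decomposition is sound:
\begin{itemize}
\item the components are disjoint;
\item each meets $N_H(a)$ because $H[S]$ is connected;
\item $S$ is recovered as $\{a\}$ together with the union of the components;
\item the hypothesis passes to induced subgraphs because $N_{H[W]}(v)\subseteq N_H(v)$ and each factor $1-z_u$ lies in $(0,1)$.
\end{itemize}

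\textbf{What each buys.} Your argument needs no probability and no choice of spanning trees. It also gives the slightly sharper bound $\alpha z_a$, with no loss of $1/(1-z_a)$ at the root. In the paper's application with $z_t=x_t/2$, this would replace $\alpha x_a/(2-x_a)$ by $\alpha x_a/2$. The paper's route is a short reuse of the standard Moser--Tardos computation. Because it is an embedding into a probability space, it also controls sums over labelled trees whose labels may repeat, which an induction on $|V|$ does not directly handle.
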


\begin{proof}
For each connected set $S$ containing $a$, fix a spanning tree
of $H[S]$ rooted at $a$. Consider a branching process starting
with a vertex of type $a$. A vertex of type $v$ independently
has a child of each type $u\in N_H[v]$ with probability $z_u$.
The probability that the process produces the chosen tree on
$S$ is
\[
\frac{1-z_a}{z_a}
\prod_{v\in S}\left(z_v\prod_{u\in N_H(v)}(1-z_u)\right).
\]
Indeed, multiplying the probabilities of including each child
and excluding all other possible children gives this expression.
The output events for different $S$ are disjoint, so their
probabilities sum to at most one. The hypothesis on $w_v$ and
$\alpha^{|S|}\leq\alpha$ give the result.
\end{proof}

We will also use the following pair bound. If $\mathfrak S$ is any family of connected vertex sets of a graph $H=(V,E)$, each containing $a$, and $w_v\geq0$, then
\[
\sum_{S,S'\in\mathfrak S}\prod_{v\in S\cup S'}w_v
\leq\frac13\sum_{\substack{B\subseteq V,\ a\in B\\H[B]\text{ is connected}}}
\prod_{v\in B}(3w_v).
\]
Indeed, each union $B=S\cup S'$ is connected, and at most $3^{|B|-1}$ ordered pairs have this union: $a$ belongs to both sets, and every other vertex belongs to the first, the second, or both \cite[Lemma~5.6]{liuWangYinZhangZhou2026}.

Define the global weights
\[
\rho_t=p_t\prod_{b\in N_G^2[t]}j_b
\qquad(t\in\mathcal C).
\]

\begin{proof}[Proof of the moment bounds in \cref{prop:marginal-estimator}]
We prove the moment bounds for $\widehat r_{\mathcal D,a}$ and
$\widehat R_{\mathcal D,a}$ simultaneously by induction on
$|\mathcal D|$.
Unbiasedness follows from the induction hypothesis,
independence, and \eqref{eq:scalar-recursion}, as in
\cite[Lemma~5.6]{liuWangYinZhangZhou2026}.

For $\mathcal T\in\mathfrak T_{\mathcal D,a}$, let
\[
{R}_{\mathcal T}=\prod_{i=1}^{\ell(\mathcal T)}
\widehat R_{\mathcal T,i},
\qquad
Y_{\mathcal T}=(-1)^{|\mathcal T|-1}\mathbf1_{E_{\mathcal T}}(\sigma_{\mathcal T}){R}_{\mathcal T}.
\]
Then $\widehat r_{\mathcal D,a}
=\sum_{\mathcal T\in\mathfrak T_{\mathcal D,a}}Y_{\mathcal T}$.

By the induction hypothesis,
\[
\mathbb E{R}_{\mathcal T}^2
\leq\prod_{b\in N_{\mathcal D}^2[\mathcal T]\setminus\{a\}}j_b.
\]
The events $\sigma_{\mathcal T}\in E_{\mathcal T}$ and $\sigma_{\mathcal T'}\in E_{\mathcal T'}$ can
both occur only if every constraint in $\mathcal T\cup \mathcal T'$ is active.
Since the assignments $\sigma_b$ are independent, the joint
probability is at most $\prod_{t\in \mathcal T\cup \mathcal T'}p_t$.
For fixed $\mathcal T,\mathcal T'$, the pair $({R}_{\mathcal T},{R}_{\mathcal T'})$ is independent of
$(\sigma_{\mathcal T},\sigma_{\mathcal T'})$. Hence
\[
\begin{aligned}
|\mathbb E Y_{\mathcal T}Y_{\mathcal T'}|
&=\mathbb P(\sigma_{\mathcal T}\in E_{\mathcal T},\ \sigma_{\mathcal T'}\in E_{\mathcal T'})
  |\mathbb E {R}_{\mathcal T}{R}_{\mathcal T'}|\\
&\leq
\left(\prod_{t\in \mathcal T\cup \mathcal T'}p_t\right)
\sqrt{\mathbb E {R}_{\mathcal T}^2\,\mathbb E {R}_{\mathcal T'}^2}\\
&\leq
\left(\prod_{t\in \mathcal T\cup \mathcal T'}p_t\right)
\prod_{b\in N_{\mathcal D}^2[\mathcal T\cup \mathcal T']}j_b\\
&\leq\prod_{t\in \mathcal T\cup \mathcal T'}\rho_t.
\end{aligned}
\]
Here we used Cauchy--Schwarz and the preceding bounds on
$\mathbb E {R}_{\mathcal T}^2$ and $\mathbb E {R}_{\mathcal T'}^2$. The last inequality
follows from the definition of $\rho_t$ and $j_b\geq1$.

The pair bound following \cref{lem:connected-sets}, applied in $G_{\mathcal D}^2$, gives
\[
\mathbb E\widehat r_{\mathcal D,a}^{\,2}
\leq\frac13
\sum_{\substack{\mathcal B\ni a\\G_{\mathcal D}^2[\mathcal B]\text{ is connected}}}
\prod_{t\in \mathcal B}(3\rho_t).
\]

To apply \cref{lem:connected-sets}, note that
$(1-x)e^{16cx}\leq1-x/2$ for $0<x<1$, since $16c<1/2$.
Using \eqref{eq:full-square} and $e^{16c}<4/3$, we obtain
\begin{equation}\label{eq:common-weight}
\begin{aligned}
\rho_t
&\leq cx_t e^{16cx_t}
\prod_{b\in N_G^2(t)}
(1-x_b)e^{16cx_b}\\
&\leq\frac43cx_t
\prod_{b\in N_{\mathcal D}^2(t)}(1-x_b/2).
\end{aligned}
\end{equation}
Applying \cref{lem:connected-sets} with $H=G_{\mathcal D}^2$,
$w_t=3\rho_t$, $z_t=x_t/2$, and $\alpha=8c$ gives
\begin{equation}\label{eq:pair-sum}
\frac13
\sum_{\substack{\mathcal B\ni a\\G_{\mathcal D}^2[\mathcal B]\text{ is connected}}}
\prod_{t\in \mathcal B}(3\rho_t)
\leq\frac{8c}{3}\frac{x_a}{2-x_a}
\leq\frac{8c}{3}x_a.
\end{equation}
This proves the required second-moment bound.

Since $\mathbb E\widehat r_{\mathcal D,a}^{\,2}
\leq3cx_a<1/4$, we may now apply \cref{lem:roulette} to
$\widehat r_{\mathcal D,a}$. Using $r_{\mathcal D,a}\leq cx_a$,
we obtain
\[
\begin{aligned}
\mathbb E\widehat R_{\mathcal D,a}
&=\frac1{1-r_{\mathcal D,a}}\leq h_a,\\
\mathbb E\widehat R_{\mathcal D,a}^{\,2}
&\leq\frac{1+cx_a}{(1-cx_a)(1-12cx_a)}
\leq e^{16cx_a}=j_a.
\end{aligned}
\]
The last inequality follows from
\[
\log\frac{1+t}{(1-t)(1-12t)}
\leq t+\frac{t}{1-t}+\frac{12t}{1-12t}
\leq16t
\qquad(0\leq t\leq c).
\]
This proves the reciprocal-estimator bounds and completes the
induction. The same estimate also gives $1\leq h_b^2\leq j_b$.
\end{proof}

\subsection{Running time}

Following \cite[Remark~5.2]{liuWangYinZhangZhou2026}, we enumerate
without repetition the connected sets of active constraints
containing $a$ in $G_{\mathcal D}^2$, using polynomial time
in $I$ per set. Retaining those that are independent in
$G_{\mathcal D}$ gives precisely the active rooted $2$-trees.
Let $A$ denote the number of connected sets enumerated.

For each active tree $\mathcal T$, completing $\sigma_{\mathcal T}$ and testing
$E_{\mathcal T}$ take polynomial time in $I$. If the test succeeds, we
compute at most $m$ reciprocal estimates. By \cref{lem:roulette},
each requires an expected constant amount of work apart from
its recursive calls. Thus the expected work within a call,
excluding recursive calls, is bounded by a polynomial in $I$
times $1+\mathbb EA$.

To bound the recursive work, we will prove that computing
$\widehat r_{\mathcal D,a}$ makes at most $1+mx_a$ calls in
expectation, including the initial call. The recursive calls
associated with a tree $\mathcal T$ have roots
$b_{\mathcal T,1},\ldots,b_{\mathcal T,\ell(\mathcal T)}$, so summing their inductive bounds
introduces
\[
\ell_x(\mathcal T)=\sum_{i=1}^{\ell(\mathcal T)}x_{b_{\mathcal T,i}}
=\sum_{b\in N_{\mathcal D}^2[\mathcal T]\setminus\{a\}}x_b.
\]
The following lemma bounds $\mathbb EA$ and the weighted tree
sums needed for this induction. The second-moment bound on $A$ in this lemma will also be used in the sampling analysis.

\begin{lemma}\label{lem:tree-bounds}
For every $a\in\mathcal D\subseteq\mathcal C$, the number $A$
of active connected sets containing $a$ in $G_{\mathcal D}^2$
satisfies
\[
\mathbb EA\leq\mathbb EA^2\leq2cx_a.
\]
Moreover,
\begin{equation}\label{eq:tree-weight-sum}
\begin{aligned}
\sum_{\mathcal T\in\mathfrak T_{\mathcal D,a}}\prod_{t\in \mathcal T}\rho_t
&\leq4cx_a,\\
\sum_{\mathcal T\in\mathfrak T_{\mathcal D,a}}
\ell_x(\mathcal T)\prod_{t\in \mathcal T}\rho_t
&<10cx_a.
\end{aligned}
\end{equation}
\end{lemma}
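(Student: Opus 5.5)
Every claim in the lemma will follow from \cref{lem:connected-sets} applied in $H=G_{\mathcal D}^2$, as in the proof of the moment bounds. I will use either $w_t=\rho_t$ or $w_t=3\rho_t$, together with $z_t=x_t/2$ and the estimate \eqref{eq:common-weight}, which gives $\rho_t\leq\frac43 cx_t\prod_{b\in N^2_{\mathcal D}(t)}(1-x_b/2)$.

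\textbf{Bounds on $A$.} Write $A=\sum_{\mathcal B}\mathbf 1[\mathcal B\text{ active}]$, where $\mathcal B$ ranges over connected sets containing $a$ in $G^2_{\mathcal D}$. Activity events of distinct constraints are independent, so $\mathbb P(\mathcal B\text{ active})=\prod_{t\in\mathcal B}p_t\leq\prod_{t\in \mathcal B}\rho_t$, using $j_b\geq1$. Since $A$ is a nonnegative integer, $\mathbb EA\leq\mathbb EA^2$. Expanding the square gives
\[
\mathbb EA^2=\sum_{\mathcal B,\mathcal B'}\mathbb P(\mathcal B\cup\mathcal B'\text{ active})\leq\sum_{\mathcal B,\mathcal B'}\prod_{t\in\mathcal B\cup\mathcal B'}\rho_t.
\]
The pair bound then gives at most $\frac13\sum_{\mathcal B}\prod(3\rho_t)$, and \eqref{eq:pair-sum} bounds this by $\frac{8c}{3}\frac{x_a}{2-x_a}$. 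This is not quite $2cx_a$ when $x_a$ is near $1$, so I will track constants more carefully. The factor $1/(2-x_a)\leq1$ loses; instead I will use $p_t$ directly, without the $j_b$ factors. Then $3p_t\leq3cx_t\prod_{b\in N^2(t)}(1-x_b)$, and applying \cref{lem:connected-sets} with $z=x$ and $\alpha=3c$ gives $\frac13\cdot3c\,x_a/(1-x_a)$. That is bad as $x_a\to1$, so the $z=x/2$ route is better.

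For $z_t=x_t/2$ I get $3p_t\leq 3cx_t\prod(1-x_b)\leq 6c\,z_t\prod(1-z_b)$. Thus $\alpha=6c$, and the bound is $\frac13\cdot6c\cdot\frac{x_a/2}{1-x_a/2}\leq 2c\,x_a/(2-x_a)\cdot 2/2$. Since $\frac{x_a/2}{1-x_a/2}\leq x_a$, this gives $\leq 2cx_a$.

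\textbf{First tree sum.} Rooted $2$-trees are connected in $G^2_{\mathcal D}$. Apply \cref{lem:connected-sets} with $w=\rho$, $z=x/2$ and $\alpha=\frac83c$. This gives $\frac83c\frac{x_a/2}{1-x_a/2}\leq\frac83cx_a\leq4cx_a$.

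\textbf{Weighted sum with $\ell_x$.} This is the main step. Note that $\ell_x(\mathcal T)\leq\sum_{t\in\mathcal T}\sum_{b\in N^2_{\mathcal D}[t]}x_b$. The inner sum is controlled through the slack in $\rho_t$: for $s\in N^2_{\mathcal D}(t)$ I have $(1-x_s/2)\leq e^{-x_s/2}$. I will keep a reserve factor by writing
\[
\rho_t\leq\tfrac43cx_t\prod_b(1-x_b/4)\cdot\prod_b\frac{1-x_b/2}{1-x_b/4},
\]
which follows from $(1-x)e^{16cx}\leq(1-x/2)$ with room to spare. Since $\frac{1-x/2}{1-x/4}\leq e^{-x/4}$, this yields $\rho_t\leq\frac43cx_t e^{-\frac14\sum_{b\in N^2(t)}x_b}\prod_b(1-x_b/4)$.

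Using $s\,e^{-s/4}\leq 4/e$ on $\sum_{b\in N^2_{\mathcal D}[t]}x_b$ (which includes $x_t\leq1$), each vertex with a marked $\ell_x$ contribution has weight at most $\frac43 c x_t(1+4/e)\prod(1-x_b/4)$. Summing over the choice of the distinguished vertex $t\in\mathcal T$ amounts to using weights $\rho_t(1+\ell_t)$ and extracting the linear term. Equivalently, I bound $\sum_{\mathcal T}\ell_x(\mathcal T)\prod\rho_t\leq\frac{d}{d\eta}\big|_{\eta=0}$ of the sum with weights $\rho_te^{\eta\ell_t}$, or more simply by the sum with weights $\rho_t(1+\ell_t)$. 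The latter is at most $\frac43c(1+4/e)x_t\prod(1-x_b/4)$.

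Then \cref{lem:connected-sets} with $z=x/4$ and $\alpha=\frac{16}3c(1+4/e)$ gives
\[
\alpha\frac{x_a/4}{1-x_a/4}\leq\frac{16}{9}c(1+4/e)x_a<10cx_a.
\]
Here I use $\prod(1+\ell_t)\geq\sum\ell_t$ once $\ell_t\geq0$. The main obstacle is getting the constant below $10$; if the chosen splits are too lossy, I would adjust the reserve fraction, e.g.\ $z=x/3$.
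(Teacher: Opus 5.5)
\textbf{Parts one and two.} Your bounds on $A$ and on the first tree sum follow the paper's argument. After your false starts you land on $3p_t\le 6c\,(x_t/2)\prod_{b\in N^2_{\mathcal D}(t)}(1-x_b/2)$, with $z=x/2$ and $\alpha=6c$. For the first tree sum you use $w=\rho$, $z=x/2$, $\alpha=8c/3$ via \eqref{eq:common-weight}. The detours through $\rho$ and through $z=x$ in the $A$ part can simply be deleted.

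\textbf{The $\ell_x$-weighted sum.} Here your route genuinely differs from the paper's.
\begin{itemize}
\item The paper tilts exponentially, $\widetilde\rho_t=\rho_t\exp(\frac14\sum_{b\in N_G^2[t]}x_b)$. It absorbs the tilt while keeping $z=x/2$ by raising the constant from $\frac43c$ to $2c$, using $16c+\frac14<\frac12$ and $e^{16c+1/4}<2$. It then applies $se^{-s/4}\le 4/\mathrm e$ once, globally, to $s=\sum_{t\in\mathcal T}\sum_b x_b$, obtaining $16cx_a/\mathrm e$.
\item You use linear weights $\rho_t(1+\ell_t)$ together with $\sum_t\ell_t\le\prod_t(1+\ell_t)$. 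You pay for the extra factor vertex by vertex out of the slack $1-x_b/2\le(1-x_b/4)e^{-x_b/4}$, which means passing to $z=x/4$ in \cref{lem:connected-sets}.
\end{itemize}
Both work. The exponential tilt factorizes exactly and keeps the same $z=x/2$ as the rest of the section. Your version avoids redoing the tilted-weight calculation, works directly from \eqref{eq:common-weight}, and still gives a comparable constant, even though it discards the higher-order terms of $\prod_t(1+\ell_t)$.

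\textbf{One slip to fix.} The estimates $x_te^{-s/4}\le 1$ and $se^{-s/4}\le 4/\mathrm e$ (with $s=\sum_{b\in N^2_{\mathcal D}(t)}x_b$) give the bound $\frac43c(1+4/\mathrm e)x_t\prod(1-x_b/4)$ for $\rho_t\ell_t$, not for $\rho_t(1+\ell_t)$. For the latter you need $(2+s)e^{-s/4}\le 4e^{-1/2}<1+4/\mathrm e$, which happens to hold. Alternatively, take the cruder constant $2+4/\mathrm e$. That still gives $\alpha<1$ and a final bound of $\frac{16}{9}(2+4/\mathrm e)\,cx_a<7cx_a<10cx_a$, so your conclusion is unaffected.
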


\begin{proof}
Independence of the activation events and the pair bound following
\cref{lem:connected-sets} give
\[
\mathbb EA^2
\leq\frac13
\sum_{\substack{\mathcal B\subseteq\mathcal D,\ a\in \mathcal B\\
                G_{\mathcal D}^2[\mathcal B]\text{ is connected}}}
\prod_{t\in \mathcal B}(3p_t)
\leq2cx_a.
\]
The last inequality follows from \cref{lem:connected-sets}
with $z_t=x_t/2$ and $\alpha=6c$, since
\[
p_t\leq cx_t
\prod_{b\in N_{\mathcal D}^2(t)}(1-x_b/2).
\]
Also, $A\leq A^2$ because $A$ is a nonnegative integer.

Every rooted $2$-tree is connected in $G_{\mathcal D}^2$.
By \eqref{eq:common-weight}, we may apply
\cref{lem:connected-sets} with $w_t=\rho_t$, $z_t=x_t/2$,
and $\alpha=8c/3$, obtaining
\[
\sum_{\mathcal T}\prod_{t\in \mathcal T}\rho_t
\leq\frac{8c}{3}\frac{x_a}{2-x_a}
\leq4cx_a.
\]

For the sum containing $\ell_x(\mathcal T)$, set
\[
\widetilde\rho_t
=\rho_t\exp\left(
\frac14\sum_{b\in N_G^2[t]}x_b
\right).
\]
Since $16c+1/4<1/2$ and $e^{16c+1/4}<2$, the calculation
in \eqref{eq:common-weight} also gives
\[
\widetilde\rho_t
\leq2cx_t
\prod_{b\in N_{\mathcal D}^2(t)}(1-x_b/2).
\]
Hence, applying \cref{lem:connected-sets} with $z_t=x_t/2$ and $\alpha=4c$,
we get $\sum_{\mathcal T}\prod_{t\in \mathcal T}\widetilde\rho_t\leq4cx_a$.
Now
\[
\ell_x(\mathcal T)\leq
\sum_{t\in \mathcal T}\sum_{b\in N_G^2[t]}x_b.
\]
Using $s e^{-s/4}\leq4/\mathrm e$ for $s\geq0$, we obtain
\[
\sum_{\mathcal T} \ell_x(\mathcal T)\prod_{t\in \mathcal T}\rho_t
\leq\frac4{\mathrm e}
\sum_{\mathcal T}\prod_{t\in \mathcal T}\widetilde\rho_t
\leq\frac{16c}{\mathrm e}x_a
<10cx_a.
\]
\end{proof}

We can now bound the total number of recursive calls.

\begin{lemma}\label{lem:recursive-work}
For every $a\in\mathcal D\subseteq\mathcal C$, the expected
total number of calls to the marginal estimator in computing
$\widehat r_{\mathcal D,a}$, including the initial call,
is at most $1+mx_a$.
\end{lemma}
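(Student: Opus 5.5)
Let $T(\mathcal D,a)$ denote the expected total number of calls made when computing $\widehat r_{\mathcal D,a}$, counting the initial call. I will prove $T(\mathcal D,a)\leq 1+mx_a$ by induction on $|\mathcal D|$, in parallel with the moment bounds. For the base case $\mathcal D=\{a\}$, the only tree is $\{a\}$ and the product is empty, so $T=1$. For the inductive step, the recursive calls made directly from the top-level call come from trees $\mathcal T\in\mathfrak T_{\mathcal D,a}$ for which $\sigma_{\mathcal T}\in E_{\mathcal T}$. For each such tree and each $i\leq\ell(\mathcal T)$, the algorithm makes $\mathbf N_{\mathcal T,i}$ independent calls $\EstimateMarginal(\mathcal D_{\mathcal T,i},b_{\mathcal T,i})$, with $\mathbf N_{\mathcal T,i}\sim\operatorname{Geom}(3/4)$.

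Since the calls use fresh randomness independent of $\mathbf N_{\mathcal T,i}$ and of the test, Wald's identity together with $\mathbb E\mathbf N=1/3$ from \cref{lem:roulette} gives
\[
T(\mathcal D,a)\leq 1+\sum_{\mathcal T\in\mathfrak T_{\mathcal D,a}}\mathbb P(\sigma_{\mathcal T}\in E_{\mathcal T})\cdot\frac13\sum_{i=1}^{\ell(\mathcal T)}T(\mathcal D_{\mathcal T,i},b_{\mathcal T,i}).
\]
Here $\mathbb P(\sigma_{\mathcal T}\in E_{\mathcal T})=\theta_{\mathcal T}\leq\prod_{t\in\mathcal T}p_t\leq\prod_{t\in\mathcal T}\rho_t$. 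By the induction hypothesis, each inner term is at most $1+mx_{b_{\mathcal T,i}}$, so the inner sum is at most $\ell(\mathcal T)+m\,\ell_x(\mathcal T)$.

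The main obstacle is the additive term $\ell(\mathcal T)$: it counts constraints and is not weighted by $x$. I will control it using the crude bound $\ell(\mathcal T)\leq m-1$. The total then becomes
\[
T\leq 1+\frac13\Bigl[(m-1)\sum_{\mathcal T}\prod_{t}\rho_t+m\sum_{\mathcal T}\ell_x(\mathcal T)\prod_t\rho_t\Bigr].
\]
Applying \eqref{eq:tree-weight-sum} bounds the bracket by $(m-1)\cdot4cx_a+m\cdot10cx_a\leq14cmx_a$. Therefore
\[
T\leq 1+\tfrac{14}{3}c\,m x_a\leq 1+mx_a,
\]
since $c=1/100$. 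This closes the induction.

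Two points need care. First, the tree sums must be taken over $\mathfrak T_{\mathcal D,a}$ in $G_{\mathcal D}$; this matches the setting of \cref{lem:tree-bounds}. Second, the enumeration of active connected sets does not add any calls: it only contributes to the non-recursive work, which is handled separately through $\mathbb EA$.
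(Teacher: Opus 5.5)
Your proof is correct and follows the paper's argument essentially line by line. It uses induction on $|\mathcal D|$, the factor $\mathbb E\mathbf N=1/3$ per reciprocal estimate, the bound $\theta_{\mathcal T}\leq\prod_{t\in\mathcal T}p_t\leq\prod_{t\in\mathcal T}\rho_t$ on the probability that the test succeeds, a crude bound on $\ell(\mathcal T)$, and \eqref{eq:tree-weight-sum} to reach $1+\frac{14c}{3}mx_a\leq1+mx_a$. The only cosmetic differences are that you use $\ell(\mathcal T)\leq m-1$ rather than $\ell(\mathcal T)\leq m$, and that tying the induction to the moment bounds is unnecessary, since \cref{lem:tree-bounds} does not depend on it.
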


\begin{proof}
We induct on $|\mathcal D|$. If $\mathcal D=\{a\}$, there
are no recursive calls.

Fix $\mathcal T\in\mathfrak T_{\mathcal D,a}$. Each reciprocal estimate
for this tree uses an expected $1/3$ recursive calls.
The call rooted at $b_{\mathcal T,i}$ involves a proper subfamily of
$\mathcal D$, so the induction hypothesis bounds its expected
number of calls, including all descendants, by
$1+mx_{b_{\mathcal T,i}}$. Computing all the reciprocal estimates for
$\mathcal T$ therefore requires at most
\[
\frac13\sum_{i=1}^{\ell(\mathcal T)}(1+mx_{b_{\mathcal T,i}})
=\frac13\bigl(\ell(\mathcal T)+m\ell_x(\mathcal T)\bigr)
\]
calls in expectation.

This computation is performed only if $\sigma_{\mathcal T}\in E_{\mathcal T}$,
an event of probability
$\theta_{\mathcal T}\leq\prod_{t\in \mathcal T}p_t$, and uses fresh randomness.
Thus the expected total number of calls is at most
\[
\begin{aligned}
1+\frac13\sum_{\mathcal T}
\left(\prod_{t\in \mathcal T}p_t\right)
\bigl(\ell(\mathcal T)+m\ell_x(\mathcal T)\bigr)
&\leq
1+\frac m3\sum_{\mathcal T}(1+\ell_x(\mathcal T))\prod_{t\in \mathcal T}\rho_t\\
&\leq1+\frac{14c}{3}mx_a\\
&\leq1+mx_a.
\end{aligned}
\]
Here we used $\ell(\mathcal T)\leq m$, $p_t\leq \rho_t$, and
\cref{lem:tree-bounds}.
\end{proof}

By \cref{lem:tree-bounds} and the enumeration bound above,
the expected work within each call, excluding recursive calls,
is polynomial in $I$. Since each call uses fresh randomness,
this bound holds conditional on the history before the call
begins. By \cref{lem:recursive-work}, the expected total number
of calls is at most $1+mx_a\leq1+m$. Summing the work over all
calls therefore gives polynomial expected running time,
proving the running-time assertion of
\cref{prop:marginal-estimator}.

\section{Approximate counting}\label{sec:counting}

To approximate
$Z_{\mathcal C}=\prod_{i=1}^m(1-r_{\mathcal C_i,a_i})$,
we estimate each $r_{\mathcal C_i,a_i}$ by averaging independent
copies of $\widehat r_{\mathcal C_i,a_i}$ and substitute these
averages into the product, as follows.

\begin{algorithm}[H]
\caption{$\Count(\mathcal C,\eps)$}\label{alg:count}
\KwIn{An ordered family $\mathcal C=\{a_1,\ldots,a_m\}$ and accuracy $0<\eps\leq1$.}
\KwOut{An estimate of $Z_{\mathcal C}$ in $[0,1]$.}
$n\gets\left\lceil16(m+1)/\eps^2\right\rceil$; $\widetilde Z\gets1$\;
\For{$i=1,\ldots,m$}{
  $\mathcal C_i\gets\{a_1,\ldots,a_i\}$; $s\gets0$\;
  \For{$j=1,\ldots,n$}{
    $s\gets s+\EstimateMarginal(\mathcal C_i,a_i)$, using fresh randomness\;
  }
  $\overline r_i\gets s/n$; $\widetilde Z\gets\widetilde Z(1-\overline r_i)$\;
}
\Return $\min\{1,\max\{0,\widetilde Z\}\}$\;
\end{algorithm}

\begin{proof}[Proof of the counting assertion of \cref{thm:main}]
As in \cite[Section~5.3]{liuWangYinZhangZhou2026}, independence
and \cref{prop:marginal-estimator} give
$\mathbb E\widetilde Z=Z_{\mathcal C}$ and
\[
\frac{\operatorname{Var}(\widetilde Z)}{Z_{\mathcal C}^2}
\leq\exp\left(
\frac{3c}{n(1-c)^2}\sum_{i=1}^m x_{a_i}
\right)-1
\leq e^{\eps^2/16}-1<\frac{\eps^2}{8}.
\]
Here we used $3c/(1-c)^2<1$ and the choice of $n$.
Chebyshev's inequality gives
\[
\mathbb P\bigl(|\widetilde Z-Z_{\mathcal C}|>
\eps Z_{\mathcal C}\bigr)<\frac18.
\]
Clipping to $[0,1]$ cannot increase the error.

By \cref{prop:marginal-estimator}, the expected running time is
polynomial in $I$ and $1/\eps$. By Markov's inequality, we can choose a polynomial time limit
that the algorithm exceeds with probability at most $1/8$.
If this happens, we return $0$. The resulting algorithm has
worst-case polynomial running time and returns an estimate
with relative error at most $\eps$ with probability at least
$3/4$. Taking the median of
$O(\log(2/\delta))$ independent repetitions gives success
probability at least $1-\delta$ and running time polynomial in
$I$, $1/\eps$, and $\log(1/\delta)$.
\end{proof}

\section{The insertion chain}\label{sec:insertion}

We now construct the chain used in the insertion map of
\cref{prop:insert}. Fix $a\in\mathcal D\subseteq\mathcal C$,
and write $\mathcal D^-=\mathcal D\setminus\{a\}$. Our aim is
to turn a sample from $\mu_{\mathcal D^-}$ into an approximate
sample from $\mu_{\mathcal D}$.

While the construction of the chain is quite involved, it is crucially guided by the identity in \cref{prop:law-recursion}. 
In particular, its first local term is implemented by resampling $U_{\mathcal T}$. Its remaining difference of distributions over smaller instances of the problem is implemented by recursing the construction through a corridor that links together the chain construction over smaller instances.

\subsection{The state space}

Write $\mathsf S(\mathcal D,a)$ for the state space. We define
the state spaces by induction on $|\mathcal D|$.

There is an \emph{input state} $\mathsf{I}(\sigma)$ for each
$\sigma\in\Good(\mathcal D^-)$ and an \emph{output state}
$\mathsf{O}(\sigma)$ for each $\sigma\in\Good(\mathcal D)$. These are
distinct states even when their assignments agree.

For each $\mathcal T\in\mathfrak T_{\mathcal D,a}$, set
\[
E_{\mathcal T}^{\star}
=E_{\mathcal T}\cap\Good(\mathcal D_{\mathcal T}),
\]
and introduce a \emph{tree state} $\mathsf{T}_{\mathcal T}(\sigma)$
for each $\sigma\in E_{\mathcal T}^{\star}$.

For each $\mathcal T\in\mathfrak T_{\mathcal D,a}$ and
$z\in E_{\mathcal T}\subseteq\Omega_{U_{\mathcal T}}$,
write $e=(\mathcal T,z)$ and introduce the \emph{layers}
\[
\mathsf{L}_i^e
=
\{\mathsf{L}_i^e(\sigma):
  \sigma\in\Good(\mathcal D_{\mathcal T,i})\}
\qquad(0\leq i\leq\ell(\mathcal T)).
\]
The first layer contains assignments satisfying
$\mathcal D_{\mathcal T}$, and the last contains assignments
satisfying $\mathcal D^-$. Each intervening layer imposes
one additional constraint.

In a state $\mathsf{L}_i^e(\sigma)$, the full assignment
$\sigma$ and the auxiliary local assignment $z$ are separate
parts of the state. For a fixed $\mathcal T$, write
$\sigma=(y,u)$, where $u = \sigma|_{U_\mathcal{T}}$. There is no requirement that $u=z$. The condition
$z\in E_{\mathcal T}$ applies to the label $z$, whereas
$\sigma$ is required to satisfy $\mathcal D_{\mathcal T,i}$.

We complete the state space by adding \emph{internal states} from
smaller chains. For $b\in\mathcal E\subsetneq\mathcal D$, let
\[
\mathsf S^\circ(\mathcal E,b)
=
\mathsf S(\mathcal E,b)\setminus
\left(
\{\mathsf{I}(\sigma):
  \sigma\in\Good(\mathcal E\setminus\{b\})\}
\sqcup
\{\mathsf{O}(\sigma):
  \sigma\in\Good(\mathcal E)\}
\right).
\]
Thus $\mathsf S^\circ(\mathcal E,b)$ consists of all states
of the smaller chain except its input and output states.

For each $e=(\mathcal T,z)$ as above and
$1\leq i\leq\ell(\mathcal T)$, introduce a distinct state
\[
(e,i,X)
\qquad
\left(
X\in
\mathsf S^\circ(\mathcal D_{\mathcal T,i},b_{\mathcal T,i})
\right).
\]
These smaller state spaces are already defined, since
$\mathcal D_{\mathcal T,i}\subseteq\mathcal D^-\subsetneq\mathcal D$.
The state space $\mathsf S(\mathcal D,a)$ consists of the input,
output, tree, and layer states defined above, together with
the states $(e,i,X)$ representing the internal states
of the smaller copies. States with different labels are
distinct.

For each $e$ and $i$, define the embedding
\[
\psi_{e,i}:
\mathsf S(\mathcal D_{\mathcal T,i},b_{\mathcal T,i})
\longrightarrow \mathsf S(\mathcal D,a)
\]
by
\[
\psi_{e,i}(X)=
\begin{cases}
\mathsf{L}_{i-1}^e(\sigma),
  & X=\mathsf{I}(\sigma),\\
\mathsf{L}_i^e(\sigma),
  & X=\mathsf{O}(\sigma),\\
(e,i,X),
  & X\in
    \mathsf S^\circ(\mathcal D_{\mathcal T,i},b_{\mathcal T,i}).
\end{cases}
\]
This identifies the input and output states of the smaller
copy with layers $i-1$ and $i$, respectively. The assignment
spaces agree because
\[
\mathcal D_{\mathcal T,i}\setminus\{b_{\mathcal T,i}\}
=\mathcal D_{\mathcal T,i-1}.
\]
Only the internal states of the copy are added as new states.

For fixed $e=(\mathcal T,z)$, the layers and the copies joining them
form a \emph{corridor}. Adjacent copies share one layer and have
no other states in common. The construction is illustrated in
\cref{fig:state-space} for a corridor with $\ell(\mathcal T)=3$.
If $\ell(\mathcal T)=0$, the corridor consists only of
$\mathsf{L}_0^e$. In particular, when $|\mathcal D|=1$,
there are no smaller copies, which starts the induction.

\begin{figure}[htbp]
\centering
\resizebox{\textwidth}{!}{
\begin{tikzpicture}[
  x=1cm,y=1cm,
  every node/.style={font=\small},
  group/.style={
    draw=black!35,rounded corners=3pt,fill=black!2,
    minimum width=6cm,minimum height=1.5cm,align=center
  },
  layer/.style={
    draw=green!40!black,fill=green!6,rounded corners=3pt,
    minimum width=1.85cm,minimum height=1cm,align=center
  },
  internal/.style={
    draw=black!30,fill=black!3,rounded corners=3pt,
    minimum width=2.65cm,minimum height=1cm,align=center
  },
  role/.style={
    font=\scriptsize,align=center,text width=1.9cm
  }
]

\node[group] at (3,7)
  {\textbf{Input states}\\[5pt]
   $\mathsf{I}(\sigma)$, $\sigma\in\Good(\mathcal D^-)$};
\node[group] at (9.6,7)
  {\textbf{Output states}\\[5pt]
   $\mathsf{O}(\sigma)$, $\sigma\in\Good(\mathcal D)$};
\node[group] at (16.2,7)
  {\textbf{Tree states}\\[5pt]
   $\mathsf{T}_{\mathcal T}(\sigma)$, $\sigma\in E_{\mathcal T}^{\star}$};

\node[font=\small\bfseries] at (9.6,5.5)
  {A complete corridor with fixed $e=(\mathcal T,z)$
   and $\ell(\mathcal T)=3$};

\draw[blue!55!black,line width=.8pt,rounded corners=5pt]
  (0,1.15) rectangle (7.8,4.8);
\draw[orange!65!black,line width=.8pt,dashed,rounded corners=5pt]
  (5.7,1.15) rectangle (13.5,4.8);
\draw[violet!65!black,line width=.8pt,rounded corners=5pt]
  (11.4,1.15) rectangle (19.2,4.8);

\node[align=center,text=blue!55!black] at (3.85,4.05)
  {\textbf{Copy 1}\\[4pt]
   $\mathsf S(\mathcal D_{\mathcal T,1},b_{\mathcal T,1})$};
\node[align=center,text=orange!65!black] at (9.55,4.05)
  {\textbf{Copy 2}\\[4pt]
   $\mathsf S(\mathcal D_{\mathcal T,2},b_{\mathcal T,2})$};
\node[align=center,text=violet!65!black] at (15.25,4.05)
  {\textbf{Copy 3}\\[4pt]
   $\mathsf S(\mathcal D_{\mathcal T,3},b_{\mathcal T,3})$};

\node[layer] at (1.05,2.7) {$\mathsf{L}_0^e$};
\node[layer] at (6.75,2.7) {$\mathsf{L}_1^e$};
\node[layer] at (12.45,2.7) {$\mathsf{L}_2^e$};
\node[layer] at (18.15,2.7) {$\mathsf{L}_3^e$};

\node[internal] at (3.9,2.7)
  {Internal states\\[4pt]$(e,1,X)$};
\node[internal] at (9.6,2.7)
  {Internal states\\[4pt]$(e,2,X)$};
\node[internal] at (15.3,2.7)
  {Internal states\\[4pt]$(e,3,X)$};

\node[role] at (1.05,1.65) {Input of copy 1};
\node[role] at (6.75,1.65)
  {Output of 1\\[2pt]Input of 2};
\node[role] at (12.45,1.65)
  {Output of 2\\[2pt]Input of 3};
\node[role] at (18.15,1.65) {Output of copy 3};

\node[align=center,font=\footnotesize] at (1.05,.45)
  {Assignments in\\[4pt]$\Good(\mathcal D_{\mathcal T})$};
\node[align=center,font=\footnotesize] at (6.75,.45)
  {Assignments in\\[4pt]$\Good(\mathcal D_{\mathcal T,1})$};
\node[align=center,font=\footnotesize] at (12.45,.45)
  {Assignments in\\[4pt]$\Good(\mathcal D_{\mathcal T,2})$};
\node[align=center,font=\footnotesize] at (18.15,.45)
  {Assignments in\\[4pt]$\Good(\mathcal D^-)$};

\node[align=center] at (9.6,-.85)
  {$\psi_{e,i}(\mathsf{O}(\sigma))
    =\mathsf{L}_i^e(\sigma)
    =\psi_{e,i+1}(\mathsf{I}(\sigma))
    \qquad(i=1,2)$};

\end{tikzpicture}
}
\caption{The input, output, and tree states, together with one
complete corridor for $\ell(\mathcal T)=3$.
Each outlined region represents an embedded smaller state space.
Consecutive copies share exactly one layer, which is the output
of one copy and the input of the next.
The label $z$ is fixed throughout the corridor and need not equal
the current assignment on $U_{\mathcal T}$.
No transitions are shown.}
\label{fig:state-space}
\end{figure}

Every state determines a current assignment. At
$\mathsf{I}(\sigma)$, $\mathsf{O}(\sigma)$, $\mathsf{T}_{\mathcal T}(\sigma)$, or
$\mathsf{L}_i^e(\sigma)$, it is $\sigma$. At
$(e,i,X)$, it is the current assignment at $X$
in the smaller chain.

Finally, define the input and target distributions on
$\mathsf S(\mathcal D,a)$ by
\[
\mu^-(\mathsf{I}(\sigma))=\mu_{\mathcal D^-}(\sigma),
\qquad
\mu^+(\mathsf{O}(\sigma))=\mu_{\mathcal D}(\sigma),
\]
with zero mass on all other states. Thus $\mu^-$ is the law
of $\mathsf{I}(\sigma)$ when $\sigma\sim\mu_{\mathcal D^-}$, and
$\mu^+$ is the law of $\mathsf{O}(\sigma)$ when
$\sigma\sim\mu_{\mathcal D}$.

\subsection{Transitions}

We first define an auxiliary transition kernel $P_{\mathcal D,a}$
on $\mathsf S(\mathcal D,a)$, abbreviated to $P$ for this fixed
instance. Every smaller copy uses its unmodified auxiliary kernel.
We then modify only the transitions from the outer output states
to obtain $K^\beta_{\mathcal D,a}$, abbreviated to $K$.
Under $P$, the chain stays at its current state with
probability $1/2$. Otherwise, it chooses uniformly among
three rules, called $\mathsf{FORWARDS}$, $\mathsf{BACKWARDS}$,
and $\mathsf{RESET}$, and applies the selected rule.

To specify the rules, we need a pairing of tree states
and two distributions on lists of corridor labels.

The backward rule at a tree state uses the following pairing.
For $\sigma\in\Omega$, let
\[
\mathcal Q(\sigma)=\{b\in\mathcal D:\sigma\in B_b\}
\]
be the set of constraints violated by $\sigma$.
For $\sigma\in E_{\mathcal T}^{\star}$ with
$\mathcal Q(\sigma)\ne\{a\}$, let $d_\sigma(\mathcal T)$
be the last constraint in $\mathcal Q(\sigma)\setminus\{a\}$
examined by the selection procedure of \cref{sec:two-tree}
on input $\mathcal T$, and set
\[
\iota_\sigma(\mathcal T)
=\mathcal T\mathbin{\triangle}\{d_\sigma(\mathcal T)\}.
\]
By \cref{lem:terminal-involution}, the map
\[
\mathsf{T}_{\mathcal T}(\sigma)
\longmapsto \mathsf{T}_{\iota_\sigma(\mathcal T)}(\sigma)
\]
is a well-defined involution between tree states.
If $\mathcal Q(\sigma)=\{a\}$, the only tree state with
assignment $\sigma$ is $\mathsf{T}_{\{a\}}(\sigma)$, which is paired
with $\mathsf{I}(\sigma)$ instead.

The reset rule at an input state chooses a corridor label
from a random list $\mathbf L$.
As in \cref{sec:estimates}, independently draw
$\sigma_b\sim\nu_{\vbl(b)}$ for $b\in\mathcal D$.
For each $\mathcal T\in\mathfrak T_{\mathcal D,a}$, combine
$\sigma_t$, $t\in\mathcal T$, on their disjoint variable sets,
and complete the assignment to $U_{\mathcal T}$ with independent
product samples, using fresh randomness for each completion.
Write $\sigma_{\mathcal T}$ for the result, and set
\[
\mathbf L
=\{(\mathcal T,\sigma_{\mathcal T}):
  \mathcal T\in\mathfrak T_{\mathcal D,a},\
  \sigma_{\mathcal T}\in E_{\mathcal T}\}.
\]

The forward rule at $\mathsf{L}_{\ell(\mathcal T)}^e(\sigma)$
uses a list conditioned to contain $e=(\mathcal T,z)$.
To generate this list, denoted by $\mathbf L^e$, repeat the
experiment above with $\sigma_t=z|_{\vbl(t)}$ fixed for
$t\in\mathcal T$ and the completion for $\mathcal T$ fixed to
$z|_{U_{\mathcal T}\setminus\vbl(\mathcal T)}$.
All remaining random choices are independent and have their
original laws. By \cref{lem:return-list},
\[
\Law(\mathbf L^e)=\Law(\mathbf L\mid e\in\mathbf L).
\]
Choose an entry uniformly from $\mathbf L^e$. If the chosen
entry is $e$, move to $\mathsf{I}(\sigma)$; otherwise, remain at
$\mathsf{L}_{\ell(\mathcal T)}^e(\sigma)$.

The three rules are specified in \cref{alg:chain-step}.
A recursive call applies the rule already chosen for the
current step. For example, if $\mathsf{FORWARDS}$ was selected,
the recursive call applies the forward rule of the smaller
copy, using fresh randomness for any required samples.
It does not repeat the initial decision to stay put or
choose another rule. If the resulting state in the smaller
copy is $X'$, the next state of the full chain is $\psi(X')$,
using the embedding of that copy. The connections between the different types of states are
illustrated in \cref{fig:insertion-chain-schematic}.

Write $\mathsf{Step}(\mathcal D,a,s;\mathrm{rule})$ for applying a specified
rule $\mathrm{rule}$, and $\mathsf{Step}(\mathcal D,a,s)$ for a full step
of $P$, including the decision to stay put and the choice of rule.
In the algorithm, write $e=(\mathcal T,z)$ whenever a state carries a corridor label $e$.

\begin{algorithm}[htbp]
\small
\caption{Applying a rule:
  $\mathsf{Step}(\mathcal D,a,s;\mathrm{rule})$}
\label{alg:chain-step}
\KwIn{A state $s\in\mathsf S(\mathcal D,a)$ and a rule
$\mathrm{rule}\in\{\mathsf{FORWARDS},\mathsf{BACKWARDS},\mathsf{RESET}\}$.}
\KwOut{The resulting state.}

\uIf{$s=\mathsf{I}(\sigma)$}{
  \uIf{$\mathrm{rule}=\mathsf{FORWARDS}$}{
    \lIf{$\sigma\notin B_a$}{\Return $\mathsf{O}(\sigma)$}
    \Return $\mathsf{T}_{\{a\}}(\sigma)$\;
  }
  \ElseIf{$\mathrm{rule}=\mathsf{RESET}$}{
    Generate $\mathbf L$\;
    \If{$\mathbf L\ne\emptyset$}{
      Choose $e=(\mathcal T,z)$ uniformly from $\mathbf L$\;
      \Return $\mathsf{L}_{\ell(\mathcal T)}^e(\sigma)$\;
    }
  }
}
\uElseIf{$s=\mathsf{O}(\sigma)$}{
  \lIf{$\mathrm{rule}=\mathsf{BACKWARDS}$}{\Return $\mathsf{I}(\sigma)$}
}
\uElseIf{$s=\mathsf{T}_{\mathcal T}(\sigma)$}{
  \uIf{$\mathrm{rule}=\mathsf{FORWARDS}$}{
    Write $\sigma=(y,z)$ and draw
    $u\sim\nu_{U_{\mathcal T}}$\;
    Set $e=(\mathcal T,z)$\;
    \Return $\mathsf{L}_0^e(y,u)$\;
  }
  \ElseIf{$\mathrm{rule}=\mathsf{BACKWARDS}$}{
    \lIf{$\mathcal Q(\sigma)=\{a\}$}{\Return $\mathsf{I}(\sigma)$}
    \Return $\mathsf{T}_{\iota_\sigma(\mathcal T)}(\sigma)$\;
  }
}
\uElseIf{$s=\mathsf{L}_i^e(\sigma)$}{
  \uIf{$\mathrm{rule}=\mathsf{BACKWARDS}$}{
    \lIf{$i>0$}{\Return $\mathsf{L}_{i-1}^e(\sigma)$}
    Write $\sigma=(y,u)$\;
    \Return $\mathsf{T}_{\mathcal T}(y,z)$\;
  }
  \uElseIf{$i<\ell(\mathcal T)$}{
    $X'\gets\mathsf{Step}
      (\mathcal D_{\mathcal T,i+1},b_{\mathcal T,i+1},
       \mathsf{I}(\sigma);\mathrm{rule})$\;
    \Return $\psi_{e,i+1}(X')$\;
  }
  \ElseIf{$\mathrm{rule}=\mathsf{FORWARDS}$}{
    Generate $\mathbf L^e$\;
    Choose an entry $e'$ uniformly from this list\;
    \lIf{$e'=e$}{\Return $\mathsf{I}(\sigma)$}
  }
}
\ElseIf{$s=(e,i,X)$}{
  $X'\gets\mathsf{Step}
    (\mathcal D_{\mathcal T,i},b_{\mathcal T,i},X;\mathrm{rule})$\;
  \Return $\psi_{e,i}(X')$\;
}
\Return $s$\tcp*{In all other cases, the state is unchanged.}
\end{algorithm}

\begin{figure}[htbp]
\centering
\resizebox{\textwidth}{!}{
\begin{tikzpicture}[
 x=1cm,y=1cm,>={Latex[length=1.6mm]},
 every node/.style={font=\small},
 zone/.style={draw=black!25,rounded corners=5pt},
 state/.style={draw=blue!45!black,rounded corners=3pt,fill=blue!5,
   minimum width=2.4cm,minimum height=1cm,align=center},
 tree/.style={state,draw=orange!60!black,fill=orange!8},
 layer/.style={draw=green!40!black,fill=green!6,rounded corners=3pt,
   minimum width=2cm,minimum height=1cm,align=center},
 inside/.style={draw=black!30,fill=black!3,rounded corners=3pt,
   minimum width=2.2cm,minimum height=1cm,align=center},
 lab/.style={font=\scriptsize,align=center,fill=white,inner sep=3pt},
 move/.style={->,semithick}
]
\draw[zone] (0,7.9) rectangle (3.1,11.1);
\draw[zone] (4,6.5) rectangle (7.1,11.1);
\draw[zone] (8,7.9) rectangle (21.9,11.1);
\node[font=\small\bfseries] at (1.55,10.6) {Output};
\node[font=\small\bfseries] at (5.55,10.6) {Input};
\node[font=\small\bfseries] at (14.95,10.6) {Tree states};
\node[state] (R) at (1.55,9) {$\mathsf{O}(\sigma)$};
\node[state] (L) at (5.55,9) {$\mathsf{I}(\sigma)$};
\node[state] (Ltau) at (5.55,7.15) {$\mathsf{I}(\tau)$};
\node[tree] (S) at (12,9) {$\mathsf{T}_{\mathcal T}(\sigma)$};
\node[tree] (paired) at (19.5,9) {$\mathsf{T}_{\mathcal T'}(\sigma)$};
\draw[move] ([yshift=2mm]L.west) -- ([yshift=2mm]R.east);
\draw[move] ([yshift=-2mm]R.east) -- ([yshift=-2mm]L.west);
\node[lab] at (3.55,9.85) {$\sigma\notin B_a$};
\node[lab] at (3.55,8.15) {$\leftarrow$ F\quad B $\rightarrow$};
\draw[move] ([yshift=2mm]L.east) -- node[lab,above] {F: $\sigma\in B_a$, $\mathcal T=\{a\}$} ([yshift=2mm]S.west);
\draw[move] ([yshift=-2mm]S.west) -- node[lab,below] {B: $\mathcal Q(\sigma)=\{a\}$} ([yshift=-2mm]L.east);
\draw[<->,semithick] (S.east) -- node[lab,above] {B in both directions}
 node[lab,below] {$\mathcal T'=\iota_\sigma(\mathcal T)$\\$\mathcal Q(\sigma)\ne\{a\}$} (paired.west);

\node[draw=green!40!black,dashed,rounded corners=4pt,
 fill=green!3,align=center,text width=3.4cm,minimum height=1.1cm]
 (other) at (19.5,6.3)
 {Separate corridor\\[3pt]$(\mathcal T',z')$};
\draw[move] ([xshift=-2mm]paired.south) -- node[lab,left] {F} ([xshift=-2mm]other.north);
\draw[move] ([xshift=2mm]other.north) -- node[lab,right] {B} ([xshift=2mm]paired.south);
\node[font=\scriptsize,align=center,text width=3.5cm] at (19.5,5.25)
 {Its connection to the input is omitted.};

\draw[blue!55!black,line width=.8pt,rounded corners=5pt]
 (0,-.3) rectangle (8,2.6);
\draw[orange!65!black,line width=.8pt,dashed,rounded corners=5pt]
 (5.8,-.3) rectangle (13.8,2.6);
\draw[violet!65!black,line width=.8pt,rounded corners=5pt]
 (11.6,-.3) rectangle (19.6,2.6);
\node[text=blue!55!black] at (4,2.05) {Copy 3};
\node[text=orange!65!black] at (9.8,2.05) {Copy 2};
\node[text=violet!65!black] at (15.6,2.05) {Copy 1};
\node[layer] (C3) at (1.1,.6) {$\mathsf{L}_3^e(\tau)$};
\node[layer] at (6.9,.6) {$\mathsf{L}_2^e(\sigma_2)$};
\node[layer] at (12.7,.6) {$\mathsf{L}_1^e(\sigma_1)$};
\node[layer] (C0) at (18.5,.6) {$\mathsf{L}_0^e(y,u)$};
\node[inside] at (4,.6) {Internal states};
\node[inside] at (9.8,.6) {Internal states};
\node[inside] at (15.6,.6) {Internal states};

\draw[move] ([xshift=-3mm]Ltau.south) .. controls (3.5,6.5) and (.8,5.2) ..
 ([xshift=-3mm]C3.north);
\node[lab,text width=2.6cm] at (1.5,6.2)
 {RESET\\choose $e$\\uniformly from $\mathbf L$};
\draw[move] ([xshift=3mm]C3.north) .. controls (1.4,5) and (5.5,6.3) ..
 ([xshift=3mm]Ltau.south);
\node[lab,text width=3.4cm] at (5.3,4.65)
 {F: choose uniformly\\from $\mathbf L^e$; return\\if the entry is $e$};
\node[font=\scriptsize,align=center] at (6.2,3.35)
 {Both moves preserve the assignment $\tau$.};

\draw[move] ([xshift=-3mm]S.south) .. controls (11.7,5.7) and (16.2,3.6) ..
 ([xshift=-3mm]C0.north);
\node[lab,text width=2.8cm] at (11.6,5.55)
 {F: redraw $U_{\mathcal T}$\\$(y,z)\mapsto(y,u)$};
\draw[move] ([xshift=3mm]C0.north) .. controls (18.8,3.8) and (12.3,5.8) ..
 ([xshift=3mm]S.south);
\node[lab,text width=2.3cm] at (15.3,4.15)
 {B: restore $z$\\$(y,u)\mapsto(y,z)$};

\node[font=\small\bfseries] at (9.8,-1)
 {Corridor with fixed $e=(\mathcal T,z)$ and $\ell(\mathcal T)=3$};
\node[font=\scriptsize,align=center] at (9.8,-1.6)
 {F = FORWARDS,\quad B = BACKWARDS.\quad
 Each smaller copy uses its own transition rules.};
\end{tikzpicture}
}
\caption{The insertion chain, with one corridor expanded.
The two representative input states coincide when $\sigma=\tau$.
From $\mathsf{I}(\sigma)$, the forward
rule moves to $\mathsf{O}(\sigma)$ if $a$ is satisfied and to
$\mathsf{T}_{\{a\}}(\sigma)$ otherwise. Other tree states are paired as shown.
For the expanded corridor, $\sigma=(y,z)$ at the tree state;
the assignment may change along the corridor, reaching $\tau$
at its last layer. The list moves connect this last-layer state
to $\mathsf{I}(\tau)$ in the input copy. The paired tree has a separate
corridor with $z'=\sigma|_{U_{\mathcal T'}}$.
Outlined smaller copies share their boundary layers.
Holding moves and transitions inside smaller copies are omitted.}
\label{fig:insertion-chain-schematic}
\end{figure}

For $0<\beta<1$, define $K=K^\beta_{\mathcal D,a}$ by
changing only the transitions from the outer output states of
$P=P_{\mathcal D,a}$. Set
\[
K(\mathsf{O}(\sigma),\mathsf{I}(\sigma))=\frac{\beta}{6},
\qquad
K(\mathsf{O}(\sigma),\mathsf{O}(\sigma))=1-\frac{\beta}{6},
\]
and let $K$ agree with $P$ at every other state.
This modification increases the time spent at the output
states. In the next section, we choose $\beta$ so that
$K$ has a stationary distribution close to $\mu^+$ and
bound the number of steps required when the initial
distribution is $\mu^-$.

\section{Analysis of the insertion chain}\label{sec:insertion-proof}

We prove the following convergence bound for the kernel $K$
defined in \cref{sec:insertion}. Fix
$a\in\mathcal D\subseteq\mathcal C$, and write
$Z_-=Z_{\mathcal D^-}$.

\begin{proposition}\label{prop:measure}
For $0<\delta\leq1$, let
\begin{align*}
\beta&=\frac{\delta}{100(1+2m)}, \qquad
N=\left\lceil200{,}000(m+1)^2\delta^{-3}\right\rceil.
\end{align*}
Then
\[
\norm{\mu^-K^N-\mu^+}_{\TV}<\frac{\delta}{10}.
\]
\end{proposition}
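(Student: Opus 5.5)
We follow the flow argument sketched in \cref{sec:unbounded-degree}. The first step is to identify the stationary law. We would show, by induction on $|\mathcal D|$, that $P=P_{\mathcal D,a}$ is reversible with respect to an explicit measure $\pi_P$. This measure has weight $Z_{\mathcal D^-}\mu_{\mathcal D^-}$ on the input copy and $Z_{\mathcal D}\mu_{\mathcal D}$ on the output copy, and weight $\nu$ restricted to $E_{\mathcal T}^\star$ on each tree state. On the corridor layers $\mathsf L_i^e$ we take $\nu_{U_{\mathcal T}}(z)\,\nu|_{\Good(\mathcal D_{\mathcal T,i})}$, and on each embedded copy we take that copy's own stationary measure, scaled by $\nu_{U_{\mathcal T}}(z)$.

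Detailed balance then has to be checked rule by rule. The input/output and input/$\mathsf T_{\{a\}}$ moves are immediate. The resampling move $\mathsf T_{\mathcal T}(y,z)\leftrightarrow\mathsf L_0^e(y,u)$ balances because both sides carry the weight $\nu(y)\nu(z)\nu(u)$. Tree pairings balance because $\iota_\sigma$ is an involution, by \cref{lem:terminal-involution}. For the reset/forward list moves we compare $\mathbb E[\mathbf 1_{e\in\mathbf L}/|\mathbf L|]$ with $\nu(z)\,\mathbb E[1/|\mathbf L^e|]$. Since $\mathbb P(e\in\mathbf L)=\nu_{U_{\mathcal T}}(z)$ (density), \cref{lem:return-list} makes these two quantities agree. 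Finally, the embedded copies are consistent because the boundary weights match the scaled stationary measures of the smaller chains.

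Passing from $P$ to $K$ multiplies the output weight by $6/\beta$. Normalising gives $\pi$, and
\[
\norm{\pi-\mu^+}_{\TV}\le \frac{\beta\,\pi_P(\mathsf S\setminus\mathrm{output})}{6Z_{\mathcal D}}.
\]
This requires a bound $\pi_P(\mathsf S(\mathcal D,a))\le (1+2m)Z_{\mathcal D}$, up to constants. We would prove it inductively by summing the corridor masses using \cref{lem:deletion}, the bound $\theta_{\mathcal T}\le\prod p_t$, and \cref{lem:tree-bounds}. With the stated $\beta$, this gives $\norm{\pi-\mu^+}_{\TV}\le\delta/600$.

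The second step is the flow. Using \cref{prop:law-recursion}, we build inductively a signed flow $\varphi_{\mathcal D,a}$ with divergence $\mu^--\mu^+$. It consists of four pieces:
\begin{itemize}
\item the direct input$\to$output flow carried by $\mu_{\mathcal D}$;
\item for each $\mathcal T$, the resampling flow of $\lambda_{\mathcal T}\mathbf 1_{E_{\mathcal T}}\mu_{\mathcal D_{\mathcal T}}$ from tree states to $\mathsf L_0$;
\item the corridor flows, namely $\lambda_{\mathcal T}\theta_{\mathcal T}$ times the inductive flows on each copy, routed in the corridor with label $z$ with weight proportional to $\nu(z)\mathbf 1_{E_{\mathcal T}}(z)/\theta_{\mathcal T}$;
\item the tree-pairing and list edges that cancel the divergence at tree states.
\end{itemize}
The energy bound $\mathcal A(\mathcal D,a)$ of this flow, computed with conductances $\pi(X)K(X,Y)$, should satisfy a recursion of the form
\[
\mathcal A(\mathcal D,a)\lesssim \frac{1}{\beta}+\sum_{\mathcal T}\lambda_{\mathcal T}^2\theta_{\mathcal T}\Bigl(\cdots+\sum_i \mathcal A(\mathcal D_{\mathcal T,i},b_{\mathcal T,i})\Bigr).
\]
The $1/\beta$ term comes from the output edge. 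The list edge contributes a factor $\mathbb E|\mathbf L|$, which is controlled by $\mathbb E A^2$ in \cref{lem:tree-bounds}. The claim is that the smaller-copy energies are damped by a factor $\theta_{\mathcal T}$: the squared flow carries $\theta_{\mathcal T}^2$ while the conductance carries only $\theta_{\mathcal T}$. The sum then closes by \cref{lem:tree-bounds}, giving $\mathcal A\le C(1+m)/\beta$ provided the smaller chains also have their output edges damped. This is where care is needed. The inner copies use $P$, not $K$, so inside them the $1/\beta$ term is absent and is replaced by $O(1)$. The target is an inductive energy bound of $O(m)$ for $P$-chains, plus $O(1/\beta)$ at the top level.

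The third step is the convergence estimate. Since $K$ is lazy and reversible, the Cauchy--Schwarz inequality from \cite[Section~9.4]{levinPeresWilmer2017} combined with the spectral bound $\mathcal E_K(K^Nf,K^Nf)\le 1/(2N+1)$ gives
\[
\norm{\mu^-K^N-\mu^+}_{\TV}\le \frac{\sqrt{\mathcal A}}{2\sqrt{2N+1}}+2\norm{\pi-\mu^+}_{\TV}.
\]
Here $\mathcal A$ is normalised by $Z_{\mathcal D^-}$, and the total mass of $\pi$ must be tracked. With $\mathcal A=O((m+1)^2/\delta)$ and the stated $N$, the first term is at most $\delta/20$, which yields $\delta/10$.

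The main obstacle is the energy recursion. It needs the exact stationary weights of the corridor copies, the uniform-from-list conductance, and the cancellation between the signs of $\lambda_{\mathcal T}$ and the pairings to combine so that the sum over $2$-trees is again of the form controlled by \cref{lem:connected-sets}. I would first attempt it with weights $\rho_t$ enlarged by factors $j_b$, mirroring the second-moment proof of \cref{prop:marginal-estimator}.
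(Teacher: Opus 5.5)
Your route is the paper's: the same weights $w$ (input, output and tree states weighted by $\nu$, layers and embedded copies scaled by $q_e=\nu_{U_{\mathcal T}}(z)$), the same flow pieces with corridor coefficient $\gamma_e=\lambda_{\mathcal T}q_e$ and damping factor $\eta_{\mathcal T}=\lambda_{\mathcal T}^2\theta_{\mathcal T}$ on the copies, and the same Cauchy--Schwarz plus spectral estimate. The gap is that the two inductions that carry the content (mass and energy) are left open. Moreover, the inductive targets you name, ``$\pi_P(\mathsf S(\mathcal D,a))\le(1+2m)Z_{\mathcal D}$ up to constants'' and ``an energy bound of $O(m)$ for $P$-chains'', do not reproduce themselves.

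The reason is that the corridor for $\mathcal T$ contains $\ell(\mathcal T)=|N_{\mathcal D}^2[\mathcal T]\setminus\{a\}|$ smaller copies, and in the asymmetric setting $\ell(\mathcal T)$ is not controlled by the tree weights. For example, take $x_a=1/2$, all other $x$-values tiny, and $m-2$ constraints at distance two from $a$. Then $p_a$, and hence $\theta_{\{a\}}$ and $\eta_{\{a\}}$, can be about $c/2$, while $\ell(\{a\})\approx m$. A flat hypothesis $K$ for all smaller copies therefore returns $1+\sum_{\mathcal T}\eta_{\mathcal T}\ell(\mathcal T)K$, which can be of order $cmK$, so $K=O(m)$ does not close. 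Nor can you work with ``$\lesssim$'', since the recursion depth can be $m$.

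The missing idea is to make the $m$-dependent part of each bound proportional to the root's weight. Prove that the weight outside the output states is at most $(1+2mx_a)Z_-$, and that the energy is at most $(6/(1-c)+mx_a)/Z_-$. Summed over a corridor, the hypotheses then contribute $\eta_{\mathcal T}\bigl(O(1)\,\ell(\mathcal T)+m\,\ell_x(\mathcal T)\bigr)$. The bounds $\sum_{\mathcal T}\eta_{\mathcal T}\le4cx_a$ with $\ell(\mathcal T)\le m$, together with $\sum_{\mathcal T}\eta_{\mathcal T}\ell_x(\mathcal T)<10cx_a$, turn this into $O(c)\,mx_a$. For the energy, this closes only because the constant $6/(1-c)$ is exactly the input--output contribution $6/Z_{\mathcal D}$. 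Every other local term (pairing, singleton, resampling and return edges) is $O(cx_a)/Z_-$ and is absorbed into $mx_a$.

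Two bookkeeping points also need fixing. First, $K$ lowers the probability of $\mathsf O(\sigma)\to\mathsf I(\sigma)$ from $1/6$ to $\beta/6$, so output weights are multiplied by $1/\beta$, not $6/\beta$. The correct bound is $\norm{\pi-\mu^+}_{\TV}\le\beta(1+2m)/(1-c)\le\delta/99$ rather than $\delta/600$, which leaves little room for unspecified constants in the mass bound. Second, the $1/\beta$ is not an additive contribution of the output edge, since $\widetilde w(\mathsf O(\sigma))K(\mathsf O(\sigma),\mathsf I(\sigma))=\nu(\sigma)/6$ does not depend on $\beta$. Instead, bound the energy against the $\beta$-independent capacities $w(X)P(X,Y)$. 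These scale by exactly $q_e$ under each embedding, so inner copies and the top level are treated uniformly. Then $1/\beta$ enters once, multiplicatively, through $\mathcal E_K(f,f)=M^{-1}\sum\operatorname{cap}(X,Y)(f(X)-f(Y))^2$ and $M/Z_-\le1/\beta+1+2m$.

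Finally, for the return edges, $\mathbb EA^2$ alone does not suffice because of the weights $\lambda_{\mathcal T}^2$. You need $q_e^2\le\omega_e\,\mathbb E[\mathbf 1_{\{e\in\mathbf L\}}|\mathbf L|]$, followed by the $\rho$-weighted pair bound \eqref{eq:pair-sum}, as you anticipate at the end.
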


\subsection{The pairing and the list distribution}

We first verify that the pairing move in \cref{alg:chain-step}
is well defined and that applying it twice returns to the
original state.

\begin{lemma}\label{lem:terminal-involution}
For $\mathcal T\in\mathfrak T_{\mathcal D,a}$ and
$\sigma\in E_{\mathcal T}^{\star}$ with $\mathcal Q(\sigma)\ne\{a\}$,
\[
\iota_\sigma(\mathcal T)\in\mathfrak T_{\mathcal D,a},
\qquad
\sigma\in E_{\iota_\sigma(\mathcal T)}^{\star},
\qquad
\iota_\sigma(\iota_\sigma(\mathcal T))=\mathcal T.
\]
Furthermore, $|\mathcal T|$ and $|\iota_\sigma(\mathcal T)|$ have opposite parity.
\end{lemma}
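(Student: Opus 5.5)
The plan is to reduce membership in $E^\star$ to a statement about the run of the selection procedure, and then compare the runs on inputs $\mathcal T$ and $\mathcal T'=\iota_\sigma(\mathcal T)$ step by step.

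\emph{Reformulation.} By \cref{lem:canonical}, $\mathcal F(\mathcal T)=\mathcal F_1(\mathcal T)\sqcup\mathcal D_{\mathcal T}$. Hence, for any $\mathcal S\in\mathfrak T_{\mathcal D,a}$, we have $\sigma\in E^\star_{\mathcal S}$ if and only if every constraint of $\mathcal S$ is violated by $\sigma$ and every constraint of $\mathcal F(\mathcal S)$ is satisfied. Since $\mathcal D=\mathcal S\sqcup\mathcal R(\mathcal S)\sqcup\mathcal F(\mathcal S)$, this is equivalent to
\[
\mathcal S\subseteq\mathcal Q(\sigma)\subseteq\mathcal S\cup\mathcal R(\mathcal S).
\]
In words: every violated constraint other than $a$ is examined in the run on input $\mathcal S$, and the accepted ones are exactly those in $\mathcal S$. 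So for $\sigma\in E^\star_{\mathcal T}$ with $\mathcal Q(\sigma)\ne\{a\}$, every element of $\mathcal Q(\sigma)\setminus\{a\}$ is examined in the run on $\mathcal T$. In particular $d=d_\sigma(\mathcal T)$ is well defined.

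\emph{Structure of the run on $\mathcal T$.} Accepted constraints always lie in $\mathcal T\subseteq\mathcal Q(\sigma)$. After $d$ is examined, no further element of $\mathcal Q(\sigma)$ is examined, so every later step is a rejection. Therefore the accepted set immediately after examining $d$ is already $\mathcal T$.

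\emph{Run on $\mathcal T'=\mathcal T\triangle\{d\}$.} Before $d$ is examined, the accept/reject decisions depend only on membership in the input. The inputs $\mathcal T$ and $\mathcal T'$ agree on all constraints other than $d$, and the next constraint examined is determined by the current accepted and rejected sets. Hence the two runs coincide up to the step examining $d$, at which the decision flips. Immediately afterwards, the accepted set is $\mathcal T\triangle\{d\}=\mathcal T'$ in both cases:
\begin{itemize}
\item if $d\in\mathcal T$, then $d$ is now rejected and the accepted set is $\mathcal T\setminus\{d\}$;
\item if $d\notin\mathcal T$, then $d$ is now accepted and the accepted set is $\mathcal T\cup\{d\}$.
\end{itemize}
Every later examined constraint lies outside $\mathcal T'$, because the accepted set already equals $\mathcal T'$. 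So all later constraints are rejected and the output is $\mathcal T'$. By the first part of \cref{lem:canonical}, this gives $\mathcal T'\in\mathfrak T_{\mathcal D,a}$.

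\emph{$\sigma\in E^\star_{\mathcal T'}$ and the involution property.} We have $\mathcal T'\subseteq\mathcal Q(\sigma)$ because $\mathcal T\subseteq\mathcal Q(\sigma)$ and $d\in\mathcal Q(\sigma)$. The runs agree up to and including the examination of $d$, and every element of $\mathcal Q(\sigma)\setminus\{a\}$ is examined by then. Hence $\mathcal Q(\sigma)\subseteq\mathcal T'\cup\mathcal R(\mathcal T')$, and the reformulation gives $\sigma\in E^\star_{\mathcal T'}$. Moreover, the last element of $\mathcal Q(\sigma)\setminus\{a\}$ examined in the run on $\mathcal T'$ is again $d$, since the examination order agrees up to $d$ and nothing in $\mathcal Q(\sigma)$ is examined afterwards. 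Thus $d_\sigma(\mathcal T')=d$, and so $\iota_\sigma(\mathcal T')=\mathcal T'\triangle\{d\}=\mathcal T$.

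\emph{Parity.} This is immediate, since $|\mathcal T\triangle\{d\}|=|\mathcal T|\pm1$.

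\emph{Main obstacle.} The delicate point is the claim that the runs agree up to $d$ and that the run on $\mathcal T'$ only rejects afterwards. It relies on the fact that nothing in $\mathcal Q(\sigma)$ is examined after $d$ in the run on $\mathcal T$; that fact makes the accepted set already final at the moment $d$ is decided.
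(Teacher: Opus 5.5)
Your proof is correct and follows the paper's argument essentially step for step. It uses the same reformulation $\mathcal T\subseteq\mathcal Q(\sigma)\subseteq\mathcal T\cup\mathcal R(\mathcal T)$ via $\mathcal F(\mathcal T)=\mathcal F_1(\mathcal T)\sqcup\mathcal D_{\mathcal T}$, the same comparison of the runs on $\mathcal T$ and $\mathcal T\mathbin{\triangle}\{d\}$ (identical before $d$, flipped at $d$, only rejections afterwards), and the same observation that all of $\mathcal Q(\sigma)\setminus\{a\}$ has been examined by the time $d$ is, so $d$ remains the last examined violated constraint; your explicit case split on whether $d\in\mathcal T$ simply spells out what the paper states in one sentence.
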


\begin{proof}
Since $\mathcal F(\mathcal T)=\mathcal F_1(\mathcal T)\sqcup\mathcal D_{\mathcal T}$,
\[
\sigma\in E_{\mathcal T}^{\star}
\quad\Longleftrightarrow\quad
\mathcal T\subseteq \mathcal Q(\sigma)\subseteq \mathcal T\cup \mathcal R(\mathcal T).
\]
Thus the selection procedure on $\mathcal T$ examines every constraint
in $\mathcal Q(\sigma)\setminus\{a\}$, so $d_\sigma(\mathcal T)$ is defined.

Write $d=d_\sigma(\mathcal T)$. Every accepted constraint belongs to
$\mathcal T\subseteq \mathcal Q(\sigma)$, and hence no constraint is accepted
after $d$ is examined. On input $\mathcal T\mathbin{\triangle}\{d\}$,
the procedure makes the same decisions before $d$, reverses
the decision at $d$, and rejects every subsequent candidate.
Its output is therefore $\mathcal T\mathbin{\triangle}\{d\}$, which is
a rooted $2$-tree by \cref{lem:canonical}.

By the time $d$ is examined, both runs have examined every
constraint in $\mathcal Q(\sigma)\setminus\{a\}$. The modified run accepts
only constraints in $\mathcal Q(\sigma)$, so
\[
\iota_\sigma(\mathcal T)\subseteq \mathcal Q(\sigma)
\subseteq\iota_\sigma(\mathcal T)\cup \mathcal R(\iota_\sigma(\mathcal T)).
\]
By the equivalence above,
$\sigma\in E_{\iota_\sigma(\mathcal T)}^{\star}$.
Moreover, $d$ is again the last examined constraint in
$\mathcal Q(\sigma)\setminus\{a\}$. Applying $\iota_\sigma$ a second time
therefore toggles $d$ again and returns $\mathcal T$.

Since $\mathcal T$ and $\iota_\sigma(\mathcal{T})$
differ at exactly $d$, this means they have opposite parity.
\end{proof}

The moves between the input and the corridors use a fresh
list in one direction and a conditioned list in the other.
To check detailed balance for these transitions, we need
the following relation between their distributions.

\begin{lemma}\label{lem:return-list}
For $\mathcal T\in\mathfrak T_{\mathcal D,a}$ and $z\in E_{\mathcal T}$, let
$e=(\mathcal T,z)$ and $q_e=\nu_{U_{\mathcal T}}(z)$. Then
\[
\mathbb P(e\in\mathbf L)=q_e,
\qquad
\Law(\mathbf L^e)=\Law(\mathbf L\mid e\in\mathbf L).
\]
\end{lemma}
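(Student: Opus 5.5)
The plan is to write the random list $\mathbf L$ as a deterministic function of a finite family of independent inputs, and then verify both claims coordinate by coordinate. The inputs are the local samples $(\sigma_b)_{b\in\mathcal D}$ with $\sigma_b\sim\nu_{\vbl(b)}$, together with, for each $\mathcal T'\in\mathfrak T_{\mathcal D,a}$, a completion $\omega_{\mathcal T'}\sim\nu_{U_{\mathcal T'}\setminus\vbl(\mathcal T')}$. All of these are mutually independent, and
\[
\sigma_{\mathcal T'}=\bigl((\sigma_t)_{t\in\mathcal T'},\omega_{\mathcal T'}\bigr).
\]
The entries of $\mathbf L$ are pairs $(\mathcal T',\sigma_{\mathcal T'})$, and each tree contributes at most one entry. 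Hence for $e=(\mathcal T,z)$,
\[
e\in\mathbf L\iff\sigma_{\mathcal T}=z
\iff \sigma_t=z|_{\vbl(t)}\ (t\in\mathcal T)\ \text{and}\ \omega_{\mathcal T}=z|_{U_{\mathcal T}\setminus\vbl(\mathcal T)}.
\]
Here the condition $\sigma_{\mathcal T}\in E_{\mathcal T}$ is automatic because $z\in E_{\mathcal T}$.

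For the first claim, the sets $\vbl(t)$ with $t\in\mathcal T$ are pairwise disjoint and disjoint from $U_{\mathcal T}\setminus\vbl(\mathcal T)$, and $\nu$ is a product measure. By independence, the probability of the event above is therefore
\[
\prod_{t\in\mathcal T}\nu_{\vbl(t)}\bigl(z|_{\vbl(t)}\bigr)\cdot\nu_{U_{\mathcal T}\setminus\vbl(\mathcal T)}\bigl(z|_{U_{\mathcal T}\setminus\vbl(\mathcal T)}\bigr)=\nu_{U_{\mathcal T}}(z)=q_e .
\]
The factorization uses only that the restrictions of $\nu_{U_{\mathcal T}}$ to these disjoint coordinate blocks multiply.

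For the second claim, the event $\{e\in\mathbf L\}$ fixes exactly the coordinates $(\sigma_t)_{t\in\mathcal T}$ and $\omega_{\mathcal T}$ of the input vector and places no constraint on the rest. The other inputs are $\sigma_b$ for $b\notin\mathcal T$ and $\omega_{\mathcal T'}$ for $\mathcal T'\neq\mathcal T$. Since all inputs are independent, conditioning on this event (which has positive probability $q_e$ whenever $q_e>0$) leaves the remaining inputs independent with their original laws. This is precisely the experiment defining $\mathbf L^e$: fix $\sigma_t=z|_{\vbl(t)}$ and the completion for $\mathcal T$, and draw everything else afresh. Because $\mathbf L$ is the same measurable function of the input vector in both experiments, $\Law(\mathbf L^e)=\Law(\mathbf L\mid e\in\mathbf L)$.

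The only point that needs care is the bookkeeping. Other trees $\mathcal T'$ share local samples $\sigma_t$ for $t\in\mathcal T\cap\mathcal T'$, so their entries are correlated with the fixed coordinates. This is harmless: the claim concerns the joint law of the whole list, which is a function of all inputs, and the fixed inputs take the same values in both experiments. One must also check that each tree contributes at most one entry, so that membership of $e$ is exactly the event $\sigma_{\mathcal T}=z$. Finally, if $q_e=0$ the conditional law is vacuous, but such $z$ never appear as labels reached by the chain; alternatively, $\nu_v$ can be assumed to have full support on $\Omega_v$.
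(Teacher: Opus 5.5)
Your proposal is correct and follows essentially the same argument as the paper: identify $\{e\in\mathbf L\}$ with $\{\sigma_{\mathcal T}=z\}$, factor its probability over the disjoint blocks $\vbl(t)$, $t\in\mathcal T$, and $U_{\mathcal T}\setminus\vbl(\mathcal T)$, and observe that conditioning pins exactly those independent inputs while leaving all others with their original laws, which is the experiment defining $\mathbf L^e$. Your worry about $q_e=0$ is already settled by the paper's standing assumption that $\Omega_v$ is the support of $\nu_v$, so $q_e>0$ always.
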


\begin{proof}
Since $z\in E_{\mathcal T}$, the event $e\in\mathbf L$ is exactly
$\sigma_{\mathcal T}=z$. This fixes $\sigma_t=z|_{\vbl(t)}$ for each
$t\in \mathcal T$ and fixes the completion for $\mathcal T$ to
$z|_{U_{\mathcal T}\setminus\vbl(\mathcal T)}$. These assignments are independent,
so
\[
\mathbb P(e\in\mathbf L)
=\left(\prod_{t\in \mathcal T}\nu_{\vbl(t)}(z|_{\vbl(t)})\right)
\nu_{U_{\mathcal T}\setminus\vbl(\mathcal T)}(z|_{U_{\mathcal T}\setminus\vbl(\mathcal T)})
=\nu_{U_{\mathcal T}}(z).
\]
All other local assignments and completions remain independent
with their original laws. This is precisely the experiment
defining $\mathbf L^e$.
\end{proof}

Resampling or restoring $U_{\mathcal T}$ preserves satisfaction of
$\mathcal D_{\mathcal T}$, since $\vbl(\mathcal D_{\mathcal T})\cap U_{\mathcal T}=\emptyset$. By induction,
transitions within each smaller copy remain in its state
space. Thus every transition stays in
$\mathsf S(\mathcal D,a)$.

\subsection{The stationary distribution}
\label{subsec:retained-notation}

We show that $K$ is reversible with respect to a distribution
$\pi$ close to $\mu^+$.

Define positive weights $w$ on $\mathsf S(\mathcal D,a)$
by induction on $|\mathcal D|$. On the input, output,
and tree states, set
\[
w(\mathsf{I}(\sigma))=\nu(\sigma),
\qquad
w(\mathsf{O}(\sigma))=\nu(\sigma),
\qquad
w(\mathsf{T}_{\mathcal T}(\sigma))=\nu(\sigma).
\]
On the layer states and the internal states of smaller
copies, set
\[
\begin{aligned}
w(\mathsf{L}_i^e(\sigma))
&=\nu_{U_{\mathcal T}}(z)\nu(\sigma),\\
w((e,i,X))
&=\nu_{U_{\mathcal T}}(z)\,w(X).
\end{aligned}
\]
Here, $w(X)$ on the right denotes the recursively defined
weight of $X$ in
$\mathsf S(\mathcal D_{\mathcal T,i},b_{\mathcal T,i})$.
For every state $X$ in this smaller state space, including
its input and output states, we therefore have
\[
w(\psi_{e,i}(X))
=\nu_{U_{\mathcal T}}(z)\,w(X).
\]
In particular, the weights agree on layers shared by
adjacent copies.

Define
\[
\widetilde w(X)=
\begin{cases}
w(X)/\beta, & X\in\{\mathsf{O}(\sigma):\sigma\in\Good(\mathcal D)\},\\
w(X), & \text{otherwise},
\end{cases}
\]
and let
\[
M=\sum_{X\in\mathsf S(\mathcal D,a)}\widetilde w(X).
\]

\begin{lemma}\label{lem:stationary}
The distribution
\[
\pi(X)=\frac{\widetilde w(X)}M
\qquad(X\in\mathsf S(\mathcal D,a))
\]
is stationary for $K$, and $K$ is reversible with respect to $\pi$.
\end{lemma}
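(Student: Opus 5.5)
We will verify detailed balance $\widetilde w(X)K(X,Y)=\widetilde w(Y)K(Y,X)$ for every ordered pair of distinct states; stationarity of $\pi$ then follows by summing over $X$. Since the holding probability $1/2$ and the uniform choice among three rules are common to all states except the outer outputs, it suffices for states other than $\mathsf O(\sigma)$ to check, for each rule separately, that $w(X)\,P_{\mathrm{rule}}(X,Y)=w(Y)\,P_{\mathrm{rule}'}(Y,X)$. Here $P_{\mathrm{rule}}$ is the kernel of $\mathsf{Step}(\cdot;\mathrm{rule})$, and $\mathrm{rule}'$ is the rule that undoes it. FORWARDS and BACKWARDS are paired with each other, except for the tree-state involution, where BACKWARDS is paired with itself; RESET is paired with FORWARDS at the last layer. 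Each rule carries weight $1/6$. We argue by induction on $|\mathcal D|$: the moves inside a smaller copy are reversible with respect to $w$ of that copy by the induction hypothesis applied to the auxiliary kernel $P_{\mathcal E,b}$. The embedding $\psi_{e,i}$ multiplies all weights by the common factor $\nu_{U_{\mathcal T}}(z)$. These two facts give detailed balance for all moves internal to a copy, including moves into and out of the shared layers. The shared layers cause no double counting, because a rule applied at $\mathsf L_i^e$ with $0<i<\ell(\mathcal T)$ dispatches BACKWARDS to the copy $i$ (whose output it is) and the other rules to the copy $i+1$. We will check that this matches how the smaller chains act at their own outputs and inputs under $P$: at an output, only BACKWARDS moves, and it goes to the input; at an input, only FORWARDS and RESET act. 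We must use the unmodified $P$ of the smaller copy here, which is exactly what the construction specifies.

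The remaining cases are the outer moves, which we list explicitly. (i) $\mathsf I(\sigma)\leftrightarrow\mathsf O(\sigma)$ for $\sigma\notin B_a$: we have $\widetilde w(\mathsf I)\cdot\tfrac16=\nu(\sigma)/6$ and $\widetilde w(\mathsf O)\cdot\tfrac{\beta}{6}=(\nu(\sigma)/\beta)(\beta/6)$, so the two sides agree. (ii) $\mathsf I(\sigma)\leftrightarrow\mathsf T_{\{a\}}(\sigma)$ when $\mathcal Q(\sigma)=\{a\}$. Here we must check that FORWARDS from $\mathsf I(\sigma)$ with $\sigma\in B_a$ indeed lands in a valid tree state. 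If $\mathcal Q(\sigma)\neq\{a\}$, the move to $\mathsf T_{\{a\}}(\sigma)$ is paired with the involution, which would break symmetry. We will use $\sigma\in\Good(\mathcal D^-)$, which forces $\mathcal Q(\sigma)=\{a\}$; together with $\sigma\in E^\star_{\{a\}}$ (because $\mathcal F_1$ and $\mathcal D_{\{a\}}$ are satisfied), this makes the move correct. Both weights equal $\nu(\sigma)$. (iii) The involution $\mathsf T_{\mathcal T}(\sigma)\leftrightarrow\mathsf T_{\iota_\sigma(\mathcal T)}(\sigma)$ is symmetric by \cref{lem:terminal-involution}, with equal weights. (iv) $\mathsf T_{\mathcal T}(y,z)\to\mathsf L_0^e(y,u)$ has probability $\tfrac16\nu_{U_{\mathcal T}}(u)$. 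The reverse move, BACKWARDS from $\mathsf L_0^e(y,u)$, is deterministic. Then $\nu(y,z)\nu_{U_{\mathcal T}}(u)=\nu_{U_{\mathcal T}}(z)\nu(y,u)$ holds by the product structure.

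The step we expect to be the main obstacle is (v), the list moves between $\mathsf I(\sigma)$ and $\mathsf L^e_{\ell(\mathcal T)}(\sigma)$. Forward (RESET) probability is $\tfrac16\,\mathbb E[\mathbf 1_{e\in\mathbf L}/|\mathbf L|]=\tfrac16\,q_e\,\mathbb E[1/|\mathbf L|\mid e\in\mathbf L]$. The reverse (FORWARDS at the last layer) probability is $\tfrac16\,\mathbb E[1/|\mathbf L^e|]$, which by \cref{lem:return-list} equals $\tfrac16\,\mathbb E[1/|\mathbf L|\mid e\in\mathbf L]$. Multiplying by the weights $\nu(\sigma)$ and $q_e\nu(\sigma)$ respectively gives equality. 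We also need that distinct labels in $\mathbf L$ are distinct entries, so that choosing $e$ has probability exactly $1/|\mathbf L|$. This holds because each tree contributes at most one entry. Finally, we note that every transition not listed either leaves the state unchanged or is one of the moves above, so no unmatched transitions remain. With $w>0$ finite and normalized by $M$, this yields reversibility and stationarity of $\pi$.
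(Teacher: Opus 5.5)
Your proposal is correct and follows essentially the same route as the paper. Both arguments check detailed balance edge by edge on the input--output, singleton, pairing, resampling and return-list edges, using \cref{lem:return-list} for the return-list edges, and both induct on $|\mathcal D|$ for the unmodified kernels $P$ of the smaller copies, whose weights are scaled by $\nu_{U_{\mathcal T}}(z)$. The only difference is organizational: the paper first proves $w(X)P(X,Y)=w(Y)P(Y,X)$ and then rescales at the outer output states, whereas you check $\widetilde w$ against $K$ directly, so your inductive hypothesis should be stated for $(P,w)$ rather than $(K,\widetilde w)$; your case (i) covers this trivially with $\beta=1$.
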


\begin{proof}
We first prove by induction on $|\mathcal D|$ that
\[
w(X)P(X,Y)=w(Y)P(Y,X).
\]
Call a pair of distinct states an edge if the transition
rules connect them. On a direct input--output edge
$\mathsf{I}(\sigma)\leftrightarrow \mathsf{O}(\sigma)$, a singleton edge
$\mathsf{I}(\sigma)\leftrightarrow \mathsf{T}_{\{a\}}(\sigma)$, or a pairing edge
$\mathsf{T}_{\mathcal T}(\sigma)\leftrightarrow
\mathsf{T}_{\mathcal T'}(\sigma)$, both endpoints have weight
$\nu(\sigma)$ and both transition probabilities are $1/6$.

On a resampling edge from $\mathsf{T}_{\mathcal T}(y,z)$ to
$\mathsf{L}_0^e(y,u)$, the forward probability is
$\nu_{U_{\mathcal T}}(u)/6$ and the reverse probability is $1/6$.
Since $\nu$ is a product distribution,
\[
\frac16\nu(y,z)\nu_{U_{\mathcal T}}(u)
=\frac16\nu_{U_{\mathcal T}}(z)\nu(y,u),
\]
which proves detailed balance on this edge.

For a return edge joining
$\mathsf{L}_{\ell(\mathcal T)}^e(\sigma)$ to $\mathsf{I}(\sigma)$,
let $e=(\mathcal T,z)$ and define
\[
\omega_e
=\mathbb E\left[
\frac{\mathbf1_{\{e\in\mathbf L\}}}{|\mathbf L|}
\right],
\]
where the fraction is zero when $\mathbf L$ is empty.
This is the probability of choosing $e$ from a fresh list.
Since $z\in E_{\mathcal T}$, the entry $e$ belongs to
$\mathbf L$ exactly when $\sigma_{\mathcal T}=z$. Thus
\[
q_e:=\mathbb P(e\in\mathbf L)=\nu_{U_{\mathcal T}}(z).
\]
By \cref{lem:return-list},
\[
\mathbb E\left[\frac1{|\mathbf L^e|}\right]
=\mathbb E\left[\frac1{|\mathbf L|}
  \,\middle|\,e\in\mathbf L\right]
=\frac{\omega_e}{q_e}.
\]
The probabilities of entering and leaving the corridor
along this edge are therefore $\omega_e/6$ and
$\omega_e/(6q_e)$, respectively. Hence
\[
\begin{aligned}
w(\mathsf{I}(\sigma))
P(\mathsf{I}(\sigma),\mathsf{L}_{\ell(\mathcal T)}^e(\sigma))
&=\nu(\sigma)\frac{\omega_e}{6}\\
&=q_e\nu(\sigma)\frac{\omega_e}{6q_e}\\
&=w(\mathsf{L}_{\ell(\mathcal T)}^e(\sigma))
P(\mathsf{L}_{\ell(\mathcal T)}^e(\sigma),\mathsf{I}(\sigma)).
\end{aligned}
\]

For two distinct states joined by an edge in a smaller copy,
the transition probabilities agree with those of the
auxiliary kernel $P$ for that smaller instance.
Both endpoint weights are multiplied by
$\nu_{U_{\mathcal T}}(z)$, so detailed balance follows
from the induction hypothesis.
When $|\mathcal D|=1$, there are no smaller copies, and
the checks above establish the base case.
Self-transitions satisfy detailed balance automatically.
Thus $w$ satisfies detailed balance for $P$.

We now pass to $K$. At each output state $\mathsf{O}(\sigma)$,
where $\sigma\in\Good(\mathcal D)$,
the probability of moving to $\mathsf{I}(\sigma)$ decreases from
$1/6$ to $\beta/6$, while the weight increases from
$\nu(\sigma)$ to $\nu(\sigma)/\beta$. Therefore
\[
\begin{aligned}
\widetilde w(\mathsf{O}(\sigma))K(\mathsf{O}(\sigma),\mathsf{I}(\sigma))
&=\frac{\nu(\sigma)}{\beta}\frac{\beta}{6}\\
&=\nu(\sigma)\frac16\\
&=\widetilde w(\mathsf{I}(\sigma))K(\mathsf{I}(\sigma),\mathsf{O}(\sigma)).
\end{aligned}
\]
On every other edge, the weights and transition probabilities
are unchanged. Thus $\widetilde w$ satisfies detailed balance
for $K$. Dividing by $M$ shows that $K$ is reversible with
respect to $\pi$, and hence $\pi$ is stationary.
\end{proof}

We next bound the distance between $\pi$ and $\mu^+$.
For $\sigma\in\Good(\mathcal D)$,
\[
\pi(\mathsf{O}(\sigma))=\frac{\nu(\sigma)}{\beta M},
\]
so $\pi$ conditioned on the output states is exactly $\mu^+$.
Consequently,
\[
\begin{aligned}
\norm{\pi-\mu^+}_{\TV}
&=1-\pi\bigl(\{\mathsf{O}(\sigma):\sigma\in\Good(\mathcal D)\}\bigr)\\
&=\frac1M\sum_{X\notin\{\mathsf{O}(\sigma):\sigma\in\Good(\mathcal D)\}}w(X).
\end{aligned}
\]
We therefore need a bound on the total weight outside
the output states.

For use in the mass and flow estimates, let
\[
\eta_{\mathcal T}
=\theta_{\mathcal T}\lambda_{\mathcal T}^2.
\]
The definition of $\lambda_{\mathcal T}$,
the inequalities
$Z_{\mathcal D}\leq Z_-\leq Z_{\mathcal D_{\mathcal T}}$,
and the deletion bound \eqref{eq:deletion-ratio} give
\[
\theta_{\mathcal T}
\frac{Z_{\mathcal D_{\mathcal T}}}{Z_-}
\leq \eta_{\mathcal T}
\leq
\theta_{\mathcal T}
\prod_{b\in N_{\mathcal D}^2[\mathcal T]}h_b^2
\leq \prod_{t\in\mathcal T}\rho_t.
\]
The last inequality uses
$\theta_{\mathcal T}\leq\prod_{t\in\mathcal T}p_t$,
$h_b^2\leq j_b$, and the definition of $\rho_t$.
Recall that
\[
\ell_x(\mathcal T)
=\sum_{i=1}^{\ell(\mathcal T)}x_{b_{\mathcal T,i}}.
\]
By \eqref{eq:tree-weight-sum},
\begin{equation}\label{eq:coefficient-sums}
\sum_{\mathcal T}\eta_{\mathcal T}\leq4cx_a,
\qquad
\sum_{\mathcal T}\eta_{\mathcal T}\ell_x(\mathcal T)<10cx_a.
\end{equation}

\begin{lemma}\label{lem:mass}
The total weight outside the output states satisfies
\[
\sum_{X\in\mathsf S(\mathcal D,a)\setminus\{\mathsf{O}(\sigma):\sigma\in\Good(\mathcal D)\}}w(X)
\leq (1+2mx_a)Z_-.
\]
\end{lemma}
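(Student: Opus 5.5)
We prove the bound by induction on $|\mathcal D|$, in the strengthened form that for every instance $(\mathcal E,b)$ the total weight of $\mathsf S(\mathcal E,b)$ outside its output states is at most $(1+2mx_b)Z_{\mathcal E\setminus\{b\}}$. We split the non-output states of $\mathsf S(\mathcal D,a)$ into four groups and bound each group separately.

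\emph{Input and tree states.} The input states contribute $\sum_{\sigma\in\Good(\mathcal D^-)}\nu(\sigma)=Z_-$. For the tree states, the weight is $\sum_{\mathcal T}\nu(E^\star_{\mathcal T})=\sum_{\mathcal T}\theta_{\mathcal T}Z_{\mathcal D_{\mathcal T}}$, using the independence of $U_{\mathcal T}$ from $\vbl(\mathcal D_{\mathcal T})$. Since $\theta_{\mathcal T}Z_{\mathcal D_{\mathcal T}}/Z_-\le\eta_{\mathcal T}$, \eqref{eq:coefficient-sums} gives a total of at most $4cx_aZ_-$.

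\emph{Layer states.} Fix $e=(\mathcal T,z)$. Layer $i$ has weight $\nu_{U_{\mathcal T}}(z)Z_{\mathcal D_{\mathcal T,i}}$. Summing over $z\in E_{\mathcal T}$ gives $\theta_{\mathcal T}Z_{\mathcal D_{\mathcal T,i}}\le\theta_{\mathcal T}Z_{\mathcal D_{\mathcal T}}$. There are $\ell(\mathcal T)+1\le m+1$ layers, so this group contributes at most $(m+1)\sum_{\mathcal T}\eta_{\mathcal T}Z_-$. This is too crude for $m$ large. To avoid paying a factor $m$ here, I would \emph{not} count the layers separately. Instead, layer $i$ for $i\ge1$ is the output layer of copy $i$, and layer $0$ is the input layer of copy $1$. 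So I would charge each layer to the copy for which it is the input, together with the last layer $\mathsf L_{\ell}$.

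\emph{Internal states and the induction.} For copy $i$ of corridor $e$, the induction hypothesis applied to $(\mathcal D_{\mathcal T,i},b_{\mathcal T,i})$ bounds the weight of its input layer plus its internal states by $\nu_{U_{\mathcal T}}(z)(1+2mx_{b_{\mathcal T,i}})Z_{\mathcal D_{\mathcal T,i-1}}$. Summing over $z$ and $i$, and adding the last layer, the corridors contribute at most
\[
\sum_{\mathcal T}\theta_{\mathcal T}Z_{\mathcal D_{\mathcal T}}\Bigl(1+\ell(\mathcal T)+2m\ell_x(\mathcal T)\Bigr).
\]
The term $\ell(\mathcal T)$ is the obstacle: it is not weighted by $x$. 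I would try to absorb it by a sharper comparison. We have $Z_{\mathcal D_{\mathcal T,i-1}}\le Z_{\mathcal D_{\mathcal T}}$, but a factor of order $\ell(\mathcal T)\le m$ is unavoidable from the layers themselves. The fix is to note that $\sum_{\mathcal T}\eta_{\mathcal T}\ell(\mathcal T)\le m\sum_{\mathcal T}\eta_{\mathcal T}\le 4cmx_a$. Indeed $\ell(\mathcal T)\le m$, and the needed bound carries $mx_a$ anyway. So the corridors contribute at most
\[
Z_-\bigl(4cx_a+4cmx_a+20cmx_a\bigr)
\]
by \eqref{eq:coefficient-sums}.

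\emph{Combining.} The total is $Z_-\bigl(1+4cx_a+4cx_a+24cmx_a\bigr)\le(1+2mx_a)Z_-$, since $c=1/100$ and $m\ge1$. The main point to double-check is the bookkeeping of shared layers. Each layer must be counted exactly once. With the convention above, the induction hypothesis counts the input but not the output of each copy, which is exactly why the strengthened inductive statement excludes only outputs. The base case $|\mathcal D|=1$ is immediate, since the only non-output states are the inputs, the states $\mathsf T_{\{a\}}$, and single layers.
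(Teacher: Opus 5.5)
Your argument is correct and is essentially the paper's proof. It uses the same induction on $|\mathcal D|$ with $m$ fixed, and the same partition of the non-output states: input states, tree states, each smaller copy without its output states, and the last layer of each corridor. It ends with the same estimate $1+\sum_{\mathcal T}\eta_{\mathcal T}\bigl(2+\ell(\mathcal T)+2m\ell_x(\mathcal T)\bigr)\le 1+(8c+24cm)x_a$.

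One small correction to your commentary: counting the layers separately is not a problem because of the factor $m$. The target already allows $2mx_a$, and you end up paying $\sum_{\mathcal T}\eta_{\mathcal T}\ell(\mathcal T)\le 4cmx_a$ anyway. The real reason to charge each layer to the copy it feeds is to avoid double counting, since the induction hypothesis for a copy already includes that copy's input layer.
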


\begin{proof}
We induct on $|\mathcal D|$, keeping $m=|\mathcal C|$ fixed.
For each corridor $e=(\mathcal T,z)$, count each smaller copy
without its output states, and count the last layer separately.
Each shared layer is then counted exactly once, as the input
of the next copy. Together with the input and tree states,
these pieces partition the states outside the output set
of the full construction.

The input states have total weight $Z_-$.
Fix $\mathcal T\in\mathfrak T_{\mathcal D,a}$.
Since $E_{\mathcal T}$ and the constraints in
$\mathcal D_{\mathcal T}$ involve disjoint variables,
the tree states $\mathsf{T}_{\mathcal T}(\sigma)$ have total weight
\[
\nu(E_{\mathcal T}^{\star})
=\theta_{\mathcal T}Z_{\mathcal D_{\mathcal T}}.
\]
For a fixed $z\in E_{\mathcal T}$, the last corridor layer
has weight $\nu_{U_{\mathcal T}}(z)Z_-$.
Summing over $z$, these layers have total weight
$\theta_{\mathcal T}Z_-$.

In copy $(e,i)$, the weights are those of the
smaller instance multiplied by $\nu_{U_{\mathcal T}}(z)$.
By induction, the weight of this copy excluding its output
states is at most
\[
\nu_{U_{\mathcal T}}(z)
Z_{\mathcal D_{\mathcal T,i-1}}
\bigl(1+2mx_{b_{\mathcal T,i}}\bigr).
\]
Summing over $z\in E_{\mathcal T}$ gives
\[
\theta_{\mathcal T}
Z_{\mathcal D_{\mathcal T,i-1}}
\bigl(1+2mx_{b_{\mathcal T,i}}\bigr).
\]

Both $Z_-$ and $Z_{\mathcal D_{\mathcal T,i-1}}$ are at most
$Z_{\mathcal D_{\mathcal T}}$. Thus the combined contribution
of the tree states and corridors associated with
$\mathcal T$ is at most
\[
\theta_{\mathcal T}Z_{\mathcal D_{\mathcal T}}
\bigl(2+\ell(\mathcal T)+2m\ell_x(\mathcal T)\bigr).
\]
Here the two constant terms account for the tree states
and the last corridor layers. Adding the input weight
and dividing by $Z_-$, we obtain
\[
\begin{aligned}
\frac1{Z_-}\sum_{X\notin\{\mathsf{O}(\sigma):\sigma\in\Good(\mathcal D)\}}w(X)
&\leq
1+\sum_{\mathcal T}
\eta_{\mathcal T}
\bigl(2+\ell(\mathcal T)+2m\ell_x(\mathcal T)\bigr)\\
&\leq
1+\bigl(4c(m+2)+20cm\bigr)x_a\\
&\leq 1+2mx_a.
\end{aligned}
\]
We used \eqref{eq:coefficient-sums},
$\ell(\mathcal T)\leq m$, $m\geq1$, and $c\leq1/100$.
When $|\mathcal D|=1$, there are no smaller copies,
so the same calculation establishes the base case.
\end{proof}

The partition in the proof also shows, by induction, that
the input sets of the full construction and all its
embedded copies are pairwise disjoint and lie outside
the output states of the full construction.
Indeed, within each smaller copy, its input states and
those of its nested copies avoid its output states.
They therefore lie in the piece assigned to that copy.
These pieces and the input of the full construction are
disjoint. By \cref{lem:mass}, the combined weight of all
these input states is at most $(1+2mx_a)Z_-$.

The output states have total weight $Z_{\mathcal D}/\beta$
under $\widetilde w$. Hence \cref{lem:mass} gives
\[
\begin{aligned}
M&=\frac{Z_{\mathcal D}}{\beta}
+\sum_{X\notin\{\mathsf{O}(\sigma):\sigma\in\Good(\mathcal D)\}}w(X)\\
&\leq
\frac{Z_{\mathcal D}}{\beta}+(1+2mx_a)Z_-.
\end{aligned}
\]
Using $(1-c)Z_-\leq Z_{\mathcal D}\leq Z_-$ and $x_a<1$,
we conclude that
\begin{equation}\label{eq:stationary-close}
\begin{aligned}
\frac M{Z_-}
&\leq \frac1\beta+1+2m,\\
\norm{\pi-\mu^+}_{\TV}
&=\frac1M\sum_{X\notin\{\mathsf{O}(\sigma):\sigma\in\Good(\mathcal D)\}}w(X)\\
&\leq
\frac{\beta(1+2mx_a)Z_-}{Z_{\mathcal D}}
\leq \frac{\beta(1+2m)}{1-c}.
\end{aligned}
\end{equation}

\subsection{The flow}\label{subsec:flow-proof}

We next construct a signed flow from $\mu^-$ to $\mu^+$.
Its energy will control
$\norm{(\mu^- -\mu^+)K^N}_{\TV}$ in the proof of
\cref{prop:measure}.

A \emph{signed flow} assigns a real number $\varphi(X,Y)$
to each oriented edge, with
$\varphi(Y,X)=-\varphi(X,Y)$.
Its divergence at $X$ is
\[
\partial\varphi(X)
=\sum_{Y:\{X,Y\}\text{ is an edge}}\varphi(X,Y).
\]
For an edge $\{X,Y\}$, define its capacity by
\[
\operatorname{cap}(X,Y)
=w(X)P(X,Y)
=\widetilde w(X)K(X,Y).
\]
By detailed balance, this does not depend on the order
of $X$ and $Y$. The capacities are independent of $\beta$.
In sums indexed by $\{X,Y\}$, each edge is counted once,
with either ordering of its endpoints.
The energy of $\varphi$ is
\[
\sum_{\{X,Y\}}
\frac{\varphi(X,Y)^2}{\operatorname{cap}(X,Y)}.
\]

\begin{lemma}\label{lem:flow}
There is a signed flow $\varphi$ on the edges of $P$ satisfying
\[
\partial\varphi=\mu^- -\mu^+
\]
and
\[
\sum_{\{X,Y\}}
\frac{\varphi(X,Y)^2}{\operatorname{cap}(X,Y)}
\leq\frac{6/(1-c)+mx_a}{Z_-}.
\]
\end{lemma}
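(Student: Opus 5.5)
We construct $\varphi$ by induction on $|\mathcal D|$, following \cref{prop:law-recursion} term by term. For a corridor $e=(\mathcal T,z)$ and $1\le i\le\ell(\mathcal T)$, the induction hypothesis gives a flow $\varphi_{e,i}$ on $\mathsf S(\mathcal D_{\mathcal T,i},b_{\mathcal T,i})$. Its divergence is $\mu^-_{\mathcal D_{\mathcal T,i}}-\mu^+_{\mathcal D_{\mathcal T,i}}$, and its energy is at most $(6/(1-c)+mx_{b_{\mathcal T,i}})/Z_{\mathcal D_{\mathcal T,i-1}}$. We push $\varphi_{e,i}$ forward by $\psi_{e,i}$ and scale it by $\kappa_{\mathcal T}:=\lambda_{\mathcal T}\nu_{U_{\mathcal T}}(z)$.

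The outer flow is assembled in four pieces.
\begin{enumerate}
\item On each return edge $\mathsf{L}^e_{\ell(\mathcal T)}(\sigma)\to\mathsf I(\sigma)$, put flow $\kappa_{\mathcal T}\mu_{\mathcal D^-}(\sigma)$.
\item On each resampling edge $\mathsf T_{\mathcal T}(y,z)\to\mathsf L_0^e(y,u)$, put flow $\lambda_{\mathcal T}\,\nu(y,z)\nu_{U_{\mathcal T}}(u)/Z_{\mathcal D_{\mathcal T}}\cdot(Z_{\mathcal D_{\mathcal T}}/Z_{\mathcal D})$. This pushes $\lambda_{\mathcal T}\mathbf 1_{E_{\mathcal T}}\mu_{\mathcal D_{\mathcal T}}$ onto $\sum_z\kappa_{\mathcal T}\mu_{\mathcal D_{\mathcal T}}$ on layer $0$.
\item Each tree state $\mathsf T_{\mathcal T}(\sigma)$ then has net outflow $\lambda_{\mathcal T}\nu(\sigma)/Z_{\mathcal D_{\mathcal T}}\cdot Z_{\mathcal D_{\mathcal T}}/Z_{\mathcal D}=(-1)^{|\mathcal T|-1}\nu(\sigma)/Z_{\mathcal D}$. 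By \cref{lem:terminal-involution} the paired state $\mathsf T_{\iota_\sigma(\mathcal T)}(\sigma)$ has the opposite value. We cancel these by flow $\pm\nu(\sigma)/Z_{\mathcal D}$ along the pairing edge. The unpaired state $\mathsf T_{\{a\}}(\sigma)$, with $\mathcal Q(\sigma)=\{a\}$, is connected to $\mathsf I(\sigma)$ with flow $\nu(\sigma)/Z_{\mathcal D}$ toward the tree state.
\item Finally, send $\nu(\sigma)/Z_{\mathcal D}$ from $\mathsf I(\sigma)$ to $\mathsf O(\sigma)$ for $\sigma\in\Good(\mathcal D)$.
\end{enumerate}

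Checking the divergence is the identity in \cref{prop:law-recursion} read statewise. At the corridor layers, consecutive scaled copies cancel, since the output of copy $i$ and the input of copy $i+1$ both carry $\kappa_{\mathcal T}\mu_{\mathcal D_{\mathcal T,i}}$. At $\mathsf I(\sigma)$, the total is $\mu_{\mathcal D^-}(\sigma)$. To see this, use the pointwise identity \eqref{eq:pointwise-expansion} and the ratio $Z_{\mathcal D^-}/Z_{\mathcal D}=1/(1-r)$. The outflows to $\mathsf O$ and $\mathsf T_{\{a\}}$ together give $\nu(\sigma)/Z_{\mathcal D}$. The return inflow is $-\sum_{\mathcal T}\lambda_{\mathcal T}\theta_{\mathcal T}\mu_{\mathcal D^-}(\sigma)$, which equals $-\frac r{1-r}\mu_{\mathcal D^-}(\sigma)$ as computed in that proof. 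These sum to $\mu_{\mathcal D^-}(\sigma)$. At $\mathsf O(\sigma)$ the divergence is $-\mu_{\mathcal D}(\sigma)$.

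For the energy, we use the capacities from the proof of \cref{lem:stationary}.
\begin{itemize}
\item The input–output, singleton and pairing edges have capacity $\nu(\sigma)/6$ and flow $\nu(\sigma)/Z_{\mathcal D}$. Each state lies on at most one such edge of each type, so these contribute at most about $6\cdot 3Z/Z_{\mathcal D}^2$.
\item This constant is the step that most needs care. The target is $6/((1-c)Z_-)$, which forces a sharper accounting. The pairing edge carrying $\nu(\sigma)$ should be charged once per $\sigma\in\Good(\mathcal D^-)\cap B_a$ rather than once per tree. Together with $Z_{\mathcal D}\ge(1-c)Z_-$, this gives $6(Z_{\mathcal D}+\nu(\Good(\mathcal D^-)\cap B_a)+\dots)/Z_{\mathcal D}^2$.
\item If the pairing contributions overflow this, I would first try splitting the pairing flow more cleverly. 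The fallback is to absorb them into the $mx_a$ term using \eqref{eq:coefficient-sums}, since the states with $|\mathcal Q(\sigma)|\ge2$ have $\nu$-mass $O(\sum_{\mathcal T}\eta_{\mathcal T}Z_-)$.
\item On the resampling edges, capacity is $\nu(y,z)\nu_{U_{\mathcal T}}(u)/6$. Summing gives $6\lambda_{\mathcal T}^2\theta_{\mathcal T}Z_{\mathcal D_{\mathcal T}}^{-1}\le 6\eta_{\mathcal T}/Z_-$.
\item On the return edges, capacity is $\nu(\sigma)\omega_e/6$. This is the main obstacle: we need $\omega_e\gtrsim q_e$ in an averaged sense. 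By Jensen, $\omega_e\ge q_e/\mathbb E[|\mathbf L^e|]$. Bounding $\mathbb E|\mathbf L^e|\le 1+\mathbb E A$-type quantities via \cref{lem:tree-bounds} (the second moment bound on $A$ is exactly what is needed here) gives $O(\eta_{\mathcal T})/Z_-$ after summing over $z$.
\item The scaled copies contribute $\sum_{z}\kappa_{\mathcal T}^2/\nu_{U_{\mathcal T}}(z)$ times the inductive energy. Using $Z_{\mathcal D_{\mathcal T,i-1}}\ge Z_-$, this is at most $\eta_{\mathcal T}\sum_i(6/(1-c)+mx_{b_{\mathcal T,i}})/Z_-$.
\end{itemize}
By \eqref{eq:coefficient-sums}, the total of the tree contributions is $O(cx_a)(1+m)+10c\,m x_a$. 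This is at most $mx_a$ once the constants are checked with $c=1/100$, which closes the induction.
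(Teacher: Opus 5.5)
Your flow is the paper's: the edge classes, the flow values, the divergence check and the inductive scaling $\gamma_e^2/q_e=\lambda_{\mathcal T}^2q_e$ on the copies all match. Two small corrections. First, drop the stray factor $Z_{\mathcal D_{\mathcal T}}/Z_{\mathcal D}$ in item 2. The resampling flow should be $\lambda_{\mathcal T}\nu(y,z)\nu_{U_{\mathcal T}}(u)/Z_{\mathcal D_{\mathcal T}}=\gamma_e\mu_{\mathcal D_{\mathcal T}}(y,u)$, which is what your item 3 and your energy computation actually use. Second, your Jensen step $\omega_e\ge q_e/\mathbb E|\mathbf L^e|$ is equivalent to the paper's Cauchy--Schwarz, since both give $q_e^2/\omega_e\le\mathbb E[\mathbf 1_{\{e\in\mathbf L\}}|\mathbf L|]$. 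For the singleton and pairing edges, your ``fallback'' is exactly the paper's accounting, so you need not hedge. Charge each such edge to its tree-state endpoint(s); this gives total energy at most $\frac{6}{Z_{\mathcal D}^2}\sum_{\mathcal T}\theta_{\mathcal T}Z_{\mathcal D_{\mathcal T}}=6\sum_{\mathcal T}\eta_{\mathcal T}/Z_{\mathcal D_{\mathcal T}}$. The constant $6/(1-c)$ then comes from the input--output edges alone.

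The step that fails as written is the one you flag as the main obstacle: the return edges. There is no per-tree bound $\sum_z\lambda_{\mathcal T}^2q_e\mathbb E|\mathbf L^e|=O(\eta_{\mathcal T})$. Conditioning on $e\in\mathbf L$ makes every constraint of $\mathcal T$ active, so every subset of $\mathcal T$ containing $a$ and connected in $G_{\mathcal D}^2$ becomes a candidate entry. Hence $\mathbb E[|\mathbf L|\mid e\in\mathbf L]$ can grow exponentially in $|\mathcal T|$; the paper remarks in \cref{sec:implementation} that this conditional expectation need not be uniformly small. You also cannot fall back on $\mathbb EA^2\le2cx_a$ from \cref{lem:tree-bounds}. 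That sum carries only the weights $p_t$, whereas here each term carries $\lambda_{\mathcal T}^2=(Z_{\mathcal D_{\mathcal T}}/Z_{\mathcal D})^2$, which is unbounded in $|\mathcal T|$ and cannot be pulled out.

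The fix is to bound the sum over all entries jointly:
$\sum_e\lambda_{\mathcal T}^2q_e^2/\omega_e\le\sum_{\mathcal T,\mathcal T'}\lambda_{\mathcal T}^2\,\mathbb P(\sigma_{\mathcal T}\in E_{\mathcal T},\ \sigma_{\mathcal T'}\in E_{\mathcal T'})$.
Use $\lambda_{\mathcal T}^2\le\prod_{b\in N_{\mathcal D}^2[\mathcal T]}j_b$, which follows from the deletion bound and $h_b^2\le j_b$, together with independence of the local samples. Each term is then at most $\prod_{t\in\mathcal T\cup\mathcal T'}\rho_t$. The $\rho$-weighted pair bound \eqref{eq:pair-sum} gives $\frac{8c}{3}x_a$, so the return edges contribute at most $16cx_a/Z_-$.

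Finally, track the constants explicitly rather than writing $O(cx_a)$, because there is little slack. The paper's total is $\frac{6}{1-c}+\bigl[64c+(\frac{24c}{1-c}+10c)m\bigr]x_a$, and closing the induction needs $c(74+24/(1-c))<1$. At $c=1/100$ this is about $0.98$.
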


\begin{proof}
We induct on $|\mathcal D|$, keeping $m=|\mathcal C|$ fixed.
For a smaller instance $(\mathcal E,b)$, let
$\varphi_{\mathcal E,b}$ be the flow given by induction.
For $e=(\mathcal T,z)$ with
$\mathcal T\in\mathfrak T_{\mathcal D,a}$ and
$z\in E_{\mathcal T}$, recall that
$q_e=\nu_{U_{\mathcal T}}(z)$, and let
\[
\gamma_e=\lambda_{\mathcal T}q_e.
\]
We use this coefficient to scale the flow through the
corridor $e$ and its smaller copies.

On each edge of the indicated type, define
\[
\begin{aligned}
\varphi(\mathsf{I}(\sigma),\mathsf{O}(\sigma))
&=\frac{\nu(\sigma)}{Z_{\mathcal D}},\\
\varphi(\mathsf{I}(\sigma),\mathsf{T}_{\{a\}}(\sigma))
&=\frac{\nu(\sigma)}{Z_{\mathcal D}},\\
\varphi(\mathsf{T}_{\mathcal T}(\sigma),
        \mathsf{T}_{\iota_\sigma(\mathcal T)}(\sigma))
&=(-1)^{|\mathcal T|}
  \frac{\nu(\sigma)}{Z_{\mathcal D}},\\
\varphi(\mathsf{T}_{\mathcal T}(y,z),
        \mathsf{L}_0^e(y,u))
&=\gamma_e\mu_{\mathcal D_{\mathcal T}}(y,u),\\
\varphi(\mathsf{L}_{\ell(\mathcal T)}^e(\sigma),
        \mathsf{I}(\sigma))
&=\gamma_e\mu_{\mathcal D^-}(\sigma).
\end{aligned}
\]
In each smaller copy $(e,i)$, set
\[
\varphi(\psi_{e,i}(X),
        \psi_{e,i}(Y))
=\gamma_e
 \varphi_{\mathcal D_{\mathcal T,i},b_{\mathcal T,i}}(X,Y).
\]
Reverse orientations receive the negative values.
These definitions are consistent because paired trees
have opposite parity and distinct copies joining the layers
share no edges. When $|\mathcal D|=1$, there are no smaller
copies.

We first check the divergence.
Since $\mathcal D_{\mathcal T}$ involves no variables in
$U_{\mathcal T}$, the definition of $\lambda_{\mathcal T}$
gives
\[
\sum_{u\in\Omega_{U_{\mathcal T}}}
\varphi(\mathsf{T}_{\mathcal T}(y,z),
        \mathsf{L}_0^e(y,u))
=(-1)^{|\mathcal T|-1}
  \frac{\nu(y,z)}{Z_{\mathcal D}}.
\]
At the tree state $\mathsf{T}_{\mathcal T}(y,z)$, this cancels
the contribution from the pairing or singleton edge.

At a shared layer $\mathsf{L}_i^e$,
where $1\leq i<\ell(\mathcal T)$, copies $i$ and $i+1$
contribute
\[
-\gamma_e
 \mu_{\mathcal D_{\mathcal T,i}}(\sigma)
\quad\text{and}\quad
\gamma_e
 \mu_{\mathcal D_{\mathcal T,i}}(\sigma),
\]
respectively, so their contributions cancel.
At the first and last layers, the contributions from
the smaller copies cancel those from the resampling
and return edges. If $\ell(\mathcal T)=0$, the resampling
and return contributions cancel directly.
Internal states of smaller copies have zero divergence
by induction.

Finally, write $r=r_{\mathcal D,a}$.
Using
$\sum_{\mathcal T}\lambda_{\mathcal T}\theta_{\mathcal T}
=r/(1-r)$ and $Z_{\mathcal D}=(1-r)Z_-$, we obtain
\[
\begin{aligned}
\partial\varphi(\mathsf{I}(\sigma))
&=\frac{\nu(\sigma)}{Z_{\mathcal D}}
-\mu_{\mathcal D^-}(\sigma)
 \sum_{\mathcal T}\lambda_{\mathcal T}\theta_{\mathcal T}\\
&=\mu_{\mathcal D^-}(\sigma).
\end{aligned}
\]
At each output state,
$\partial\varphi(\mathsf{O}(\sigma))=-\mu_{\mathcal D}(\sigma)$.
This proves the divergence identity.

We now bound the energy.
Recall that
$\eta_{\mathcal T}=\theta_{\mathcal T}\lambda_{\mathcal T}^2$.
A singleton or pairing edge incident to
$\mathsf{T}_{\mathcal T}(\sigma)$ has energy
$6\nu(\sigma)/Z_{\mathcal D}^2$.
Summing over tree states counts every singleton edge once
and every pairing edge twice, so their total energy is
at most
\[
\frac6{Z_{\mathcal D}^2}
\sum_{\mathcal T}\theta_{\mathcal T}
                    Z_{\mathcal D_{\mathcal T}}
=6\sum_{\mathcal T}
  \frac{\eta_{\mathcal T}}{Z_{\mathcal D_{\mathcal T}}}.
\]
For $e=(\mathcal T,z)$, a resampling edge from
$\mathsf{T}_{\mathcal T}(y,z)$ to $\mathsf{L}_0^e(y,u)$ has
capacity $q_e\nu(y,u)/6$.
Summing its energy over $y,u$ gives
$6\lambda_{\mathcal T}^2q_e/Z_{\mathcal D_{\mathcal T}}$.
Thus all resampling edges together have energy
$6\sum_{\mathcal T}\eta_{\mathcal T}/Z_{\mathcal D_{\mathcal T}}$.
The input--output edges contribute $6/Z_{\mathcal D}$.
Combining these bounds gives
\[
\frac6{Z_{\mathcal D}}
+12\sum_{\mathcal T}
   \frac{\eta_{\mathcal T}}{Z_{\mathcal D_{\mathcal T}}}
\leq
\frac1{Z_-}
\left(\frac6{1-c}+12\sum_{\mathcal T}\eta_{\mathcal T}\right).
\]

For a fixed entry $e=(\mathcal T,z)$, the return edge at
$\sigma$ has capacity $\omega_e\nu(\sigma)/6$.
Summing over $\sigma$, the return edges for this entry
have total energy
\[
\frac{6\lambda_{\mathcal T}^2q_e^2}{\omega_e Z_-}.
\]
Since $q_e=\mathbb P(e\in\mathbf L)$,
Cauchy--Schwarz gives
\[
\begin{aligned}
q_e^2
&\leq
\mathbb E\left[
\frac{\mathbf1_{\{e\in\mathbf L\}}}{|\mathbf L|}
\right]
\mathbb E\left[
\mathbf1_{\{e\in\mathbf L\}}|\mathbf L|
\right]\\
&=\omega_e\,
\mathbb E\left[
\mathbf1_{\{e\in\mathbf L\}}|\mathbf L|
\right].
\end{aligned}
\]
Multiplying by $\lambda_{\mathcal T}^2/\omega_e$
and summing over entries, we obtain
\[
\begin{aligned}
\sum_{e=(\mathcal T,z)}
\frac{\lambda_{\mathcal T}^2q_e^2}{\omega_e}
&\leq
\mathbb E\left[
|\mathbf L|
\sum_{(\mathcal T,z)\in\mathbf L}\lambda_{\mathcal T}^2
\right]\\
&=
\sum_{\mathcal T,\mathcal T'}
\lambda_{\mathcal T}^2
\mathbb P(\sigma_{\mathcal T}\in E_{\mathcal T},\
          \sigma_{\mathcal T'}\in E_{\mathcal T'}).
\end{aligned}
\]
The joint event requires every constraint in
$\mathcal T\cup\mathcal T'$ to be active.
Independence of the local assignments $\sigma_t$,
together with the deletion bound, therefore gives
\[
\begin{aligned}
&\lambda_{\mathcal T}^2
\mathbb P(\sigma_{\mathcal T}\in E_{\mathcal T},\
          \sigma_{\mathcal T'}\in E_{\mathcal T'})\\
&\qquad\leq
\left(\prod_{b\in N_{\mathcal D}^2[\mathcal T]}j_b\right)
\left(\prod_{t\in\mathcal T\cup\mathcal T'}p_t\right)
\leq
\prod_{t\in\mathcal T\cup\mathcal T'}\rho_t.
\end{aligned}
\]
Applying the pair bound following \cref{lem:connected-sets} and \eqref{eq:pair-sum}, we conclude that
\[
\sum_{e=(\mathcal T,z)}
\frac{\lambda_{\mathcal T}^2q_e^2}{\omega_e}
\leq
\sum_{\mathcal T,\mathcal T'}
\prod_{t\in\mathcal T\cup\mathcal T'}\rho_t
\leq\frac{8c}{3}x_a.
\]
The return edges therefore contribute at most
$16cx_a/Z_-$.

It remains to bound the energy in the smaller copies.
In copy $(e,i)$, capacities are multiplied
by $q_e$ and flows by $\gamma_e$.
Thus energy is multiplied by
\[
\frac{\gamma_e^2}{q_e}
=\lambda_{\mathcal T}^2q_e.
\]
For fixed $\mathcal T$, summing over $z$ and $i$
and applying the induction hypothesis bounds this
contribution by
\[
\begin{aligned}
\eta_{\mathcal T}
\sum_{i=1}^{\ell(\mathcal T)}
\frac{6/(1-c)+mx_{b_{\mathcal T,i}}}
     {Z_{\mathcal D_{\mathcal T,i-1}}}
&\leq
\frac{\eta_{\mathcal T}}{Z_-}
\left(\frac{6\ell(\mathcal T)}{1-c}
      +m\ell_x(\mathcal T)\right).
\end{aligned}
\]
Here we used
$Z_{\mathcal D_{\mathcal T,i-1}}\geq Z_-$ and
$\ell_x(\mathcal T)
=\sum_{i=1}^{\ell(\mathcal T)}x_{b_{\mathcal T,i}}$.

Adding all contributions and using
\eqref{eq:coefficient-sums} and $\ell(\mathcal T)\leq m$,
we obtain
\[
\begin{aligned}
Z_-\sum_{\{X,Y\}}
\frac{\varphi(X,Y)^2}{\operatorname{cap}(X,Y)}
&\leq
\frac6{1-c}
+12\sum_{\mathcal T}\eta_{\mathcal T}
+16cx_a\\
&\qquad+
\sum_{\mathcal T}\eta_{\mathcal T}
\left(\frac{6\ell(\mathcal T)}{1-c}
      +m\ell_x(\mathcal T)\right)\\
&\leq
\frac6{1-c}
+\left[64c+
 \left(\frac{24c}{1-c}+10c\right)m\right]x_a\\
&\leq
\frac6{1-c}+mx_a.
\end{aligned}
\]
The last inequality uses $m\geq1$ and
$c(74+24/(1-c))<1$ for $c\leq1/100$.
This proves the energy bound and completes the induction.
\end{proof}

\subsection{Convergence}
\label{subsec:measure-estimate}

We now complete the proof of \cref{prop:measure} by bounding
$\norm{\mu^-K^N-\mu^+}_{\TV}$.

\begin{proof}[Proof of \cref{prop:measure}]
Since $\pi K^N=\pi$, contraction of total variation and
the triangle inequality give
\[
\begin{aligned}
\norm{\mu^-K^N-\mu^+}_{\TV}
&\leq
\norm{(\mu^- -\mu^+)K^N}_{\TV}
+\norm{\mu^+K^N-\mu^+}_{\TV}\\
&\leq
\norm{(\mu^- -\mu^+)K^N}_{\TV}
+2\norm{\pi-\mu^+}_{\TV}.
\end{aligned}
\]
By \eqref{eq:stationary-close} and
$\beta=\delta/[100(1+2m)]$,
\[
\norm{\pi-\mu^+}_{\TV}
\leq\frac{\delta}{100(1-c)}
\leq\frac{\delta}{99}.
\]
To bound the remaining term, we apply the flow estimate
to test functions.

For a signed measure $\xi$ and a real function $f$ on
$\mathsf S(\mathcal D,a)$, write
$\xi(f)=\sum_X\xi(X)f(X)$.
Define the Dirichlet form
\[
\begin{aligned}
\mathcal{E}_K(f,f)
&=\frac12\sum_{X,Y}
\pi(X)K(X,Y)(f(X)-f(Y))^2\\
&=\frac1M\sum_{\{X,Y\}}
\operatorname{cap}(X,Y)(f(X)-f(Y))^2.
\end{aligned}
\]
Let $\varphi$ be the flow from \cref{lem:flow}.
Since $\partial\varphi=\mu^- -\mu^+$, summing over one
orientation of each edge and applying Cauchy--Schwarz gives
\[
\begin{aligned}
|(\mu^- -\mu^+)(f)|^2
&=\left|
\sum_{\{X,Y\}}
\varphi(X,Y)(f(X)-f(Y))
\right|^2\\
&\leq
\left(\sum_{\{X,Y\}}
\frac{\varphi(X,Y)^2}{\operatorname{cap}(X,Y)}\right)
\left(\sum_{\{X,Y\}}
\operatorname{cap}(X,Y)(f(X)-f(Y))^2\right)\\
&\leq
\frac M{Z_-}
\left(\frac6{1-c}+mx_a\right)\mathcal{E}_K(f,f).
\end{aligned}
\]

Apply this inequality to $f=K^Ng$, where
$\norm g_\infty\leq1$ and
\[
(K^Ng)(X)=\sum_Y K^N(X,Y)g(Y).
\]
The kernel $K$ is lazy and reversible, so its eigenvalues
lie in $[0,1]$. Using the spectral decomposition and
$(1-s)s^{2N}\leq1/(2N+1)$ for $0\leq s\leq1$, we obtain
\[
\mathcal{E}_K(K^Ng,K^Ng)
\leq\frac{\norm g_{L^2(\pi)}^2}{2N+1}
\leq\frac1{2N+1}.
\]
Since
\[
((\mu^- -\mu^+)K^N)(g)
=(\mu^- -\mu^+)(K^Ng),
\]
the preceding bounds give
\[
\begin{aligned}
\norm{(\mu^- -\mu^+)K^N}_{\TV}
&=\frac12\sup_{\norm g_\infty\leq1}
|(\mu^- -\mu^+)(K^Ng)|\\
&\leq\frac12
\sqrt{\frac{M}{Z_-(2N+1)}
\left(\frac6{1-c}+mx_a\right)}.
\end{aligned}
\]

By \eqref{eq:stationary-close} and the choice of $\beta$,
\[
\begin{aligned}
\frac M{Z_-}\left(\frac6{1-c}+mx_a\right)
&\leq
\left(\frac1\beta+1+2m\right)
\left(\frac6{1-c}+m\right)\\
&\leq\frac{1500(m+1)^2}{\delta}.
\end{aligned}
\]
Since $N\geq200000(m+1)^2\delta^{-3}$, it follows that
\[
\norm{(\mu^- -\mu^+)K^N}_{\TV}<\frac{\delta}{20}.
\]
Combining the two total-variation bounds, we conclude that
\[
\norm{\mu^-K^N-\mu^+}_{\TV}
<\frac{\delta}{20}+\frac{2\delta}{99}
<\frac{\delta}{10}.
\]
\end{proof}

\section{The sampling algorithm}\label{sec:implementation}

We sample from $\mu_{\mathcal C}$ by inserting the constraints one
at a time. Write $\mathcal C_i=\{a_1,\ldots,a_i\}$, with
$\mathcal C_0=\emptyset$, for the fixed ordering of the constraints.
The algorithm first finds a satisfying assignment $\sigma_\star$,
which every insertion uses as its fallback output.

\begin{algorithm}[H]
\caption{$\Sample(\mathcal C,\eps)$}\label{alg:sample}
\KwIn{A constraint family $\mathcal C=\{a_1,\ldots,a_m\}$ and
accuracy $0<\eps\leq1$.}
\KwOut{$\sigma\in\Good(\mathcal C)$ with
$\norm{\Law(\sigma)-\mu_{\mathcal C}}_{\TV}\leq\eps$.}
Find $\sigma_\star\in\Good(\mathcal C)$ by the Moser--Tardos
algorithm\;
Independently draw $\sigma\sim\nu$\;
$\delta\gets\eps/(m+1)$\;
\For{$i=1,\ldots,m$}{
  $\sigma\gets\Insert(\mathcal C_i,a_i,\sigma,\delta)$,
  using fallback $\sigma_\star$\;
}
\Return{$\sigma$}\;
\end{algorithm}

The insertion guarantee yields the desired sampling law by the
telescoping argument following \cref{prop:insert}. Conditional on
$\sigma_\star$, the initial assignment still has law
$\nu=\mu_{\mathcal C_0}$, since it is generated independently.
The final error is therefore at most $m\delta\leq\eps$ for every
$\sigma_\star$, and averaging over $\sigma_\star$ preserves this
bound. We now implement the insertion routine and bound its
running time, completing the proof of \cref{prop:insert}.

\subsection{The insertion routine}

Fix $a\in\mathcal D\subseteq\mathcal C$ and $0<\delta\leq1$.
Starting from $\mathsf{I}(\sigma)$, we run the kernel $K$ for $N$ steps
and return the assignment at the final state if it is an output
state. Otherwise, we return $\sigma_\star$. We also return
$\sigma_\star$ if the simulation exceeds its time limit.

The parameters are specified in \cref{alg:insert}, where
$m=|\mathcal C|$. The polynomial $T_{\mathrm{step}}(I)$ bounds the expected
cost of each step when the initial assignment has law
$\mu_{\mathcal D\setminus\{a\}}$, as proved below.

\begin{algorithm}[H]
\caption{$\Insert(\mathcal D,a,\sigma,\delta)$}\label{alg:insert}
\KwIn{$a\in\mathcal D$, an assignment
$\sigma\in\Good(\mathcal D\setminus\{a\})$, accuracy
$0<\delta\leq1$, and the shared fallback
$\sigma_\star\in\Good(\mathcal C)$.}
\KwOut{An assignment satisfying every constraint in $\mathcal D$.}
$\beta\gets\delta/[100(1+2m)]$\;
$N\gets\left\lceil200000(m+1)^2\delta^{-3}\right\rceil$\;
$t_{\max}\gets\left\lceil20NT_{\mathrm{step}}(I)/\delta\right\rceil$\;
$X\gets \mathsf{I}(\sigma)$; start a timer\;
\For{$j=1,\ldots,N$}{
  Draw $X'\sim K(X,\mathord\cdot)$ and set $X\gets X'$,
  interrupting the simulation if the timer reaches $t_{\max}$\;
  \If{the time limit is reached}{
    \Return{$\sigma_\star$}\;
  }
}
\If{$X=\mathsf{O}(\tau)$ is an output state}{
  \Return{$\tau$}\;
}
\Return{$\sigma_\star$}\;
\end{algorithm}

\subsection{Running time and correctness}

We first bound the expected cost of the unrestricted simulation
when the input assignment has law
$\mu_{\mathcal D\setminus\{a\}}$. Write $\mu^-$ for this law
on the input states and $\mu^+$ for $\mu_{\mathcal D}$ on the
output states, and set $Z_- = Z_{\mathcal D\setminus\{a\}}$.
A state consists of its assignment, its type, and the labels
$(e,i)$ of the copies containing it. There are at most
$m$ such labels, since the constraint family shrinks at each level.
The state updates, pairing, constraint tests, and product
resamplings therefore take polynomial time in $I$. The remaining
cost is generating the fresh and conditioned return lists.

The reversible weights provide a bound that holds throughout the
simulation. Since $\mu^-\leq\widetilde w/Z_-$ pointwise and
$\widetilde wK=\widetilde w$,
\[
\mu^-K^t\leq\frac{\widetilde w}{Z_-}
\qquad(t\geq0).
\]
List generation occurs outside the outer output states,
where $\widetilde w=w$.
We can therefore bound its cost by summing over these states
with weights $w(X)/Z_-$.

Consider a copy with parameters $(\mathcal E,b)$. In its list
experiment, each constraint is active when its independent
local sample violates it. Let $A$ be the number of active
connected sets containing $b$ in $G_{\mathcal E}^2$.
The enumeration procedure visits these sets with polynomial
work per set. Generating a fresh list $\mathbf L$ and choosing
a uniform entry therefore takes expected time at most
$I^{O(1)}\mathbb E[1+A]$.

For an entry $e=(\mathcal T,z)$, write
$q_e=\nu_{U_{\mathcal T}}(z)$. By \cref{lem:return-list},
$q_e=\mathbb P(e\in\mathbf L)$, and the pinned experiment
generates the list conditioned on $e\in\mathbf L$.
The same enumeration bound gives expected time at most
$I^{O(1)}\mathbb E[1+A\mid e\in\mathbf L]$.
This conditional expectation need not be uniformly small.
However, the weights of the states where this operation is
performed contain a factor $q_e$, so we need to bound only
the sum weighted by these probabilities.

The estimates
$\mathbb EA\leq\mathbb EA^2\leq2cx_b$
from \cref{lem:tree-bounds}, together with $|\mathbf L|\leq A$,
give
\[
\begin{aligned}
\mathbb E[1+A]&\leq1+2cx_b,\\
\sum_e q_e\mathbb E[1+A\mid e\in\mathbf L]
&=\mathbb E[|\mathbf L|(1+A)]\\
&\leq\mathbb E[A+A^2]\leq4cx_b.
\end{aligned}
\]
If $\kappa$ multiplies this copy's weights in the full chain,
its input states have total weight
$\kappa Z_{\mathcal E\setminus\{b\}}$. The last layer of
corridor $e$, where the conditioned list is generated, has
total weight $\kappa q_eZ_{\mathcal E\setminus\{b\}}$.
Thus the weighted expected cost of both types of list operation
in this copy is at most
$\kappa Z_{\mathcal E\setminus\{b\}}I^{O(1)}$.

A transition may evaluate the chosen rule in several nested
copies. We bound its list-generation cost by summing the
costs over all copies in which a list operation could occur.
By \cref{lem:mass}, the input sets of all copies, including
the full construction itself, are disjoint and have combined
weight at most $(1+2mx_a)Z_-$.
Summing the preceding bound over copies and dividing by $Z_-$
therefore gives a polynomial bound on the expected list cost
of each step.

Including the other operations, choose a fixed polynomial
$T_{\mathrm{step}}(I)$ that bounds the expected cost of a step uniformly in
$t$, $\mathcal D$, $a$, and $\beta$. This polynomial depends
only on the input representation and the given bounds on
oracle costs. The unrestricted $N$-step run has expected cost
at most $NT_{\mathrm{step}}(I)$. The time limit in \cref{alg:insert} is
polynomial in $I$ and $1/\delta$, and Markov's inequality gives
\[
\mathbb P_{\mu^-}(\text{the simulation is interrupted})
\leq\frac{NT_{\mathrm{step}}(I)}{t_{\max}}\leq\frac\delta{20}.
\]

\begin{proof}[Proof of \cref{prop:insert}]
Fix $\sigma_\star$. Let $F$ map $\mathsf{O}(\sigma)$ to $\sigma$ and
every other state to $\sigma_\star$, so that
$\mu^+F=\mu_{\mathcal D}$. For the values of $\beta$ and $N$
used in \cref{alg:insert}, \cref{prop:measure} gives
$\norm{\mu^-K^N-\mu^+}_{\TV}<\delta/10$.
Couple the algorithm with the unrestricted simulation until
the time limit is reached. For the returned assignment $\sigma$,
contraction of total variation gives
\[
\begin{aligned}
\norm{\Law(\sigma)-\mu_{\mathcal D}}_{\TV}
&\leq\norm{\mu^-K^NF-\mu^+F}_{\TV}
  +\mathbb P_{\mu^-}(\text{the simulation is interrupted})\\
&<\frac\delta{10}+\frac\delta{20}<\delta.
\end{aligned}
\]
Every output satisfies $\mathcal D$, and the time limit bounds
the simulation time for every input assignment.
\end{proof}

It remains to account for finding $\sigma_\star$.
The Moser--Tardos algorithm \cite{moserTardos2010}
applies with parameters $cx_b$, since
\[
p_b
\leq cx_b\prod_{d\in N_G^2(b)}(1-x_d)
\leq cx_b\prod_{d\in N_G(b)}(1-cx_d).
\]
Its expected number of resamplings is at most
\[
\sum_{b\in\mathcal C}\frac{cx_b}{1-cx_b}
\leq\frac{cm}{1-c},
\]
so this preliminary run takes polynomial expected time.
\Cref{alg:sample} performs it once and then makes $m$
insertions with $\delta=\eps/(m+1)$. Its expected running
time is therefore polynomial in $I$ and $1/\eps$.
Together with the error bound at the start of this section,
this proves the sampling assertion of \cref{thm:main}.

\bibliographystyle{amsplain0}
\bibliography{main}
\end{document}